\documentclass[11pt]{article}
\usepackage[bottom=1in,left=1in, right=1in, top=1in]{geometry}
\geometry{letterpaper}

\usepackage{microtype}
\usepackage{graphicx}
\usepackage{subfigure}
\usepackage{booktabs}
\usepackage{hyperref}
 

\usepackage{amsmath}
\usepackage{amssymb}
\usepackage{mathtools}
\usepackage{amsthm}
\usepackage{thmtools, thm-restate}
\usepackage{notation}
\usepackage{enumitem}

\newtheorem{claim}{Claim}

\usepackage[capitalize,noabbrev]{cleveref}

\usepackage[textsize=tiny]{todonotes}

\usepackage{multirow}

\usepackage{times}
\usepackage{fancyhdr}
\usepackage{xcolor}
\usepackage{algorithm}
\usepackage{algorithmic}
\usepackage{natbib}
\usepackage{eso-pic}
\usepackage{forloop}
\usepackage{url}
\usepackage{subfigure}

\usepackage{notation}

\renewcommand{\cite}[1]{\citep{#1}}

\usepackage{authblk}

\title{Edge-Colored Clustering in Hypergraphs: Beyond Minimizing Unsatisfied Edges}
\date{}
\author[1]{Alex Crane}
\author[2]{Thomas Stanley}
\author[1]{Blair D. Sullivan}
\author[2]{Nate Veldt}

\affil[1]{University of Utah}
\affil[2]{Texas A\&M University}

\begin{document}

\maketitle

\begin{abstract}
    We consider a framework for clustering edge-colored hypergraphs, where the goal is to cluster (equivalently, to \emph{color}) objects based on the primary {type} of multiway interactions they participate in. One well-studied objective is to color nodes to minimize the number of \textit{unsatisfied} hyperedges -- those containing one or more nodes whose color does not match the hyperedge color. We motivate and present advances for several directions that extend beyond this minimization problem. We first provide new algorithms for maximizing \textit{satisfied} edges, which is the same at optimality but is much more challenging to approximate, with all prior work restricted to graphs. We develop the first approximation algorithm for hypergraphs, and then refine it to improve the best-known approximation factor for graphs. We then introduce new objective functions that incorporate notions of balance and fairness, and provide new hardness results, approximations, and fixed-parameter tractability results.
\end{abstract}

\section{Introduction}
\label{sec:intro}
Edge-colored clustering (ECC) is an optimization framework for clustering datasets characterized by {categorical} relationships among data points. The problem is formally encoded as an edge-colored hypergraph (Figure~\ref{fig:eccbalance}), where each edge represents an interaction between data objects (the nodes) and the color of the edge indicates the \emph{type} or \emph{category} of that interaction. The goal is to assign colors to nodes in such a way that edges of a color tend to include nodes of that color, by minimizing or maximizing some objective function relating edge colors and node colors.
ECC algorithms have been applied to various clustering tasks where cluster labels naturally match with interaction types. For example, if nodes are researchers, edges are author lists for publications, and colors indicate publication field (computer science, biology, etc.), then ECC provides a framework for inferring researchers' fields based on publications. ECC has also been used for temporal hypergraph clustering~\cite{amburg2020clustering}, where edge colors encode time windows in which interactions occur. ECC then clusters nodes into time windows in which they are especially active. Variants of ECC have also been used for team formation~\cite{amburg2022diverse}, in which case nodes are people, edges represent team tasks, and colors indicate task type. In this setting, ECC corresponds to assigning tasks based on prior team experiences.
\begin{figure}[t]
	\centering
	\includegraphics[width = .6\linewidth]{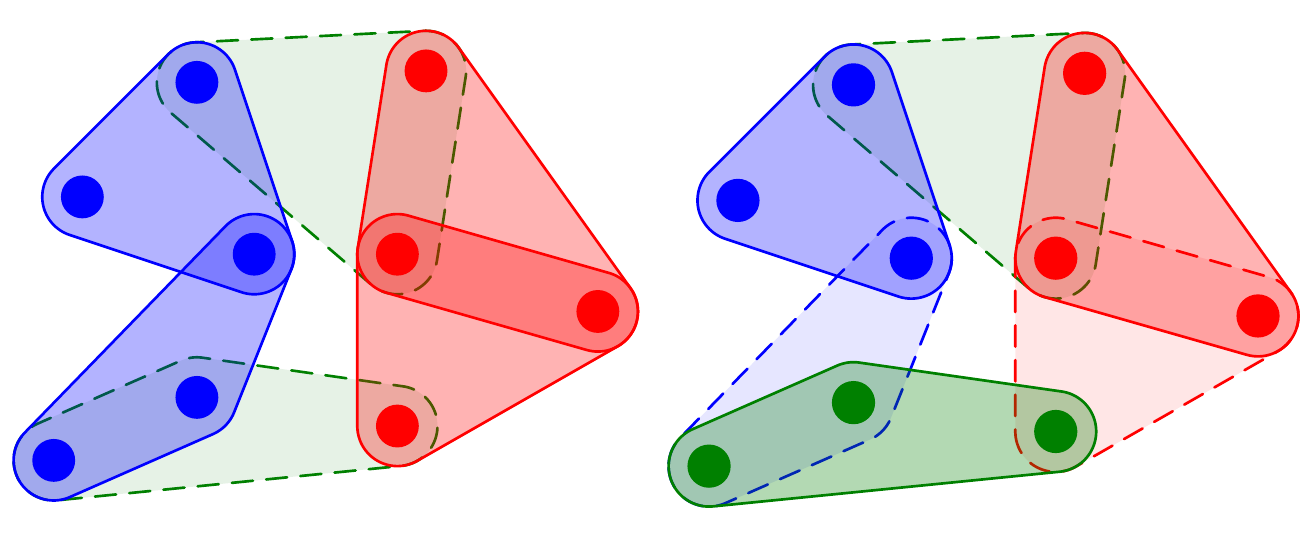}
	\vspace{-5pt}
	\caption{The node coloring on the left satisfies 4 edges (2 blue, 2 red, 0 green). The coloring on the right only satisfies 3, but each color has a satisfied edge.}
	\label{fig:eccbalance}
\end{figure}

\textbf{Related work and research gaps.}
Edge-colored clustering has been well-studied in the machine learning and data mining literature from the perspective of approximation algorithms~\cite{amburg2020clustering,amburg2022diverse,veldt2023optimal,crane2024overlapping,angel2016clustering} and fixed-parameter tractability results~\cite{kellerhals2023parameterized,cai2018alternating,crane2024overlapping}. It is also closely related to chromatic correlation clustering~\cite{bonchi2012chromatic,bonchi2015chromatic,anava2015improved,klodt2021color,xiu2022chromatic}, which is an edge-colored variant of correlation clustering~\cite{bansal2004correlation}.
Several variants of ECC have been encoded using different combinatorial objective functions for assigning colors to nodes. The earliest and arguably the most natural is \maxecc{}~\cite{angel2016clustering}, which seeks to maximize the number of \textit{satisfied} edges---edges in which all the nodes match the color of the edge. More recent attention has been paid to \minecc{}~\cite{amburg2020clustering}, where the goal is to minimize the number of \textit{unsatisfied} edges. The latter is the same as \maxecc{} at optimality but differs in terms of approximations and fixed-parameter tractability results.
The approximability of \minecc{} is currently well-understood: a recent ICML paper designed improved approximation guarantees that are tight with respect to linear programming integrality gaps and nearly tight with respect to approximation hardness bounds~\cite{veldt2023optimal}.
However, existing approximations for \maxecc{} apply only to graph inputs (whereas nearly all algorithms for \minecc{} apply to hypergraphs), and there is a much larger gap between the best approximation guarantees and known lower bounds.

Another limitation of prior work is that nearly all objective functions for ECC (including \minecc{} and \maxecc{}) only consider the total number of edges that are (un)satisfied, even if it means certain colors are disproportionately (un)satisfied. Figure~\ref{fig:eccbalance} provides a small example where the optimal  \minecc{} solution satisfies four edges but only from two colors. Meanwhile, another color assignment satisfies only three edges but includes a satisfied edge of each color. The latter choice is natural for applications where one may wish to incorporate some notion of balance or fairness in edge satisfaction. For example, if coloring nodes means assigning individuals to tasks for future team interactions, then we would like to avoid situations where one type of task is assigned no workers. Balanced and fair objectives have been studied and applied for many other recent clustering frameworks, but have yet to be explored in the context of edge-colored clustering. For prior work on fair clustering, we refer to the seminal work of~\citet{chierichetti2017fair} and a recent survey by~\citet{caton2024fairness}.

Motivated by the above, we present new contributions to ECC along two frontiers: improved  \maxecc{} algorithms, and new frameworks for balanced and fair ECC variants.

\textbf{Our contributions for \maxecc{}}.
We present the first approximation algorithm for hypergraph \maxecc{}, which has an approximation factor of
$(2/e)^r(r+1)^{-1}$
where $r$ is the maximum hyperedge size. The approximation factor goes to zero as $r$ increases, but this is expected since a prior hardness result rules out the possibility of constant-factor approximations that hold for arbitrarily large $r$~\cite{veldt2023optimal}. Our result shows that non-trivial constant-factor approximations can be obtained when $r$ is a constant. We use insights from our hypergraph algorithm to obtain a new best approximation factor of $154/405 \approx 0.38$ for the graph version of the problem, improving on the previous best 0.3622-approximation shown five years ago~\cite{ageev20200}. While the increase in approximation factor appears small at face value, this improves on a long sequence of papers on \maxecc{} for the graph case~\cite{angel2016clustering,ageev2015improved,alhamdan2019approximability,ageev20200}. Obtaining even a minor increase in the approximation factor is highly non-trivial and constitutes the most technical result of our paper.

\textbf{Our contributions for balanced and fair ECC variants.}
Our first contribution towards balanced and fair variants of ECC is to introduce the generalized \pmeanECC{} objective, which uses a parameter $p$ to control balance in unsatisfied edges, and captures \minecc{} as a special case when $p = 1$. For $p = \infty$, the objective corresponds to a new problem we call \cfminECC{}, where the goal is to minimize the maximum number of edges of any color that are unsatisfied. We prove \cfminECC{} is NP-hard even for $k = 2$ colors, even though standard \minecc{} is polynomial-time solvable for $k = 2$. We then show how to obtain a 2-approximation by rounding a linear program (LP), which matches an LP integrality gap lower bound. More generally, we obtain a 2-approximation for \pmeanECC{} for every $p \geq 1$ by rounding a convex relaxation, and give a $2^{1/p}$-approximation for every $p \in (0,1)$. Our results also include several parameterized complexity and hardness results for \cfminECC{} and a related maximization variant where the goal is to maximize the minimum number of edges of any one color that are satisfied.

We also consider scenarios in which the ``importance'' of the edge colors is unequal. In particular, we
introduce \pcECC{}, where we are given a special color $c_1$ which can be thought of as encoding a protected interaction type. The goal is to color nodes to minimize the number of unsatisfied edges subject to a strict upper bound on the number of unsatisfied edges of color $c_1$. We give bicriteria approximation algorithms based on rounding a linear programming relaxation, meaning that our algorithm is allowed to violate the upper bound for the protected color by a bounded amount, in order to find a solution that has a bounded number of unsatisfied edges relative to the optimal solution.
Finally, although the contributions of our paper are primarily theoretical, we also implement our algorithms for \cfminECC{} and \pcECC{} on a suite of benchmark ECC datasets. Our empirical results show that (1) our algorithms outperform their theoretical guarantees in practice, and (2) our balanced and fair objectives still achieve good results with respect to the standard \minecc{} objective while better incorporating notions of balance and fairness.

\section{Improved \maxecc{} Algorithms}
\textbf{Preliminaries.} For a positive integer $n$, let $[n] = \{1,2, \hdots n\}$.
We use bold lowercase letters to denote vectors, and indicate the $i$th of entry of a vector $\textbf{x} \in \mathbb{R}^n$ by $x_i$. For a set $S$ and a positive integer $t$, let $\binom{S}{t}$ denote all subsets of $S$ of size $t$.
An instance of edge-colored clustering is given by a hypergraph $H = (V,E, \ell)$ where $V$ is a node set, $E$ is a set of hyperedges (usually just called \emph{edges}), and $\ell \colon E \rightarrow [k]$ is a mapping from edges to a color set $[k] = \{1,2, \hdots, k\}$.  For $e \in E$, we let $\omega_e \geq 0 $ denote a nonnegative weight associated with $e$, which equals 1 for all edges in the unweighted version of the problem. For a color $c \in [k]$, let $E_c \subseteq E$ denote the edges of color $c$. We use $r$ to denote the \emph{rank} of $H$, i.e., the maximum hyperedge size.

The goal of ECC is to construct a map $\lambda \colon V \rightarrow [k]$ that associates each node with a color, in order to optimize some function on edge \emph{satisfaction}. Edge $e \in E$ is \emph{satisfied} if $\ell(e) = \lambda(v)$ for each $v \in e$, and is otherwise \emph{unsatisfied}. The two most common objectives are maximizing the number of satisfied edges (\maxecc{}) or minimizing the number of unsatisfied edges (\minecc{}).

\textbf{Challenges in approximating \maxecc{}.} Although \minecc{} and \maxecc{} are equivalent at optimality, the latter is far more challenging to approximate. Due to an approximation-preserving reduction from \textsc{Independent Set}, it is NP-hard
to approximate \maxecc{} in hypergraphs of unbounded rank $r$ to within a factor $|E|^{1-\varepsilon}$~\cite{veldt2023optimal,zuckerman2006linear}. There also are simple instances (e.g., a triangle with 3 colors) where a simple $2$-approximation of~\citet{amburg2020clustering} for \minecc{} (round variables of an LP relaxation to 0 if they are strictly below $1/2$, otherwise round to 1) fails to satisfy \emph{any} edges.
These challenges do not rule out the possibility of approximating \maxecc{} when $r$ is constant. Indeed, there are many approximations for graph \maxecc{} ($r=2$), but these require lengthy proofs, rely fundamentally on the assumption that the input is a graph, and do not easily extend even to the $r = 3$ case. Here we provide a generalized approach that gives the first approximation guarantees for hypergraph \maxecc{}, when $r$ is constant. Our approach also provides a simplified way to approximate graph \maxecc{}; we design and analyze a refined algorithm that achieves a new best approximation factor for graph \maxecc{}, improving on a long line of previous algorithms~\cite{angel2016clustering,alhamdan2019approximability,ageev2015improved,ageev20200}.

\subsection{Technical preliminaries for LP rounding algorithms}
\maxecc{} can be cast as a binary linear program (BLP):
\begin{align}
	\label{eq:maxecc}
	\begin{aligned}
		\text{max} \quad  & \textstyle \sum_{e \in E} \omega_e z_e \\
		\text{s.t.} \quad & \forall v \in V:                       \\
		                  & \forall c \in [k], e \in E_c:          \\
		                  & x_v^c, z_e \in \{0, 1\}
	\end{aligned}
	\begin{aligned}
		      &                                      \\
		      & \textstyle \sum_{c=1}^k x_v^c = 1    \\
		\quad & x_v^c \geq z_e \quad \forall v \in e \\
		      & \forall c \in [k], v \in V, e \in E.
	\end{aligned}
\end{align}
Setting $x_{u}^c$ to 1 indicates that node $u$ is given color $c$ (i.e., $\lambda(u) = c$). We use $\textbf{x}_u = \begin{bmatrix} x_u^1 & x_u^2 & \cdots & x_u^k \end{bmatrix}$ to denote the vector of variables for node $u$. For edge $e \in E$, the constraints are designed in such a way that $z_e = 1$ if and only if $e$ is satisfied. A binary LP for \minecc{} can be obtained by changing the objective function to $\min \sum_{e \in E} \omega_e (1-z_e)$.

The LP relaxation for \maxecc{} can be obtained by relaxing the binary constraints in Binary Linear  Program~\eqref{eq:maxecc} to linear constraints $0 \leq x_v^c \leq 1$ and $0 \leq z_e \leq 1$. Solving this LP gives a fractional node-color assignment $x_v^c \in [0,1]$. The closer $x_v^c$ is to 1, the stronger this indicates node $v$ should be given color $c$. Our task is to round variables to assign one color to each node in a way that satisfies provable approximation guarantees.

Our algorithms (Algorithms~\ref{alg:hyper_maxecc} and~\ref{alg:graph_max_ecc}) are randomized. Both use the same random process to identify colors that a node ``wants''. For each $c \in [k]$ we {independently} generate a uniform random \emph{color threshold} $\alpha_c \in [0,1]$. If $x_u^c > \alpha_c$, we say that \emph{node $u$ wants color $c$}, or equivalently that color $c$ wants node $u$. Because a node may want more than one color, we use a random process to choose one color to assign, informed by the colors the node wants. Our approximation proofs rely on (often subtle) arguments about events that are independent from each other. We begin by presenting several useful observations that will aid in proving our results.

Our first observation is that if we can bound the expected cost of every edge in terms of the LP upper bound, it provides an overall expected approximation guarantee.
\begin{observation}
	\label{obs:prob}
	Let $\mathcal{A}$ be a randomized ECC algorithm and $p \in [0,1]$ be a fixed constant. If for each $e \in E$ we have $\prob[\text{$e$ is satisfied by  $\mathcal{A}$}] \geq p z_e$, then $\mathcal{A}$ is a $p$-approximation.
\end{observation}
Let $X_u^c$ denote the event that $u$ wants $c$, and $Z_e$ be the event that every node in edge $e \in E$ wants color $c = \ell(e)$.
\begin{observation}
	\label{obs:xuc}
	For each node $v \in V$, the events $\{X_v^c\}_{c \in [k]}$ are independent, and $\prob[X_v^c] = \prob[\alpha_c < x_v^c] = x_v^c \leq 1$.
\end{observation}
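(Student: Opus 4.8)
The plan is to unwind the definitions of the random process and invoke elementary properties of independent uniform random variables. First I would recall that, because $\mathbf{x}$ is feasible for the LP relaxation, each coordinate satisfies $x_v^c \in [0,1]$; in particular $x_v^c \le 1$, which gives the final inequality for free. Next, fix a node $v$ and a color $c$. By definition, $X_v^c$ is exactly the event $\{\alpha_c < x_v^c\}$, where $\alpha_c$ is uniform on $[0,1]$ and $x_v^c$ is a fixed scalar (the LP value, not a random quantity). Since the CDF of a $\mathrm{Unif}[0,1]$ variable is the identity on $[0,1]$, we get $\prob[X_v^c] = \prob[\alpha_c < x_v^c] = x_v^c$.

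For the independence claim, the key point is that, with $v$ held fixed, the indicator of $X_v^c$ is a deterministic function of the single threshold $\alpha_c$ — namely $\mathbf{1}[\alpha_c < x_v^c]$ — and does not depend on any other $\alpha_{c'}$ with $c' \ne c$. The thresholds $\{\alpha_c\}_{c \in [k]}$ are generated mutually independently by construction, and measurable functions of disjoint sub-families of an independent collection of random variables are themselves independent; here each event depends on a single, distinct coordinate of that collection. Hence $\{X_v^c\}_{c \in [k]}$ are mutually independent.

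I do not anticipate a genuine obstacle: the statement is essentially a bookkeeping consequence of how the color thresholds are sampled. The only points to be careful about are (i) using LP feasibility to ensure $x_v^c \in [0,1]$ so that the uniform-CDF computation is exact (otherwise one would only get $\min\{x_v^c,1\}$), and (ii) stressing that the asserted independence is among the $k$ events for one fixed node $v$, and must not be conflated with independence across distinct nodes — indeed $X_u^c$ and $X_v^c$ share the common threshold $\alpha_c$ and are in general not independent. This observation will then be combined with Observation~\ref{obs:prob} in the subsequent analysis, where the per-edge satisfaction probability is bounded via the independence of the events $X_v^c$ across colors.
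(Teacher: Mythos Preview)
Your proposal is correct and matches the paper's treatment: the paper states Observation~\ref{obs:xuc} without proof, treating it as immediate from the construction (independent uniform thresholds $\{\alpha_c\}$ and fixed LP values $x_v^c \in [0,1]$), and your write-up simply makes that reasoning explicit. Your cautionary remark that independence is asserted only across colors for a fixed node, not across nodes for a fixed color, is apt and aligns with how the paper later exploits the observation.
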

\begin{observation}
	\label{obs:edgewants}
	$\prob[Z_e] = \prob[\cap_{v \in e} X_v^c] = \min_{v \in e} x_v^c = z_e$.
\end{observation}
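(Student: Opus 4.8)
The plan is to unwind the definitions and collapse the conjunction to a single threshold event. Fix $e \in E$ and set $c = \ell(e)$. By construction of the random process, node $v$ wants color $c$ precisely when $x_v^c > \alpha_c$, so $X_v^c$ is the event $\{\alpha_c < x_v^c\}$ and $Z_e = \bigcap_{v \in e} X_v^c$ is the event that $\alpha_c < x_v^c$ holds simultaneously for all $v \in e$. The key point is that every one of these events references the \emph{same} random variable $\alpha_c$ (the threshold for the single color $c = \ell(e)$), so their intersection is exactly the event $\{\alpha_c < \min_{v\in e} x_v^c\}$.

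Next I would invoke the distribution of $\alpha_c$. Since $\alpha_c$ is uniform on $[0,1]$ and, by LP feasibility, $\min_{v\in e} x_v^c \in [0,1]$, we get $\prob[Z_e] = \prob[\alpha_c < \min_{v\in e} x_v^c] = \min_{v\in e} x_v^c$. Finally I would identify $\min_{v\in e} x_v^c$ with $z_e$: feasibility of the LP gives $z_e \le x_v^c$ for every $v\in e$, hence $z_e \le \min_{v\in e} x_v^c$, while for the reverse inequality one observes that the objective coefficient $\omega_e \ge 0$ makes the objective nondecreasing in $z_e$, so at an optimal LP solution we may assume $z_e = \min_{v\in e} x_v^c$ (edges with $\omega_e = 0$ are irrelevant, and for the rest raising $z_e$ to $\min_{v\in e} x_v^c$ only helps). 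Chaining the three equalities gives $\prob[Z_e] = \min_{v\in e} x_v^c = z_e$.

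There is no substantial obstacle here; the one subtlety worth flagging explicitly is that the events $\{X_v^c\}_{v \in e}$ are \emph{not} independent—in contrast to the events across different colors in Observation~\ref{obs:xuc}—since they all depend on the single threshold $\alpha_c$. Rather than multiplying probabilities, the argument uses that the intersection degenerates to the event that $\alpha_c$ falls below the smallest coordinate $x_v^c$ over $v\in e$, and the rest follows from the uniform law of $\alpha_c$ together with LP optimality.
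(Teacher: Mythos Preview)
Your proof is correct and matches the implicit reasoning behind the paper's observation, which is stated without proof. The paper treats this as a one-line fact, and your unpacking---collapsing the conjunction $\bigcap_{v\in e}\{\alpha_c < x_v^c\}$ to the single threshold event $\{\alpha_c < \min_{v\in e} x_v^c\}$ and then invoking LP optimality to identify $\min_{v\in e} x_v^c$ with $z_e$---is exactly the intended argument; your remark that the events $\{X_v^c\}_{v\in e}$ are dependent (sharing the common threshold $\alpha_c$) and that this dependence is precisely what makes the intersection collapse is a helpful clarification.
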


For an edge $e$ and color $i \neq \ell(e)$, we frequently wish to quantify the possibility that some node in $e$ wants color $i$, as this opens up the possibility that $e$ will be unsatisfied because some node $v \in e$ is given color $i$. Towards this goal, for each $e \in E$ and color $i \in [k]$ we identify one node in $e$ that has the highest likelihood of wanting $i$. Formally, we identify some \emph{representative} node $v \in e$ satisfying $x_v^i \geq x_u^i$ for every $u \in e$ (breaking ties arbitrarily if multiple nodes satisfy this), and we define $\sigma_e(i) = v$.
The definition of $\sigma_e(i)$ implies the following useful observation:
\begin{observation}
	\label{obs:cv}
	Color $i$ does not want \textbf{any} nodes in $e$ $\iff$ color $i$ does not want $v = \sigma_e(i)$.
\end{observation}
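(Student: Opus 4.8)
The plan is to obtain both implications straight from the definition of the representative node $v = \sigma_e(i)$, which is chosen so that $x_v^i \geq x_u^i$ for every $u \in e$. I would first recall that, by definition, color $i$ \emph{wants} a node $u$ exactly when $x_u^i > \alpha_i$, so the phrase ``color $i$ does not want $u$'' unpacks to the inequality $x_u^i \leq \alpha_i$.

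For the forward direction ($\Rightarrow$) I would simply observe that $v = \sigma_e(i)$ lies in $e$, so if color $i$ wants no node of $e$ it in particular does not want $v$. For the reverse direction ($\Leftarrow$) I would assume $x_v^i \leq \alpha_i$ and then, for an arbitrary $u \in e$, chain $x_u^i \leq x_v^i \leq \alpha_i$, using the maximality of $x_v^i$ among the nodes of $e$; since $u$ was arbitrary, color $i$ wants no node of $e$.

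There is no real obstacle: the observation is essentially a restatement of the defining property of $\sigma_e(i)$. The only point deserving a moment's attention is the strict-versus-nonstrict inequality in the definition of ``wants'', but because $\sigma_e(i)$ maximizes $x_\cdot^i$ over $e$, the inequality chain above is valid regardless, so the argument goes through unchanged.
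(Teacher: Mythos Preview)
Your proposal is correct and matches the paper's treatment: the paper states this observation as an immediate consequence of the definition of $\sigma_e(i)$ and gives no further proof. Your unpacking of both directions via the maximality of $x_v^i$ is exactly the intended reasoning.
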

Variations of Observations~\ref{obs:prob}-\ref{obs:edgewants} have often been used to prove guarantees for previous randomized ECC algorithms. However, Observation~\ref{obs:cv} is new and is key to our analysis.

\subsection{Hypergraph MaxECC Algorithm}
In addition to color thresholds $\{\alpha_c\}$, our hypergraph \maxecc{} algorithm (Algorithm~\ref{alg:hyper_maxecc}) generates a uniform random permutation  $\pi$ to define priorities for colors. A node is then assigned to the highest priority color it wants.
\begin{algorithm}[t]
	\caption{Approximation alg. for hypergraph \maxecc{}}
	\label{alg:hyper_maxecc}
	\begin{algorithmic}
		\STATE Obtain optimal variables $\{z_e; x_v^c\}$ for the LP relaxation
		\STATE $\pi \leftarrow \text{uniform random ordering of colors } [k]$
		\STATE For $c \in [k]$, $\alpha_c \leftarrow $ uniform random threshold in $[0,1]$
		\FOR{$v \in V$}
		\STATE $\mathcal{W} = \{c \in [k] \colon \alpha_c < x_v^c\}$
		\IF{$|\mathcal{W}| > 0$}
		\STATE $\lambda(v) \leftarrow \argmax_{c \in \mathcal{W}} \pi(c)$
		\ELSE
		\STATE $\lambda(v) \leftarrow $ arbitrary color
		\ENDIF
		\ENDFOR
	\end{algorithmic}
\end{algorithm}

\begin{theorem}
	\label{thm:hypermaxecc}
	Algorithm \ref{alg:hyper_maxecc} is a $\left(\frac{1}{r+1} \left(\frac{2}{e}\right)^r\right)$-approximation algorithm for \maxecc{} in hypergraphs with rank $r$.
\end{theorem}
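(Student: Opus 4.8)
The plan is to invoke Observation~\ref{obs:prob}, so it suffices to show that for every edge $e \in E$, writing $c = \ell(e)$, Algorithm~\ref{alg:hyper_maxecc} satisfies $e$ with probability at least $\tfrac{1}{r+1}\left(\tfrac{2}{e}\right)^r z_e$. Fix such an edge. I would first isolate a \emph{sufficient} event for $e$ being satisfied: (i) the event $Z_e$ that every node of $e$ wants $c$, together with (ii) the event that no color $i\neq c$ with $\pi(i) > \pi(c)$ is wanted by any node of $e$. Indeed, under (i) each $v\in e$ has $c\in\mathcal{W}_v$, so $\mathcal{W}_v\neq\emptyset$ and $\lambda(v) = \argmax_{c'\in\mathcal{W}_v}\pi(c')$; under (ii) every color in $\mathcal{W}_v\setminus\{c\}$ is ranked below $c$ by $\pi$, so this argmax is $c$. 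Hence (i) and (ii) together force $\lambda(v)=c$ for all $v\in e$, i.e.\ $e$ is satisfied. Note $Z_e$ depends only on $\alpha_c$, and $\prob[Z_e] = z_e$ by Observation~\ref{obs:edgewants}.

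Next I would condition on the permutation $\pi$ and set $S = \{i\neq c : \pi(i) > \pi(c)\}$. This is exactly where Observation~\ref{obs:cv} enters: for a fixed color $i$, ``some node of $e$ wants $i$'' coincides with ``$\sigma_e(i)$ wants $i$'', an event of probability $m_i := x_{\sigma_e(i)}^i = \max_{v\in e}x_v^i$ that depends only on $\alpha_i$. Since $\{\alpha_c\}\cup\{\alpha_i\}_{i\in S}$ are mutually independent (Observation~\ref{obs:xuc}) and independent of $\pi$, conditioning on $\pi$ gives
\[
  \prob[\text{$e$ satisfied}\mid\pi] \;\ge\; \prob[Z_e]\prod_{i\in S}\prob[\alpha_i\ge m_i] \;=\; z_e\prod_{i\in S}(1-m_i).
\]
Averaging over $\pi$: since $\pi$ is uniform, the rank of $c$ is uniform on $[k]$ and, given that rank, $S$ is a uniformly random subset of the remaining colors. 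A short computation (most cleanly via the coupling that replaces $\pi$ by i.i.d.\ uniform priority scores, so that each $i\neq c$ lies in $S$ independently with the appropriate probability) shows that the $\pi$-expectation of $\prod_{i\in S}(1-m_i)$ equals $\int_0^1 \prod_{i\neq c}(1 - s\,m_i)\,ds$, hence
\[
  \prob[\text{$e$ satisfied}] \;\ge\; z_e\int_0^1 \prod_{i\neq c}(1-s\,m_i)\,ds .
\]

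The crux is to lower-bound this integral using only LP feasibility, since the product ranges over possibly many colors with no a priori bound on their number. The idea is to localize the product onto the at most $r$ nodes of $e$: partition $[k]\setminus\{c\}$ by representative, $R_v = \{i\neq c : \sigma_e(i) = v\}$, so that $m_i = x_v^i$ for $i\in R_v$ and $\{R_v\}_{v\in e}$ is a partition. For each fixed $v\in e$, the simplex constraint $\sum_{c'}x_v^{c'}=1$ together with $x_v^c\ge z_e\ge 0$ gives $\sum_{i\in R_v}x_v^i \le 1-x_v^c \le 1$, so by the Weierstrass product inequality $\prod_{i\in R_v}(1-s\,x_v^i)\ge 1-s\sum_{i\in R_v}x_v^i \ge 1-s$. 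Multiplying over the $|e|\le r$ classes yields $\prod_{i\neq c}(1-s\,m_i)\ge(1-s)^{|e|}\ge(1-s)^r$ for $s\in[0,1]$, and $\int_0^1(1-s)^r\,ds = \tfrac{1}{r+1}$; thus $\prob[\text{$e$ satisfied}]\ge\tfrac{z_e}{r+1}\ge\tfrac{1}{r+1}\left(\tfrac{2}{e}\right)^r z_e$ since $(2/e)^r\le 1$, which by Observation~\ref{obs:prob} proves the theorem. I expect the bookkeeping obstacles to be minor but worth stating explicitly: checking that $\pi$ and the thresholds are genuinely independent so the conditioning is valid, that the ``arbitrary color'' branch is never reached for a node of $e$ once $Z_e$ holds, that $\{R_v\}$ really partitions $[k]\setminus\{c\}$, and the two elementary identities used (averaging over $\pi$, and $\int_0^1(1-s)^r\,ds$). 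If one prefers to stay closer to the stated factor, one can replace the sharp $(1-s)^r$ estimate above by a coarser node-by-node bound; either route yields a guarantee of the form $\tfrac{c^r}{r+1}$ with $c\in(0,1]$, and $c = 2/e$ comfortably suffices.
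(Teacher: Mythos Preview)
Your proposal is correct and takes a genuinely different---and sharper---route than the paper.

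The paper partitions $[k]\setminus\{c\}$ by representative exactly as you do (their $C_v$ is your $R_v$), but then proceeds differently: it invokes Lemma~\ref{lem:angel} (Angel et al.) to show that for each $v\in e$ the event ``$v$ wants at most one color from $C_v$'' has probability at least $2/e$. Intersecting these $|e|\le r$ independent events with $Z_e$ yields probability at least $z_e(2/e)^r$; conditioned on this joint event, at most $r$ rival colors are wanted by any node of $e$, so $c$ wins the priority race with probability at least $1/(r+1)$.

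Your argument bypasses Lemma~\ref{lem:angel} entirely. By conditioning on the priority of $c$ via the i.i.d.-uniforms coupling, you obtain the exact identity $\mathbb{E}_\pi\bigl[\prod_{i\in S}(1-m_i)\bigr]=\int_0^1\prod_{i\neq c}(1-sm_i)\,ds$, and then the node-wise Weierstrass bound $\prod_{i\in R_v}(1-sx_v^i)\ge 1-s$ gives $\prod_{i\neq c}(1-sm_i)\ge(1-s)^r$. Integrating yields $\prob[\text{$e$ satisfied}]\ge z_e/(r+1)$, which is \emph{strictly stronger} than the theorem as stated---you then discard a factor of $(2/e)^r$ just to match. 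So your approach is both more elementary (no external lemma) and tighter: it actually shows Algorithm~\ref{alg:hyper_maxecc} is a $\tfrac{1}{r+1}$-approximation. The only step you leave as a sketch is the $\pi$-averaging identity, but the coupling argument you indicate is standard and easily made rigorous (condition on $U_c=u$, use independence of the remaining $U_i$, substitute $s=1-u$).
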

\begin{proof}
	Fix an arbitrary edge $e \in E$ and let $c = \ell(e)$. Let $T_e$ denote the event that $e$ is satisfied.
	By Observation~\ref{obs:prob}, it suffices to show $\prob[T_e] \geq \frac{z_e}{r+1} \left(\frac2e\right)^r$.

	Let $C = [k]\setminus \{c\}$.
	To partition the color set $C$, for each $v \in e$ we define
	$C_v = \{i \in C \colon \sigma_e(i) = v\}$.
	Recall that $\sigma_e(i)$ identifies a node $v \in e$ satisfying $x_v^i = \max_{u \in e} x_u^i$.
	Let $A_v$ denote the event that at most one color in $C_v$ wants one or more nodes in $e$. From Observation~\ref{obs:cv}, $A_v$ is equivalent to the event that  $v$ wants at most 1 color in $C_v$. Thus, the probability of $A_v$ is the probability that at most one of the events $\{X_v^i \colon i \in C_v \}$ happens. Observation~\ref{obs:xuc} gives $\sum_{i = 1}^k \prob[X_v^i] = \sum_{i = 1}^k x_v^i = 1$, allowing us to repurpose a supporting lemma of~\citet{angel2016clustering} on graph \maxecc{} to see that $\prob[A_v] \geq 2/e$ (Lemma~\ref{lem:angel} in Appendix~\ref{app:maxecc}).

	%
	Because color thresholds $\{\alpha_i\}$ are drawn independently for each color, and because color sets $\{C_v \colon v \in e\}$ are disjoint from each other and from $c$, the events $\{A_v, X_v^c \colon v \in e\}$ are mutually independent.
	Thus, using Observation~\ref{obs:edgewants} gives
	\begin{align*}
		\prob\left[ \left(\bigcap_{v \in e} A_v \right) \cap Z_e\right ] & = \prob[Z_e]\cdot \prod_{v \in e} \prob[A_v] \geq z_e \cdot \left(\frac2e \right)^r.
	\end{align*}
	If the joint event $J = (\cap_{v \in e} A_v) \cap Z_e$ holds, this means every node in $e$ wants color $c$, and at most $r$ distinct other colors (one for each node in $v \in e$ since there is one set $C_v$ for each $v \in e$) want one or more nodes in $e$. Conditioned on $J$, $e$ is satisfied if the color $c$ has a higher priority (determined by $\pi$) than the other $r$ colors, which happens with probability $1/(r+1)$. Thus,
	\begin{align*}
		{	\prob[T_e] \geq \prob\left[T_e \mid J\right] \prob\left[ J \right] \geq \frac{z_e}{r+1}\left(\frac2e\right)^r.}
	\end{align*}
\end{proof}

\textbf{Comparison with prior graph \maxecc{} algorithms.}
Algorithm~\ref{alg:hyper_maxecc} achieves a ${4}/({3e^2}) \approx 0.18$ approximation factor for graph \maxecc{} ($r = 2$). This improves upon the first ever approximation factor of $1/e^2 \approx 0.135$ for graph \maxecc{}~\cite{angel2016clustering}, and comes with a significantly simplified proof. There are two interrelated factors driving this simplified and improved guarantee.
The first is our use of Observation~\ref{obs:cv}.
\citet{angel2016clustering} bound the probability of satisfying an edge $(u,v) \in E_c$ using a delicate argument about certain dependent events we can denote by $B_v$ (the event that $v$ wants at most 1 color from $C = [k]\setminus \{c\}$) and $B_u$ (defined analogously for $u$).
The algorithm of \citet{angel2016clustering} requires proving that $\prob[B_{u} \cap B_v] \geq \prob[B_{u}] \prob[B_v]$, even though $B_u$ and $B_v$ are dependent. The proof of this result (Proposition 1 in~\citet{angel2016clustering}) is interesting but also lengthy, and does not work for $r > 2$. Subsequent approximation algorithms for graph \maxecc{} in turn rely on complicated generalizations of this proposition~\cite{ageev2015improved,ageev20200,alhamdan2019approximability}. In contrast, Theorem~\ref{thm:hypermaxecc} deals with mutually independent events $\{A_v \colon v \in e\}$. Observation~\ref{obs:cv} provides the key insight as to why it suffices to consider these events when bounding probabilities, leading to a far simpler analysis that extends easily to hypergraphs.

Our second key factor is the use of a global color ordering $\pi$. \citet{angel2016clustering} and other results for graph \maxecc{} apply a two-stage approach where Stage 1 identifies which nodes ``want'' which colors, and Stage 2 assigns nodes to colors \emph{independently} for each node. To illustrate the difference, consider the probability of satisfying an edge $(u,v) \in E_c$ if we condition on $u$ and $v$ both wanting $c$ and each wanting at most one other color. The algorithm of~\citet{angel2016clustering} has a $1/4$ chance of satisfying the edge (each node gets color $c$ with probability $1/2$), whereas Algorithm~\ref{alg:hyper_maxecc} has a $1/3$ chance (the probability that $c$ is given higher global priority than the other two colors). This is precisely why Algorithm~\ref{alg:hyper_maxecc}'s approximation guarantee is a factor $4/3$ larger than the guarantee of~\citet{angel2016clustering}.

\begin{algorithm}[t]
	\caption{$0.38$-approximation alg. for graph \maxecc{}}
	\label{alg:graph_max_ecc}
	\begin{algorithmic}
		\STATE Obtain optimal variables $\{z_e; x_v^c\}$ for the LP relaxation
		\STATE $\pi \leftarrow \text{uniform random ordering of colors } [k]$
		\STATE For $c \in [k]$, $\alpha_c \leftarrow $ uniform random threshold in $[0,1]$
		\FOR{$v \in V$}
		\STATE $S_v \leftarrow \left\{c \in [k] \mid x_v^c \geq 2/3 \right\}$; $W_v \leftarrow [k] \setminus S_v$
		\STATE $W'_v = \left\{ c \in W_v \mid \alpha_c < x_v^c \right\}$
		\IF{$|W'_v| > 0$}
		\STATE $\lambda(v) \leftarrow \argmax_{i \in W'_v} \pi(i)$
		\ELSIF{$\exists c \text{ s.t. } S_v = \left\{ c \right\}$}
		\STATE $\lambda(v) \leftarrow c$
		\ELSE
		\STATE $\lambda(v) \leftarrow \text{ arbitrary color}$
		\ENDIF
		\ENDFOR
	\end{algorithmic}
\end{algorithm}

\subsection{Graph MaxECC Algorithm}
Although Algorithm~\ref{alg:hyper_maxecc} does not improve on the $0.3622$-approximation of~\citet{ageev20200} for graph \maxecc{}, we can incorporate its distinguishing features (the color ordering $\pi$ and Observation~\ref{obs:cv}) into a refined algorithm with a $154/405 \approx 0.38$ approximation factor.

Our refined algorithm (Algorithm \ref{alg:graph_max_ecc}) for graphs solves the LP relaxation, generates color thresholds $\{\alpha_i \colon i \in [k]\}$, and generates a color ordering $\pi$ in the same way as Algorithm~\ref{alg:hyper_maxecc}. It differs in that it partitions colors for each $v$ into colors that are \emph{strong} or \emph{weak} for $v$, given respectively by the sets
\begin{align*}
	S_v = \{ i \in [k] \colon x_v^i \geq 2/3\} \text{ and } W_v & = [k] - S_v.
\end{align*}
Since $\sum_{i = 1}^k x_v^i = 1$, we know $|S_v| \in \{0,1\}$. We say color $i$ is \textit{strong for $v$} if $S_v = \{i\}$, otherwise $i$ is \textit{weak for $v$}. Note that a color being \emph{strong} or \emph{weak} for $v$ is based on a fixed and non-random LP variable. This is separate from the notion of a color \emph{wanting} $v$, which is a random event.

Algorithm~\ref{alg:graph_max_ecc} first checks if $v$ wants any {weak} colors. If so, it assigns $v$ the weak color of highest priority (using $\pi$) that $v$ wants. If $v$ wants no weak colors but has a strong color, then $v$ is assigned the strong color. Prioritizing weak colors in this way appears counterintuitive, since if $v$ has a strong color $c$ it suggests that $v$ should get color $c$. However, note that $x_v^c \geq 2/3$ still implies $v$ will get color $c$ with high probability, since a large value for $x_v^c$ makes it less likely $v$ will want any weak colors. Meanwhile, prioritizing weak colors enables us to lower bound the probability that an edge $e = (u,v)$ is satisfied even if $\ell(e)$ is weak for $u$ or $v$.

In order to improve on the extensively studied problem of \maxecc{} in graphs, our analysis is much more involved than the proof of Theorem~\ref{thm:hypermaxecc} and requires proving several detailed technical lemmas that may be of interest in their own right. In order to present the lemmas used, we use the following definition.

\begin{definition}
	\label{def:prob}
	Let $m$ be a nonnegative integer and $t$ be an integer. Given a vector $\vx \in \mathbb{R}^m$, if $m \geq t > 0$, we define the function $P(\vx, t)$ as follows:
	\[
		P(\vx, t) = \sum_{I \in {[m] \choose t}} \left( \prod_{i \in I} x_i \prod_{j\in [m] \setminus I} (1 - x_j) \right),
	\]
	and for other choices of $t$ and $m$ we define
	\begin{equation*}
		P(\vx, t) = \begin{cases}
			0                         & \text{ if $m < t$ or $t < 0$} \\
			1                         & \text{ if $0 = m = t$}        \\
			\prod_{i = 1}^m (1 - x_i) & \text{ if $0 = t < m$.}
		\end{cases}
	\end{equation*}
\end{definition}
To provide intuition, $P(\vx,t)$ is defined to encode the probability that $t$ events from a set of $m$ independent events happen. In more detail, consider a set of $m$ mutually independent events $\mathcal{X} = \{X_1, X_2, \cdots, X_m\}$, where the probability that the $i$th event happens is $x_i = \prob[X_i]$, the $i$th entry of $\vx$. The function $P$ in Definition~\ref{def:prob} is the probability that exactly $t$ of the events in $\mathcal{X}$ happen. In our proofs for \maxecc{}, we will apply this with $\mathcal{X}$ representing a subset of colors in the ECC instance, while the vector $\vx$ encodes LP variables $\{x_u^i\}$ for some node $u$ and that set of colors (which by Observation~\ref{obs:xuc} are probabilities for wanting those colors). Lemmas~\ref{lem:bounding} and \ref{lem:bounding_constraints} will aid in bounding the probability that a node is assigned a certain color, conditioned on how many other colors it wants.

The technical lemmas used in the proof of Theorem~\ref{thm:graphecc} are given below. The proofs for these theorems can be found in Appendix~\ref{app:maxecc}.
\begin{restatable}{lemma}{lemdependentynx}
	\label{lem:dependent_y_n_x}
	Consider an edge $e = (u,v)$ of color $c$ and let $c \in W_u$. When running Algorithm~\ref{alg:graph_max_ecc}, let $X_u^c$ be the event that $u$ wants $c$, $Y_u^c$ be the event that $u$ is assigned $c$ by the algorithm, and let $N_v$ be the event that $v$ wants no colors in $[k]\setminus\{c\}$.
	Then the following inequality holds:
	\[
		\prob\left[Y_u^c \mid N_v \cap X_u^c\right] \geq \prob\left[Y_u^c \mid X_u^c\right].
	\]
\end{restatable}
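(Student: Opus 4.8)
The plan is to condition on the global color ordering $\pi$ and on the thresholds $\alpha_i$ for colors other than $c$, reducing the claim to a statement about the single remaining random variable $\alpha_c$ together with the fact that conditioning on $N_v$ only changes the distribution of events concerning $u$ through... actually nothing: $N_v$ is an event about $v$'s thresholds and $v$'s LP variables, while $X_u^c$, $Y_u^c$ depend on $u$'s variables and the thresholds $\{\alpha_i\}$, with $u$ and $v$ sharing the threshold randomness. So the first step is to pin down exactly which random quantities are shared. The events $X_u^c$ and $N_v$ both involve threshold $\alpha_c$: $X_u^c = \{\alpha_c < x_u^c\}$, and $N_v$ forbids (among other things) $v$ wanting $c$, i.e.\ $N_v \subseteq \{\alpha_c \geq x_v^c\}$. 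Hence conditioning on $X_u^c \cap N_v$ pins $\alpha_c$ to lie in $[x_v^c, x_u^c)$ (nonempty only if $x_v^c \le x_u^c$; if it is empty the conditional probability is vacuous/by convention we may take the inequality to hold trivially, or note $c \in W_u$ and $c \in W_v$ bound both below $2/3$). The idea is that a \emph{smaller} value of $\alpha_c$ makes it \emph{easier} for $u$ to be assigned $c$, because $Y_u^c$ requires that every weak color $i$ that $u$ wants with higher $\pi$-priority than $c$ must fail, and lowering $\alpha_c$ does not help any competing color.

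The key step is to formalize a monotonicity/coupling argument: I would show that for fixed values of all thresholds except $\alpha_c$, the event $Y_u^c$ (as a function of $\alpha_c$, restricted to the region $\alpha_c < x_u^c$ so that $X_u^c$ holds) is \emph{monotone decreasing in $\alpha_c$} — i.e.\ if $Y_u^c$ holds at some value $\alpha_c = a$ with $a < x_u^c$, it also holds at every $a' < a$. This is because decreasing $\alpha_c$ can only add $c$ to $u$'s wanted set (it is already in, since $a < x_u^c$) and changes nothing about the other colors' wanted/unwanted status, so the $\argmax_{i \in W'_u}\pi(i)$ comparison is unaffected for the non-$c$ colors and $c$ remains in contention. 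Given this pointwise monotonicity, $\prob[Y_u^c \mid X_u^c, \text{other thresholds}]$ is an average of $\mathbf{1}[Y_u^c]$ over $\alpha_c \sim \mathrm{Unif}[0, x_u^c)$, whereas $\prob[Y_u^c \mid X_u^c, N_v, \text{other thresholds}]$ restricts that average to the sub-interval $\alpha_c \in [x_v^c, x_u^c)$ — wait, that is the \emph{upper} part of the interval, where $Y_u^c$ is \emph{less} likely. So a naive conditioning goes the wrong way, which means the argument cannot simply fix $v$'s thresholds; $N_v$ must be used to simultaneously pin down that $v$ wants nothing, and the gain must come from how this interacts with $u$'s \emph{other} wanted colors through shared thresholds.

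So the corrected and central step is: $N_v = \bigcap_{i \neq c}\{\alpha_i \geq x_v^i\}$, and crucially $x_v^i \le x_u^i$ need not hold, but the event $N_v$ raises every threshold $\alpha_i$ ($i\neq c$) — making it \emph{harder} for $u$ to want those competing colors $i$, hence \emph{easier} for $u$ to be assigned $c$. I would make this precise via a coupling / FKG-type argument: condition on $\alpha_c$ alone (its distribution is unaffected by whether we additionally condition on $N_v$ restricted to $i \neq c$, once we also track the $X_u^c$ conditioning — here one must check $N_v$ and $X_u^c$ are "compatible" in the $\alpha_c$ coordinate, i.e.\ work on the region $x_v^c \le \alpha_c < x_u^c$, and handle the boundary case $x_v^c > x_u^c$ separately). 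Then, for $i \neq c$, the events $\{\alpha_i \ge x_v^i\}$ (from $N_v$) and the "$u$ is not assigned $c$ because of a higher-priority wanted weak color $i$" events are positively correlated in the right direction: conditioning on $\alpha_i \ge x_v^i$ stochastically increases $\alpha_i$, which stochastically shrinks $u$'s wanted set, which can only increase $\prob[Y_u^c \mid X_u^c]$. Assembling this: write $\prob[Y_u^c \mid X_u^c] = \E[g(\boldsymbol\alpha_{-c})]$ and $\prob[Y_u^c\mid X_u^c\cap N_v]$ as a conditional expectation of the same $g$ under a measure that is a product of one-sided truncations of the $\alpha_i$'s to $[x_v^i,1]$; since $g$ is coordinatewise decreasing in each $\alpha_i$ and the truncation is first-order-stochastically-dominated-above... no — truncating to $[x_v^i,1]$ \emph{raises} each $\alpha_i$, and $g$ is decreasing, so $\E$ under the truncated measure is $\ge \E$ under the original. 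That yields the inequality.

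\textbf{Main obstacle.} The delicate point is the $\alpha_c$ coordinate: $X_u^c$ wants $\alpha_c$ small ($<x_u^c$) while $N_v$ wants $\alpha_c$ large ($\ge x_v^c$), so on that single shared coordinate the conditioning pushes in opposing directions and the inequality could fail there; everything hinges on showing the "positive" coordinates $i \neq c$ more than compensate, or — more likely the intended route — on observing that once we condition on $X_u^c$, the value of $\alpha_c$ no longer influences $Y_u^c$ at all (since $c$ is already wanted and the $\argmax$ among wanted colors does not consult $\alpha_c$ further), so the $\alpha_c$ coordinate drops out entirely and only the beneficial $i\neq c$ coordinates remain. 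Verifying that "$Y_u^c$ is independent of $\alpha_c$ given $X_u^c$ and $\boldsymbol\alpha_{-c}$", and that $N_v$ restricted to coordinate $c$ is then irrelevant, is the crux; after that, the coordinatewise-monotone coupling over $i \neq c$ is routine. I also need to handle the case $|W'_u| = 0$ where $u$ gets a strong color — but $c \in W_u$ is weak for $u$, so $Y_u^c$ then requires $|W'_u|>0$ with $c$ the top-priority wanted weak color, which is consistent with the monotonicity picture above.
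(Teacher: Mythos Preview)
Your core idea --- condition on $\pi$, then argue that $N_v$ stochastically raises the thresholds $\alpha_i$ for $i\neq c$, which can only shrink $u$'s set of wanted weak colors and hence can only help $c$ win the $\argmax$ --- is exactly the mechanism the paper uses. The paper executes the same monotonicity one color at a time: it writes both $\prob[Y_u^c]$ and $\prob[Y_u^c\mid N_v]$ as sums over subsets $S$ of weak colors wanted by $u$, reduces $\prob[X_u^{c_i}\mid N_v]$ to $\prob[X_u^{c_i}\mid \overline{X}_v^{c_i}]$, checks $\prob[X_u^{c_i}\mid \overline{X}_v^{c_i}]\le \prob[X_u^{c_i}]$, and then swaps conditional for unconditional probabilities coordinate by coordinate, using that the ``not-wanted'' coefficient always dominates the ``wanted'' coefficient. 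Your coupling/FKG sketch is a more compact packaging of the identical argument.

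That said, there are two genuine slips you should fix before this becomes a proof. First, you repeatedly misstate $N_v$: by definition $N_v=\bigcap_{i\in [k]\setminus\{c\}}\{\alpha_i\ge x_v^i\}$ concerns \emph{only} colors other than $c$, so it places no constraint on $\alpha_c$ whatsoever. Your opening paragraph (``$N_v$ forbids $v$ wanting $c$'') and your entire ``main obstacle'' section are built on the opposite reading. Once you use the correct definition, $N_v$ and $X_u^c$ are trivially independent (disjoint threshold coordinates), and the ``delicate $\alpha_c$ coordinate'' issue evaporates --- there is nothing to compensate for. The paper in fact starts by reducing the claim to $\prob[Y_u^c\mid N_v]\ge\prob[Y_u^c]$ precisely via this independence.

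Second, your monotonicity has the wrong sign. With $\pi$ and $\alpha_c$ fixed and $X_u^c$ holding, the indicator of $Y_u^c$ is \emph{non-decreasing} in each $\alpha_i$, $i\neq c$: raising $\alpha_i$ can only remove $i$ from $W_u'$, which can flip the $\argmax$ from some $i$ to $c$ but never away from $c$. You wrote ``$g$ is decreasing'' and then concluded the expectation goes up under the upward truncation, which is inconsistent; with the correct (increasing) direction the conclusion is right for the right reason. After these two corrections your sketch is complete and essentially identical in content to the paper's proof.
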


\begin{restatable}{lemma}{lemsumtoprod}
	\label{lem:sum_to_prod}
	Let $\beta \in [0,1]$ be a constant and $\mathbf{x} \in \mathbb{R}^m_{\geq 0}$ be a length $m$ nonnegative vector satisfying $\sum_{t=1}^m x_t \leq \beta$, then we have $\prod_{t=1}^m (1 - x_t) \geq 1 - \beta$.
\end{restatable}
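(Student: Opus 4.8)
The plan is to reduce the claim to the elementary inequality $\prod_{t=1}^m (1 - x_t) \geq 1 - \sum_{t=1}^m x_t$, valid whenever every $x_t \in [0,1]$, and then finish in one line: the hypothesis gives $\sum_{t=1}^m x_t \leq \beta$, hence $1 - \sum_{t=1}^m x_t \geq 1 - \beta$, and chaining the two inequalities yields the lemma.

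Before anything else I would record two easy points. First, since the $x_t$ are nonnegative and $\sum_{t=1}^m x_t \leq \beta \leq 1$, each individual $x_t$ lies in $[0,1]$; in particular every factor $1 - x_t$ is nonnegative, so the product in the statement has no sign issues and the reduction above is meaningful. Second, if $m = 0$ the empty product equals $1 \geq 1 - \beta$ and there is nothing to prove, so we may assume $m \geq 1$.

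For the core inequality I would give the probabilistic one-line argument, which fits the paper's viewpoint (cf.\ $P(\vx,0)$ in Definition~\ref{def:prob}): since each $x_t \in [0,1]$, take mutually independent events $X_1,\dots,X_m$ with $\prob[X_t] = x_t$. Then $\prod_{t=1}^m (1 - x_t) = \prob\big[\bigcap_{t=1}^m \overline{X_t}\big] = 1 - \prob\big[\bigcup_{t=1}^m X_t\big] \geq 1 - \sum_{t=1}^m \prob[X_t] = 1 - \sum_{t=1}^m x_t$, where the inequality is the union bound. A self-contained alternative is a short induction on $m$: the case $m=1$ is an identity, and for the step one writes $\prod_{t=1}^m(1-x_t) = (1-x_m)\prod_{t=1}^{m-1}(1-x_t) \geq (1-x_m)\big(1-\sum_{t=1}^{m-1}x_t\big) = 1 - \sum_{t=1}^m x_t + x_m\sum_{t=1}^{m-1}x_t \geq 1 - \sum_{t=1}^m x_t$, using $1-x_m\geq 0$, the inductive hypothesis, and nonnegativity of the cross term.

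There is no real obstacle here; the result is elementary. The only thing worth flagging is that one should not try to induct directly on the inequality $\prod_{t=1}^m(1-x_t)\geq 1-\beta$, since $\beta$ is a fixed bound while the relevant partial sums change with $m$ — which is precisely why the argument routes through the sharper statement with $\sum_{t=1}^m x_t$ in place of $\beta$, and why the aggregate constraint $\sum_{t=1}^m x_t \le \beta$ (as opposed to a per-coordinate bound) is invoked only at the very last step.
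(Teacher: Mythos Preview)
Your proof is correct. The paper's argument is essentially your inductive alternative, but --- contrary to your closing caution --- it inducts directly on the $\beta$-version of the inequality: since the lemma is stated for all $\beta \in [0,1]$, the inductive hypothesis can be applied to $x_1,\dots,x_{m-1}$ with the smaller bound $\beta' = \beta - x_m \in [0,1]$, giving $(1-x_m)\prod_{t=1}^{m-1}(1-x_t) \geq (1-x_m)(1-\beta+x_m) = 1 - \beta + x_m(\beta - x_m) \geq 1-\beta$. Your detour through the sharper Weierstrass/union-bound inequality $\prod_t(1-x_t)\geq 1-\sum_t x_t$ is a clean equivalent, and the two inductions become algebraically identical once one substitutes $\beta = \sum_t x_t$; the probabilistic phrasing is a nice bonus that the paper does not give.
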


\begin{restatable}{lemma}{lembounding}
	\label{lem:bounding}
	Let $m \geq 2$ be an integer and
	$a_0, a_1, \hdots, a_m$ be values such that for every $t \in [0, m-2]$
	\begin{align*}
		a_{t+1}   & \leq a_t            \\
		2 a_{t+1} & \leq a_t + a_{t+2}.
	\end{align*}
	Let $\mathcal{D} = \{ \vx \in [0,2/3]^m \colon \sum_{i = 1}^m x_i \leq 1 \}$ be the domain of a function
	$f : \mathcal{D} \rightarrow \mathbb{R}$ defined as
	\[
		f(\vx) = \sum_{t=0}^m a_t P(\vx, t).
	\]
	Then $f$ is minimized (over domain $\mathcal{D}$) by any vector $\vx^*$ with one entry set to $2/3$, one entry set to $1/3$, and every other entry set to $0$. Furthermore, we have
	\begin{equation}
		\label{eq:minval}
		f(\vx^*) = \frac{2}{9}(a_0 + a_2) + \frac{5}{9}a_1.
	\end{equation}
\end{restatable}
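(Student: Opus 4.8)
I would work with the probabilistic reading of $P$: write $f(\vx) = \mathbb{E}[a_T]$, where $T = \sum_{i=1}^m X_i$ and $X_1,\dots,X_m$ are independent with $\prob[X_i=1]=x_i$, so that $T\le m$ and $a_T$ is always defined. Since $\mathcal D$ is compact and $f$ is continuous, a minimizer exists, and the goal is to transform an arbitrary minimizer into the claimed vector $\vx^*$. The argument has two parts: a \emph{redistribution} step that invokes the convexity inequalities $2a_{t+1}\le a_t+a_{t+2}$ to shrink the support, and a short explicit \emph{case analysis} that invokes the monotonicity inequalities $a_2\le a_1\le a_0$.

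\textbf{Redistribution step.} Fix a minimizer $\vx$ and two coordinates $i\ne j$, hold the sum $s=x_i+x_j$ and all other coordinates fixed, and let $R=\sum_{\ell\ne i,j}X_\ell\in\{0,1,\dots,m-2\}$. Conditioning on the other events,
\[
\mathbb{E}[a_T\mid R]=a_R+s\,(a_{R+1}-a_R)+x_ix_j\,(a_{R+2}-2a_{R+1}+a_R),
\]
where $a_{R+2}-2a_{R+1}+a_R\ge 0$ by convexity (legitimate since $R\in[0,m-2]$). Taking expectation over $R$ and keeping $s$ fixed, $f$ is a nondecreasing function of $x_ix_j$; over the feasible segment $\{x_i+x_j=s,\ 0\le x_i,x_j\le 2/3\}$ the product $x_ix_j$ is concave, hence minimized at an endpoint, namely where one of $x_i,x_j$ equals $0$ (if $s\le 2/3$) or $2/3$ (if $s>2/3$). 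Replacing $(x_i,x_j)$ by that endpoint keeps $f$ at its minimum, stays in $\mathcal D$, and never creates two coordinates equal to $2/3$: indeed $s>2/3$ forces $\sum_{\ell\ne i,j}x_\ell\le 1-s<1/3$, so no other coordinate is $\ge 2/3$. Doing this whenever $\vx$ has two coordinates strictly inside $(0,2/3)$ strictly lowers the number of such interior coordinates, so iterating yields a minimizer with at most one interior coordinate; and since two $2/3$-coordinates would violate $\sum_i x_i\le 1$, this minimizer has at most two nonzero entries and is, up to relabeling, one of: $\mathbf 0$; a single entry $2/3$; a single interior entry $y\in(0,2/3)$; or one entry $2/3$ together with an interior entry $y\le 1/3$.

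\textbf{Case analysis.} Evaluating $f$ on these four candidates gives
\[
\begin{aligned}
\mathbf 0&:\ a_0, & (\tfrac23,0,\dots)&:\ \tfrac13 a_0+\tfrac23 a_1,\\
(y,0,\dots)&:\ a_0+y(a_1-a_0), & (\tfrac23,y,0,\dots)&:\ \tfrac13 a_0+\tfrac23 a_1+\tfrac{y}{3}\,(2a_2-a_0-a_1).
\end{aligned}
\]
Since $a_1\le a_0$, the third expression is nonincreasing in $y$, hence bounded below by its value at $y=2/3$, which is the second case. Since $a_2\le a_1\le a_0$ we have $2a_2\le a_0+a_1$, so the fourth expression is nonincreasing in $y$ and minimized at $y=1/3$, where it equals $\tfrac29 a_0+\tfrac59 a_1+\tfrac29 a_2$; moreover $\tfrac29 a_0+\tfrac59 a_1+\tfrac29 a_2=\tfrac13 a_0+\tfrac23 a_1-\tfrac19(a_0+a_1-2a_2)\le \tfrac13 a_0+\tfrac23 a_1\le a_0$. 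Hence the minimum over $\mathcal D$ is $\tfrac29 a_0+\tfrac59 a_1+\tfrac29 a_2=\tfrac29(a_0+a_2)+\tfrac59 a_1$, attained at $\vx^*$, which also establishes \eqref{eq:minval}.

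\textbf{Main obstacle.} The delicate part is the feasibility bookkeeping in the redistribution step: pushing coordinates to the boundary of $[0,2/3]$ while respecting $\sum_i x_i\le 1$ and never generating a second $2/3$-entry (handled via the observation that $x_i+x_j>2/3$ leaves the rest summing to less than $1/3$), and making sure each hypothesis is used in the right place — the convexity inequalities drive the pairwise smoothing, whereas the monotonicity chain $a_2\le a_1\le a_0$ is exactly what selects $\vx^*$ among the finitely many boundary configurations. (For $m=2$ the last point needs $a_2\le a_1$, which is why it is natural to read the hypotheses as giving the full chain $a_m\le\cdots\le a_0$.)
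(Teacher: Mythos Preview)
Your proof is correct and follows essentially the same route as the paper: both arguments hinge on the pairwise smoothing observation that, with $x_i+x_j$ held fixed, convexity of $(a_t)$ makes $f$ nondecreasing in the product $x_ix_j$, which drives coordinates to the boundary of $[0,2/3]$; the paper merely orders the two ingredients differently, first using monotonicity of $(a_t)$ to push $\sum_i x_i$ up to $1$ (its Step~1) and then doing the two-coordinate smoothing inside that simplex (its Step~2), whereas you smooth first and invoke monotonicity only in the final four-case comparison. Your remark about needing $a_2\le a_1$ when $m=2$ is well taken --- the paper's Step~1 leans analogously on $a_m\le a_{m-1}$, and in both cases the intended hypothesis is the full chain $a_m\le\cdots\le a_0$, which holds for every sequence the lemma is actually applied to.
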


\begin{restatable}{lemma}{lemboundingconstraints}
	\label{lem:bounding_constraints}
	The inequalities $a_t \geq a_{t+1}$ and $a_t + a_{t+2} \geq 2a_{t+1}$ hold for sequence $a_t = \frac{1}{1+g+t}$ where $g \geq 0$ is an arbitrary fixed integer. These inequalities also hold for the sequence
	\begin{align}
		\label{eq:complicatedat}
		a_t & = \frac{2}{9}\left( \frac{1}{t+1} + \frac{1}{t+3} \right) + \frac{5}{9} \left( \frac{1}{t+2} \right).
	\end{align}
\end{restatable}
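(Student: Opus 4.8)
The plan is to reduce the whole lemma to two elementary facts about the function $\varphi(x) = 1/x$ on $(0,\infty)$ — that it is positive, strictly decreasing, and strictly convex — together with the trivial observation that the two claimed inequalities depend linearly on the sequence $(a_t)$, so that the family of sequences satisfying them is closed under nonnegative linear combinations.

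First I would handle the building-block sequence $a_t = 1/(1+g+t)$. Setting $n = 1+g+t$, which is a positive integer whenever $g\ge 0$ and $t\ge 0$, the monotonicity $a_t \ge a_{t+1}$ is just $1/n \ge 1/(n+1)$. For the discrete-convexity inequality I would note that $n+1$ is the midpoint of $n$ and $n+2$ and invoke convexity of $\varphi$; equivalently, the explicit identity
\[
a_t + a_{t+2} - 2a_{t+1} = \tfrac1n + \tfrac1{n+2} - \tfrac2{n+1} = \frac{2}{n(n+1)(n+2)} > 0
\]
establishes it directly (and incidentally shows it is strict).

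Next I would record the closure property: if sequences $(a_t^{(1)}),\dots,(a_t^{(J)})$ each satisfy $a_t^{(j)} \ge a_{t+1}^{(j)}$ and $a_t^{(j)} + a_{t+2}^{(j)} \ge 2a_{t+1}^{(j)}$, and $\mu_1,\dots,\mu_J \ge 0$, then $\sum_j \mu_j a_t^{(j)}$ satisfies both inequalities as well, since each is a nonnegatively weighted sum of the corresponding inequalities. Finally, I would observe that the sequence in \eqref{eq:complicatedat} is exactly $\frac29\cdot\frac1{t+1} + \frac59\cdot\frac1{t+2} + \frac29\cdot\frac1{t+3}$, i.e.\ the nonnegative combination (with weights $\frac29,\frac59,\frac29$) of the building-block sequence with $g=0$, $g=1$, and $g=2$. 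The closure property then finishes the proof.

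There is essentially no serious obstacle here: the only mathematical content is the convexity and monotonicity of $1/x$, and everything else is bookkeeping. The only points requiring care are checking that all denominators $1+g+t$ are positive (so that the building-block computation is valid) and confirming that the coefficients $\frac29,\frac59,\frac29$ appearing in \eqref{eq:complicatedat} are indeed nonnegative, so that the closure argument applies.
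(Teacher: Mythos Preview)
Your proof is correct, but it takes a different route from the paper. The paper handles the complicated sequence \eqref{eq:complicatedat} by extending it to a real-variable function $h(x)$, computing $h''(x)$ explicitly, checking that it is positive on $[0,\infty)$, and then invoking midpoint convexity with $\beta=1/2$ to recover the discrete inequality $2a_{t+1}\le a_t+a_{t+2}$. You instead observe that both target inequalities are linear in the sequence and hence preserved under nonnegative linear combinations, then decompose \eqref{eq:complicatedat} as $\frac29\cdot\frac1{t+1}+\frac59\cdot\frac1{t+2}+\frac29\cdot\frac1{t+3}$ and inherit the result from the building-block case $a_t=1/(1+g+t)$ already established. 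Your argument is entirely elementary (no derivatives) and the closure observation is a clean structural fact that would immediately handle any other nonnegative combination of shifted $1/(t+c)$ terms; the paper's calculus approach is more ad hoc to the particular $h$ but perhaps more routine to execute if one did not spot the decomposition.
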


Using the above technical lemmas, we prove the following result for an arbitrary edge $e$, which by Observation~\ref{obs:prob} proves our approximation guarantee.
\begin{theorem}
	\label{thm:graphecc}
	For every $e \in E$, $\prob[\text{$e$ is satisfied}] \geq p z_e$ where $p = {154}/{405} > 0.3802$ when running Algorithm~\ref{alg:graph_max_ecc}.
\end{theorem}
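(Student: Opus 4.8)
The plan is to fix an edge $e = (u,v)$ with $c = \ell(e)$ and to lower bound $\prob[\text{$e$ is satisfied}]$ by $p\,z_e$ by splitting into cases according to whether $c$ is strong or weak for each of $u$ and $v$. Recall $z_e = \min(x_u^c, x_v^c)$, so $e$ being satisfied requires at least that both $u$ and $v$ be assigned color $c$. In the case where $c$ is \emph{strong} for both endpoints, the situation is the easiest: $x_u^c, x_v^c \geq 2/3$, so by Observation~\ref{obs:prob}-style reasoning together with Lemma~\ref{lem:sum_to_prod}, each endpoint wants no weak color with probability at least $1/3$, and then falls through to its unique strong color $c$; I expect this case to comfortably beat $154/405$. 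The genuinely constraining case is when $c$ is \emph{weak} for at least one endpoint, say $c \in W_u$. Here $e$ is satisfied iff (i) $u$ wants $c$, (ii) $v$ wants $c$, (iii) $\pi$ ranks $c$ above every other weak color that $u$ wants, and (iv) $\pi$ ranks $c$ above every other weak color that $v$ wants. The color ordering $\pi$ is shared, which is what makes the global-priority trick pay off relative to independent rounding.

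The core computation is to condition on the random thresholds $\{\alpha_i\}$ (equivalently on the events $X_u^i, X_v^i$ for $i \neq c$), so that the remaining randomness is only in $\pi$, and then average. Conditioned on $u$ wanting exactly $s$ weak colors besides $c$ and $v$ wanting exactly $t$ weak colors besides $c$, the union of these ``competing'' weak colors has some size between $\max(s,t)$ and $s+t$, and the probability that $\pi$ puts $c$ first among this union plus $c$ is $1/(|{\rm union}|+1)$, which is at least $1/(s+t+1)$. So a clean (if lossy) bound gives, after taking the expectation over $\{\alpha_i\}_{i\neq c}$, a quantity of the shape $x_u^c \, x_v^c \cdot \mathbb{E}\big[\tfrac{1}{S+T+1}\big]$ where $S, T$ count wanted weak colors at $u, v$; the distributions of $S$ and $T$ are exactly encoded by the function $P(\cdot,\cdot)$ of Definition~\ref{def:prob} applied to the LP vectors $\mathbf{x}_u, \mathbf{x}_v$ restricted to weak colors. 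To sharpen this, I would use Lemma~\ref{lem:dependent_y_n_x}: conditioning on the other endpoint wanting \emph{no} other color only helps $u$ get assigned $c$, which lets one decouple the two endpoints' analyses when one of them happens to want nothing else, and more generally lets us handle the mixed strong/weak case by treating the strong-$c$ endpoint as essentially locked onto $c$.

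The last step, and the main obstacle, is the optimization: after the conditioning-and-averaging above, the bound on $\prob[e \text{ satisfied}]/z_e$ becomes an expression $\sum_{t} a_t P(\mathbf{x}, t)$ (or a product of two such, one per endpoint) where the coefficients $a_t$ are the harmonic-type sequences $\tfrac{1}{1+g+t}$ and the combination $\tfrac{2}{9}(\tfrac{1}{t+1}+\tfrac{1}{t+3}) + \tfrac{5}{9}\tfrac{1}{t+2}$ appearing in Lemma~\ref{lem:bounding_constraints}. Lemma~\ref{lem:bounding_constraints} verifies these sequences are decreasing and convex, so Lemma~\ref{lem:bounding} applies and tells us the worst case over the LP polytope $\{\mathbf{x} \in [0,2/3]^m : \sum x_i \leq 1\}$ is the vector with one entry $2/3$, one entry $1/3$, rest $0$, with value $\tfrac{2}{9}(a_0+a_2)+\tfrac{5}{9}a_1$. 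Plugging the relevant sequence into this formula, and combining the per-endpoint bounds across the strong/weak cases, should bottom out at exactly $154/405$; verifying that this is indeed the binding case (i.e. that the strong/strong and strong/weak cases give strictly more, and that all the lossy steps above are simultaneously tight at the claimed extremal configuration) is the delicate bookkeeping that makes this ``the most technical result of the paper.'' I would organize the write-up as: (1) reduce to bounding one edge; (2) handle strong/strong; (3) handle strong/weak via Lemma~\ref{lem:dependent_y_n_x}; (4) handle weak/weak via the $\pi$-averaging and Lemmas~\ref{lem:bounding}--\ref{lem:bounding_constraints}; (5) take the minimum of the three bounds and compute it to be $154/405$.
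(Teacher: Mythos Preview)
Your outline matches the paper's three-case split (strong/strong, strong/weak, weak/weak) and correctly identifies all the supporting lemmas, but there is a real gap in your weak/weak analysis that the paper resolves with an idea you have not mentioned.

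You propose to condition on $S$ (the number of weak colors other than $c$ that $u$ wants) and $T$ (same for $v$), lower-bound the $\pi$-probability by $1/(S+T+1)$, and then average. The problem is that $S$ and $T$ are \emph{not} independent: for each color $i$, the events $X_u^i$ and $X_v^i$ are driven by the same threshold $\alpha_i$. You therefore cannot write $\mathbb{E}[1/(S+T+1)]$ as the double sum $\sum_i \sum_j \frac{p_i q_j}{1+i+j}$ with $p_i = \prob[S=i]$ and $q_j = \prob[T=j]$, and Lemma~\ref{lem:bounding} (which is set up for sums of independent Bernoullis via $P(\vx,t)$) does not directly apply to the pair $(S,T)$.

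The paper fixes this with the representative-node device of Observation~\ref{obs:cv}: it partitions the relevant weak colors $D$ into $D_u = \{i \in D : x_u^i \geq x_v^i\}$ and $D_v = D \setminus D_u$, so that a color $i \in D_u$ wants some node of $e$ iff it wants $u$ (and symmetrically for $D_v$). The counts $|D_u'|$ and $|D_v'|$ are then genuinely independent (they depend on disjoint color sets), \emph{and} their sum is exactly the number of competing colors, so there is no loss from the union bound you describe. This is what yields the decoupled double sum to which Lemma~\ref{lem:bounding} is applied twice---once to the inner sum over $j$ (producing the combined sequence of Lemma~\ref{lem:bounding_constraints}), and once more to the outer sum over $i$---landing on $154/405$. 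The same partition trick is already doing the work in Case~1, where the joint event $W_u' = W_v' = \emptyset$ factors exactly as $\prod_{i \in C_v}(1-x_v^i)\prod_{i \in C_u}(1-x_u^i)$.

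Two smaller slips: in the strong/strong case the per-endpoint probability of wanting no weak color is at least $2/3$ (not $1/3$), and the probability both endpoints want $c$ is $z_e = \min(x_u^c, x_v^c)$, not $x_u^c\, x_v^c$. Also, Lemma~\ref{lem:bounding} requires $m \geq 2$, so the paper separately dispatches the sub-cases where $|D_u| \leq 1$ or $|D_v| \leq 1$ before the nested application.
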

\vspace{-10pt}
\begin{proof}

	Fix $e = (u,v)$, let $c = \ell(e)$, and set $C = [k] \backslash \{c\}$.
	Define $C_v = \{i \in C \colon x_v^c \geq x_u^c\}$ and $C_u = C \setminus C_v$. As before, $T_e$ is the event that $e$ is satisfied and $X_v^i$ is the event that $v$ wants color $i$. Let $Y_v^i$ be the event that $v$ is \emph{assigned} color $i$ by Algorithm~\ref{alg:graph_max_ecc}, and $N_v$ be the event that $v$ wants \emph{no} colors in $C$. If $c$ is strong for $v$, event $N_v$ is equivalent to event $Y_v^c$. Let $W_v'$ denote colors in $W_v$ that {want} $v$. Define $X_u^i$, $Y_u^i$, $W_u'$, and $N_u$ analogously for $u$. The proof is separated into (increasingly difficult) cases, based on how many of $\{u,v\}$ have $c$ as a strong color.

	\bigbreak

	\noindent
	\textbf{Case 1:} $c$ is strong for both $u$ and $v$, i.e., $S_u = S_v = \{c\}$.\\
	Let $W_v'$ denote colors in $W_v$ that want $v$, and define $W_u'$ analogously for $u$.
	Since $c$ is the strong color for both $u$ and $v$, $e$ is satisfied if and only if $W'_u = W'_v = \emptyset$. Additionally, because nodes can only have a single strong color and we know $c$ is strong for both $u$ and $v$ we know that all other colors must be weak for both $u$ and $v$. Using the property that probabilities regarding separate colors are independent we have
	\begin{align*}
		\prob\left[e \text{ is satisfied}\right] & = \prob\left[W'_u = W'_v = \emptyset\right]  = \prod_{i \in C}\prob\left[ i \notin W'_u \cup W'_v \right].
	\end{align*}
	Consider a color $i \in C_v$, which by definition means $x_v^i \geq x_u^i$. There are three options for the random threshold $\alpha_i$:
	\begin{itemize}
		\item $\alpha_i < x_u^i \leq x_v^i$, which happens with probability $x_u^i$ and implies that $i \in W'_u \cap W'_v$,
		\item $x_u^i < \alpha_i \leq x_v^i$, which happens with probability $x_v^i - x_u^i$ and implies $i \in W'_v$, $i \notin W'_u$, and
		\item $x_u^i \leq x_v^i < \alpha_i$, which happens with probability $1 - x_v^i$ and implies $i \notin W_u' \cup W_v'$.
	\end{itemize}
	Thus, for $i \in C_v$ we have $\prob\left[i \notin W'_u \cup W'_v\right] = (1 - x_v^i)$ and similarly for $i \in C_u$ we have $\prob\left[i \notin W'_u \cup W'_v\right] = (1 - x_u^i)$. This gives
	\begin{align*}
		\prod_{i \in C}\prob\left[ i \notin W'_u \cup W'_v \right] & = \prod_{i \in C_v}(1 - x_v^i)\prod_{i \in C_u}(1 - x_u^i).
	\end{align*}
	The fact that $c$ is strong for both $u$ and $v$ means $x_v^i \geq 2/3$ and $x_u^i \geq 2/3$. From the equality constraint in the LP we see that for $w\in\{u,v\}$ we have $\sum_{i \in C}x_w^i \leq 1 - 2/3 = 1/3$. Applying Lemma \ref{lem:sum_to_prod} gives
	\begin{align*}
		\prod_{i \in C_v}(1 - x_v^i)\prod_{i \in C_u}(1 - x_u^i) & \geq \frac{2}{3}\frac{2}{3} = \frac{4}{9} \geq \frac{4}{9}z_e.
	\end{align*}


	\bigbreak

	\noindent
	\textbf{Case 2:} $c$ is strong for one of $u$ or $v$. \\
	Without loss of generality we say $S_v = \{c\}$ and $c \in W_u$.
	Edge $e$ is satisfied if and only if $Y_u^c \cap Y_v^c$ holds.
	Because $c$ is strong for $v$, event $Y_v^c$ holds if and only if $N_v$ holds. Using the fact that $Y_u^c = Y_u^c \cap X_u^c$, we can write
	\[
		\prob\left[e \text{ is satisfied}\right] = \prob\left[Y_u^c \cap X_u^c \cap N_v\right].
	\]
	Using Bayes' Theorem, the fact that $X_u^c$ and $N_v$ are independent\footnote{Independence follows from the fact that $X_u^c$ is concerned with color $c$, while $N_v$ is concerned with a disjoint color set $C = [k]\backslash \{c\}$.}, Observation~\ref{obs:xuc} ($\prob[X_u^c] = x_u^c$), and Lemma~\ref{lem:dependent_y_n_x}, we see that
	\begin{align*}
		\prob\left[Y_u^c \cap X_u^c \cap N_v\right] & = \prob\left[ Y_u^c \mid N_v \cap X_u^c \right] \prob\left[ N_v \cap X_u^c \right]                & \text{(Bayes' Theorem)}                     \\
		                                            & = \prob\left[ Y_u^c \mid N_v \cap X_u^c \right] \prob\left[ N_v \right] \prob\left[ X_u^c \right] & \text{($X_u^c$ and $N_v$ are indepdendent)} \\
		                                            & = \prob\left[ Y_u^c \mid N_v \cap X_u^c \right] \prob\left[ N_v \right] x_u^c                     & \text{(Observation~\ref{obs:prob})}         \\
		                                            & \geq \prob\left[ Y_u^c \mid N_v \cap X_u^c \right] \prob\left[ N_v \right] z_e                    & \text{(LP constraint $x_u^c \geq z_e$)}     \\
		                                            & \geq \prob\left[ Y_u^c \mid X_u^c \right] \prob\left[ N_v \right] z_e                             & \text{(Lemma~\ref{lem:dependent_y_n_x}).}
	\end{align*}
	Using $\overline{X}_v^i$ to indicate that $v$ \emph{does not} want $i$, we get
	\[
		\prob\left[ N_v \right] = \prob\left[ \bigcap_{i \in C} \overline{X}_v^i \right] = \prod_{i \in C}(1 - x_v^i).
	\]
	As in Case 1, because $c$ is strong for $v$, we know that $\sum_{i \in C} x_v^i \leq 1/3$ and applying Lemma \ref{lem:sum_to_prod} gives us that $\prob\left[ N_v \right] \geq 2/3$.

	Finally we must bound $\prob\left[ Y_u^c \mid X_u^c \right]$. Since
	$c$ is weak for $u$ (i.e., $c \in W_u$), it will be convenient to consider all weak colors for $u$ other than $c$, which we will denote by $\hat{W}_u = W_u \backslash \{c\}$. We will then use $\hat{W}_u' = \{ i \in \hat{W}_u \colon \alpha_i < x_u^i\}$ to denote the set of colors in $\hat{W}_u$ that $u$ wants. Note that $X_u^c$ is independent of the number of colors in $\hat{W}_u$ that $u$ wants.

	Because of the global ordering of colors, conditioned on $u$ wanting $c$ and wanting exactly $t$ colors in $\hat{W}_u$, there is a $1/(t+1)$ chance that $c$ is chosen first by the permutation $\pi$, and is therefore assigned to $u$. Formally:
	\begin{equation*}
		\prob\left[ Y_u^c \mid X_u^c  \cap (|\hat{W}'_u| = t)\right] = \frac{1}{t+1}.
	\end{equation*}
	Define $p_t = \prob[ |\hat{W}'_u| = t]$ to be the probability that exactly $t$ colors from $\hat{W}_u$ are wanted by $u$. Using the vector $\vx_u[\hat{W}_u]$ to encode LP variables $\{x_u^i \colon i \in \hat{W}_u\}$ for node $u$ and the colors in $\hat{W}_u$, we see that $p_t = P(\vx_v[\hat{W}_u], t)$. Therefore, the law of total probability (combined with the fact that $X_u^c$ is independent from $|\hat{W}_u'|$) gives
	\begin{equation}
		\label{eq:mainprob}
		\prob\left[ Y_u^c \mid X_u^c \right]
		= \sum_{t = 0}^{|\hat{W}_u|} \prob\left[ Y_u^c \mid X_u^c  \cap (|\hat{W}'_u| = t)\right] \prob[|\hat{W}'_u| = t] = \sum_{t = 0}^{|\hat{W}_u|} \frac{1}{1+t} p_t.
	\end{equation}
	Eq.~\eqref{eq:mainprob} is exactly of the form in Lemma \ref{lem:bounding} with $a_t = 1/(t+1)$. From Lemma \ref{lem:bounding_constraints}, we know this sequence $\{a_t\}$ satisfies the constraints needed by Lemma \ref{lem:bounding}, and applying Lemma~\ref{lem:bounding} shows $\prob\left[ Y_u^c \mid X_u^c \right] \geq 31/54$. Combining this with the previously established bound $\prob\left[ N_v \right] \geq 2/3$ gives
	\begin{align*}
		\prob\left[e \text{ is satisfied}\right] & \geq \prob\left[ Y_u^c \mid X_u^c \right] \prob\left[ N_v \right] z_e  \geq \frac{31}{54} \cdot \frac{2}{3} z_e = \frac{31}{81}z_e > \frac{154}{405} z_e.
	\end{align*}


	\bigbreak

	\noindent
	\textbf{Case 3:} $c$ is weak for both $u$ and $v$, i.e., $c \in W_u \cap W_v$.\\
	First we condition the probability of satisfying $e$ on the event $X_u^c \cap X_v^c$. If one of $X_u^c$ or $X_v^c$ does not happen, then $e$ cannot be satisfied, so we have
	\begin{align*}
		\prob\left[ \text{$e$ is satisfied} \right] & = \prob\left[ \text{$e$ is satisfied} \mid X_u^c \cap X_v^c \right] \prob\left[ X_v^c \cap X_u^c \right] = \prob\left[ \text{$e$ is satisfied} \mid X_u^c \cap X_v^c \right] z_e.
	\end{align*}
	Define $D = (W_u \cup W_v) \backslash \{c\}$ to be the set of weak colors (excluding $c$) that want at least one of $\{u,v\}$. If we condition on $u$ and $v$ both wanting $c$, it is still possible that $e$ will not be satisfied. This will happen if a weak color $i \in D$ wants either $u$ or $v$, and the permutation $\pi$ prioritizes color $i$ over $c$. We bound the probability of satisfying $e$ by conditioning on the number of colors in $D$ that want one or both of $\{u,v\}$.

	Using the same logic as in Observation~\ref{obs:cv}, we can present a simpler way to characterize whether a color $i \in D$ wants at least one of $\{u,v\}$. Formally, we partition $D$ into the sets:
	\begin{align*}
		D_u & = \{i \in D \colon x_u^i \geq x_v^i \} \\
		D_v & = D - D_v.
	\end{align*}
	Note that a color $i \in D_u$ wants one or both nodes in $e$ if and only if $i$ wants $u$, and color $i \in D_v$ wants one or both nodes in $e$ if and only if $i$ wants $v$.
	Mirroring our previous notation, let $D_u'$ denote the set of colors in $D_u$ that want node $u$ (based on the random color thresholds) and define $D_v'$ analogously. The set $D_u' \cup D_v'$ is then the set of weak colors that want one or more node in $e$. Thus, conditioned on $|D_u' \cup D_v'| = t$ and conditioned on $u$ and $v$ wanting $c$, in order for $e$ to be satisfied $c$ must come before all $t$ colors in $|D_u' \cup D_v'|$ in the random permutation $\pi$. Formally, this means
	\[
		\prob\left[ \text{$e$ is satisfied} \mid X_u^c \cap X_v^c \cap (|D'_u \cup D'_v| = t) \right] = \frac{1}{1 + t}.
	\]

	Again using the law of total probability and the fact that $X_u^c \cap X_v^c$ is independent from $|D_u' \cup D_v'|$, we see
	\begin{align}
		\label{eq:dupdvp}
		\prob[ e & \text{ is satisfied} \mid X_u^c \cap X_v^c ]
		= \sum_{t = 0}^{|D|} \frac{1}{1+t} \prob\left[ |D'_u \cup D'_v| = t \right].
	\end{align}
	Observe now that $D_u$ and $D_v$ are disjoint color sets, which implies that $|D_u'|$ and $|D_v'|$ are independent random variables. This allows us to decouple $D_u'$ and $D_v'$ in the above expression since $\prob\left[ |D'_u \cup D'_v| = t \right] = \prob\left[ |D'_u| + |D'_v| = t \right]$. We define
	\[
		p_i = \prob\left[ |D'_u| = i \right] \text{ and } q_i = \prob\left[ |D'_v| = i \right],
	\]
	so that we can write
	\[
		\prob\left[ |D'_u| + |D'_v| = t \right] = \sum_{\substack{i, j \in [0, t] \\ i + j = t}} p_i q_j = \sum_{i = 0}^{|D_u|} p_i q_{t-i} .
	\]
	This allows us to re-write Eq.~\eqref{eq:dupdvp} as
	\begin{align}
		\label{eq:decoupled}
		\prob[ e \text{ is satisfied} \mid X_u^c \cap X_v^c ] & = \sum_{i = 0}^{|D_u|} \sum_{j = 0}^{|D_v|} \frac{1}{1+i+j} p_i q_j.
	\end{align}

	Without loss of generality we now assume $|D_u| \leq |D_v|$, and proceed case by case depending on the sizes of $D_u$ and $D_v$.
	Our goal is to show that for all cases $\prob[ e \text{ is satisfied} \mid X_u^c \cap X_v^c ] \geq 154/405$.
	In the cases below, we use the vector $\vx_u[D_u]$ to encode LP variables for node $u$ and colors in $D_u$ so that $p_i = P(\vx_u[D_u], i)$, and likewise $q_i = P(\vx_v[D_v], i)$, to match with notation in Definition~\ref{def:prob} and Lemma~\ref{lem:bounding}.

	\bigbreak

	\noindent
	\textbf{Case 3a.} $|D_u| = |D_v| = 0$. \\
	No weak colors other than $c$ want $u$ or $v$, so $\prob[ e \text{ is satisfied} \mid X_u^c \cap X_v^c ] = 1$.

	\bigbreak

	\noindent
	\textbf{Case 3b.} $|D_u| = 0$ and $|D_v| = 1$. \\
	Letting $a$ be the single color in $D_v$,
	\begin{align*}
		\prob[ e \text{ is satisfied} \mid X_u^c \cap X_v^c ] & = p_0 q_0 + \frac{1}{2} p_0 q_1 = q_0 + \frac{1}{2} q_1 = (1 - x_v^a) + \frac{1}{2} x_v^a.
	\end{align*}
	Because $a$ is a weak color for $v$, we know $x_v^a \in [0, \frac{2}{3}]$, and the minimum value the probability can obtain is $\frac{2}{3} > \frac{154}{405}$.

	\bigbreak

	\noindent
	\textbf{Case 3c.} $|D_u| = |D_v| = 1$. \\
	Letting $a$ be the color in $D_v$ and $b$ be the color in $D_u$,
	\begin{align*}
		\prob[ e \text{ is satisfied} \mid X_u^c \cap X_v^c ] & = p_0 q_0 + \frac{1}{2}(p_0 q_1 + p_1 q_0) + \frac{1}{3} p_1 q_1 \\ &= (1 - x_u^b)(1 - x_v^a) + \frac{1}{2}((1 - x_u^b)x_v^a + x_u^b(1 - x_v^a)) + \frac{1}{3} x_u^b x_v^a.
	\end{align*}
	Because $a$ is weak for $v$ and $b$ is weak for $u$, we know $x_v^a, x_u^b \in [0, \frac{2}{3}]$. This gives a minimum value for the probability of $\frac{13}{27} > \frac{154}{405}$.

	\bigbreak

	\noindent
	\textbf{Case 3d.} $|D_u| = 0$ and $|D_v| > 1$. \\
	In this case $p_0 = 1$, and Eq.~\eqref{eq:decoupled} simplifies to
	\begin{equation*}
		\prob[ e \text{ is satisfied} \mid X_u^c \cap X_v^c ] = \sum_{j = 0}^{|D_v|} \frac{1}{j+1} q_j \geq \frac{31}{54},
	\end{equation*}
	where we have applied Lemma \ref{lem:bounding} with $a_j = \frac{1}{j+1}$.

	\bigbreak

	\noindent
	\textbf{Case 3e.} $|D_u| = 1$ and $|D_v| > 1$. \\
	Eq.~\eqref{eq:decoupled} simplifies to
	\begin{align*}
		\prob[ e \text{ is satisfied} \mid X_u^c \cap X_v^c ] & = p_0 \sum_{j = 0}^{|D_v|} \frac{q_j}{j+1}  + p_1 \sum_{j = 0}^{|D_v|} \frac{q_j}{j+2} \geq p_0 \frac{31}{54} + p_1 \frac{19}{54},
	\end{align*}
	where we applied Lemma \ref{lem:bounding} twice, once with $a_j = \frac{1}{j+1}$ and once with $a_j = \frac{1}{j+2}$. The right hand side of the above bound has a minium value of $\frac{23}{54}$.

	\bigbreak

	\noindent
	\textbf{Case 3f.} $|D_u| > 1$ and $|D_v| > 1$.
	From Eq.~\eqref{eq:decoupled} we know
	\begin{align*}
		\prob[ e \text{ is satisfied} \mid X_u^c \cap X_v^c ] & = \sum_{i = 0}^{|D_u|}  p_i \left(\sum_{j = 0}^{|D_v|} \frac{1}{1+i+j}  q_j \right).
	\end{align*}
	We then apply Lemma~\ref{lem:bounding} $|D_u|+1$ times, once for each choice of $i \in \{0,1, \hdots, |D_u|\}$. The $i$th time we apply it, we use the sequence $a_j = \frac{1}{1+i+j}$ to get
	\begin{align}
		\label{eq:lemdvtimes}
		\sum_{j = 0}^{|D_v|} \frac{1}{1+i+j}  q_j \geq \frac{2}{9}\left( \frac{1}{i+1} + \frac{1}{i+3} \right) + \frac{5}{9} \left( \frac{1}{i+2} \right).
	\end{align}
	The right hand of~\eqref{eq:lemdvtimes} defines a new sequence
	\begin{equation}
		\label{eq:uglyai}
		a_i = \left( \frac{1}{i+1} + \frac{1}{i+3} \right) + \frac{5}{9} \left( \frac{1}{i+2} \right).
	\end{equation}
	We know from Lemma~\ref{lem:bounding_constraints} that this sequence $\{a_i\}$ in Eq.~\eqref{eq:uglyai} satisfies the conditions from Lemma~\ref{lem:bounding}, so applying Lemma~\ref{lem:bounding} one more time and combining all steps gives
	\begin{align*}
		\prob[ e \text{ is satisfied} \mid X_u^c \cap X_v^c ] & = \sum_{i = 0}^{|D_u|}  p_i \left(\sum_{j = 0}^{|D_v|} \frac{1}{1+i+j}  q_j \right) \\
		&\geq \sum_{i = 0}^{|D_u|}  p_i \left( \frac{2}{9}\left( \frac{1}{i+1} + \frac{1}{i+3} \right) + \frac{5}{9} \left( \frac{1}{i+2} \right) \right) \geq \frac{154}{405}.
	\end{align*}
	Therefore, we have shown in all possible subcases of Case 3 that $\prob\left[ \text{$e$ is satisfied} \mid X_u^c \cap X_v^c \right] \geq \frac{154}{405}$.

	Cases 1,2, and 3 all together show that for every $e \in E$ we have
	\[
		\prob\left[ \text{$e$ is satisfied} \right] \geq \frac{154}{405} z_e.
	\]
\end{proof}

\section{Alternative ECC Objectives}
We now turn our attention to more general~\ECC{} objectives, as well as several which incorporate notions of balance or fairness with respect to edge colors.
\subsection{Generalized Norm Objectives}
\label{sec:generalized-mean}
We begin by introducing~\pmeanECC{}, which we define via the following binary program:
\begin{align}
	\label{eq:p-mean-ecc}
\begin{aligned}
	\text{min} \quad& \textstyle (\sum_{c = 1}^k (m_c)^p)^{\frac{1}{p}} \\
	\text{s.t.} \quad& \forall v \in V:\\
	&\forall c \in [k], e \in E_c:\\
    &\forall c \in [k]: \\
	&d_v^c, \gamma_e \in \{0, 1\}
\end{aligned}
\begin{aligned}
	&\\
	&\textstyle \sum_{c=1}^k d_v^c \geq k - 1 \\
	\quad &d_v^c \leq \gamma_e \quad \forall v \in e \\
    & m_c = \textstyle\sum_{e\in E_c}\gamma_e\\
	&\forall c \in [k], v \in V, e \in E.
\end{aligned}
\end{align}
Recalling the notation of BLP~\eqref{eq:maxecc}, we think of $d_v^c = 1 - x_v^c$ as encoding the ``distance'' of node $v$ to color $c$, and $\gamma_e = 1 - z_e$ as encoding whether hyperedge $e$ is \emph{unsatisfied} ($\gamma_e = 1$) or not ($\gamma_e = 0$).
The variables $m_c$ can be thought of as the entries in a per-color error vector $\textbf{m} \in \mathbb{Z}^k$ which tracks the number of unsatisfied edges of each color.
Otherwise, the variables and constraints are similar to BLP~\eqref{eq:maxecc}, adapted for the minimization objective. 
When $p = 1$, we recover the standard \minecc{} problem, i.e., the $\ell_1$-norm on $\textbf{m}$.
We now show how to approximate the generalized \pmeanECC{} problem.
For $p \geq 1$, if we relax the binary constraints to linear constraints $0 \leq d_v^c \leq 1$ and $0 \leq \gamma_e \leq 1$, we obtain a convex programming relaxation. We can then apply convex optimization techniques to find the optimal solution to this relaxation and round the resulting fractional coloring.
The rounding scheme is similar to that used for \minecc{}~\cite{amburg2020clustering}.
In particular, given a fractional coloring consisting of optimal variables $\{d_v^c\}$, we assign color $c$ to vertex $v$ if and only if $d_v^c < \frac{1}{2}$.
We show that the resulting coloring is well-defined and a $2$-approximation to the generalized objective.
When $0 < p < 1$ we lose convexity, but we can still recover approximations; we give a classic analysis based on the Lov{\'a}sz extension~\cite{lovasz1983submodular}.
Some readers may find it helpful to observe that our analysis is conceptually equivalent to the textbook $2$-approximation for \textsc{Submodular Vertex Cover}\footnote{Given a graph $G = (V, E)$ and a submodular function $f\colon 2^V \rightarrow \mathbb{R}_{\geq 0}$, find a vertex cover $C$ of $G$ which minimizes $f(C)$.}, where the graph in question is the \emph{conflict graph} $G$ of our edge-colored hypergraph $H$: for each hyperedge $e$ create a vertex $v_e$, and add the edge $(v_e,v_f)$ for each pair $e, f$ of hyperedges with $\ell(e) \neq \ell(f)$ and $e \cap f \neq \emptyset$.
The connection between vertex colorings of edge-colored hypergraphs and vertex covers of their associated conflict graphs has been observed several times in the literature~\cite{angel2016clustering,cai2018alternating,kellerhals2023parameterized,veldt2023optimal}, and also appears elsewhere in the present work, e.g., in the proofs of~\Cref{thm:cfminecc-reduce-sparse-vc,thm-combinatorial-k-approx,thm:pc-multiple-classes-NP-hard}.
\begin{restatable}{theorem}{pmeanapproximation}\label{thm:pmean-approximation}
    \pmeanECC{} is approximable within factor $2$ when $p \geq 1$, and factor $2^{1/p}$ when $0 < p < 1$.
\end{restatable}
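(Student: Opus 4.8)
The plan is to treat the two regimes $p \ge 1$ and $0 < p < 1$ with different machinery, since the objective $\mathbf m \mapsto (\sum_{c} m_c^p)^{1/p}$ is a convex $\ell_p$ norm in the first regime but non-convex in the second, and correspondingly the set function $C \mapsto \sum_c |C \cap E_c|^p$ is supermodular when $p \ge 1$ but submodular when $p < 1$.

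For $p \ge 1$ I would relax the binary constraints in program~\eqref{eq:p-mean-ecc} to $d_v^c,\gamma_e \in [0,1]$. Since $\mathbf m$ is an affine image of $\boldsymbol\gamma$ and $\|\cdot\|_p$ is convex, this is a convex program solvable in polynomial time to arbitrary accuracy (and exactly via second-order cone programming when $p$ is rational), and its optimum lower-bounds the integral optimum $\mathrm{OPT}$. Given optimal fractional $\{d_v^c\}$, round by setting $\lambda(v)=c$ iff $d_v^c < \tfrac12$. This is well defined: rewriting $\sum_c d_v^c \ge k-1$ as $\sum_c (1-d_v^c) \le 1$ shows at most one coordinate can fall below $\tfrac12$; if none does, assign $v$ an arbitrary color. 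The key per-edge estimate is $\hat\gamma_e \le 2\gamma_e$, where $\hat\gamma_e \in \{0,1\}$ indicates whether $e$ is unsatisfied after rounding: if $e$ with $\ell(e)=c$ is unsatisfied then some $v \in e$ is not assigned $c$, which by the rounding rule forces $d_v^c \ge \tfrac12$, and the LP constraint $d_v^c \le \gamma_e$ then gives $\gamma_e \ge \tfrac12$, i.e. $\hat\gamma_e \le 2\gamma_e$ (the case $\hat\gamma_e=0$ is trivial). Summing over each $E_c$ gives $\hat m_c \le 2 m_c$ for every color, hence $(\sum_c \hat m_c^p)^{1/p} \le (\sum_c (2 m_c)^p)^{1/p} = 2 (\sum_c m_c^p)^{1/p} \le 2\,\mathrm{OPT}$.

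For $0 < p < 1$ I would route through the conflict graph $G$. First, establish the standard correspondence: the unsatisfied-edge set of any coloring is a vertex cover of $G$ (two intersecting hyperedges of different colors cannot both be satisfied), and conversely a vertex cover $C$ induces a coloring whose unsatisfied-edge set is contained in $C$ (color each hyperedge of the independent set $V(G)\setminus C$ monochromatically, which is consistent since intersecting hyperedges in an independent set share a color, and color remaining vertices arbitrarily). Next verify that $f(C) := \sum_{c=1}^k |C \cap E_c|^p$ is nonnegative, monotone, and submodular: for each $c$, $S \mapsto |S \cap E_c|^p$ is $\phi(|S\cap E_c|)$ with $\phi(t)=t^p$ concave and non-decreasing, hence submodular, and sums of submodular functions are submodular. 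Since $x \mapsto x^{1/p}$ is strictly increasing, minimizing the \pmeanECC{} objective over colorings is equivalent to minimizing $f$ over vertex covers, with $\mathrm{OPT} = (\min_C f(C))^{1/p}$; moreover the coloring induced by a vertex cover $C$ has objective at most $f(C)^{1/p}$ because its unsatisfied set only shrinks and $f$ is monotone. Running the textbook $2$-approximation for Submodular Vertex Cover (minimize the Lovász extension of $f$ over $\{x\in[0,1]^{V(G)} : x_e+x_f \ge 1\ \forall (e,f)\in E(G)\}$, a convex program, then include $e$ iff $x_e \ge \tfrac12$) yields a vertex cover $C$ with $f(C) \le 2\min_{C'} f(C')$, so the induced coloring has objective at most $f(C)^{1/p} \le (2\min_{C'} f(C'))^{1/p} = 2^{1/p}\,\mathrm{OPT}$.

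The main point requiring care is recognizing why the regimes genuinely need different tools: for $p \ge 1$ the objective is convex but the associated set function is supermodular, so only relax-and-round applies; for $p < 1$ the relaxation is non-convex but the raw sum $\sum_c m_c^p$ is submodular, so one invokes submodular vertex cover instead — and the approximation factor becomes exactly $2^{1/p}$ because the factor $2$ guaranteed on $f$ is pulled out through the monotone map $x \mapsto x^{1/p}$. Secondary care goes into handling the harmless no-coordinate-below-$\tfrac12$ case in the $p\ge 1$ rounding, and checking that the coloring extracted from a vertex cover in the $p<1$ case is well defined and loses nothing relative to $C$.
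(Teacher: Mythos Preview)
Your proposal is correct and follows essentially the same approach as the paper: for $p \ge 1$ you both relax, solve the convex program, and threshold-round at $\tfrac12$ with the identical per-edge estimate $\hat\gamma_e \le 2\gamma_e$; for $0<p<1$ you both recognize $\sum_c |S\cap E_c|^p$ as monotone submodular and apply the Lov\'asz-extension-based $2$-approximation for submodular vertex cover on the conflict graph, then pull the factor $2$ through $x\mapsto x^{1/p}$. The paper in fact explicitly remarks that its $p<1$ analysis ``is conceptually equivalent to the textbook $2$-approximation for \textsc{Submodular Vertex Cover}'' on the conflict graph, which is precisely your framing.
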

\begin{proof}

    We begin with the case where $p \geq 1$.
    It is well known that the $\ell_p$-norm is convex, and we observe that all of the constraints
    in Program~(\ref{eq:p-mean-ecc}) are linear. Thus, we may in polynomial time obtain optimal fractional variables $\{d_v^c\}, \{\gamma_e\}, \{m_c\}$
    for the relaxation of Program~(\ref{eq:p-mean-ecc}).

    We observe that for each vertex $v$, there is at most one color $c$ with the property that $d_v^c < \frac{1}{2}$.
    Otherwise, there are two colors $c_1, c_2$ with $d_v^{c_1}, d_v^{c_2} < \frac{1}{2}$, but then
    \[
        \sum_{c=1}^k d_v^c < 2\cdot\frac{1}{2} + (k - 2) = k - 1,
    \]
    contradicting the first constraint of Program~(\ref{eq:p-mean-ecc}).

    We now propose a coloring $\lambda$ as follows. For each color $c$ and vertex $v$, we assign $\lambda(v) = c$ if $d_v^c < \frac{1}{2}$.
    By the preceding observation, this procedure assigns at most one color to each vertex.
    If for any $v$ we have that $d_v^c \geq \frac{1}{2}$ for all colors $c$, then we set $\lambda(v)$ arbitrarily.

    Now, for each hyperedge $e$ let $\hat{\gamma}_e$ be equal to $0$ if $\lambda(v) = \ell(e)$ for all $v \in e$, or $1$ otherwise.
    We claim that if $\hat{\gamma}_e = 1$, then $\gamma_e \geq \frac{1}{2}$.
    Otherwise, letting $c = \ell(e)$, the second constraint of Program~(\ref{eq:p-mean-ecc}) implies that $d_v^c < \frac{1}{2}$ for all $v \in e$.
    Then $\lambda(v) = c$ for all $v \in e$, implying that $\hat{\gamma}_e = 0$, a contradiction.
    We conclude that for every hyperedge $e$, $\hat{\gamma}_e \leq 2\gamma_e$.

    For each color $c$, we write $\hat{m}_c$ for the number of hyperedges of color $c$ which are unsatisfied by $\lambda$.
    Equivalently,
    \[
        \hat{m}_c = \sum_{e \in E_c} \hat{\gamma}_e \leq 2\sum_{e \in E_c} \gamma_e = 2m_c,
    \]
    where the last equality comes from the third constraint of Program~\ref{eq:p-mean-ecc}.

    It follows that for every $c$, $\hat{m}_c^p \leq 2^pm_c^p$, and thus
    \[
        \sum_{c=1}^k \hat{m}_c^p \leq 2^p \cdot \sum_{c=1}^k m_c^p,
    \]

    which in turn implies that

    \[
        \left(\sum_{c=1}^k \hat{m}_c^p\right)^{1/p} \leq 2 \cdot \left(\sum_{c=1}^k m_c^p\right)^{1/p},
    \]

    so $\lambda$ is $2$-approximate.

    We now consider the case where $0 < p < 1$.
    We begin by defining, for each $c \in [k]$, a non-negative, monotone, and submodular function $f_c\colon 2^E \rightarrow \mathbb{Z}$ given by $f_c(S) = |E_c \cap S|$.
    If $\lambda$ is a vertex coloring of $H$ and $S_\lambda \subseteq E$ is the set of edges unsatisfied by $\lambda$, then $f_c(S_\lambda)$ measures the number of edges of color $c$ unsatisfied by $\lambda$.
    Because the composition of a concave function with a submodular function results in a submodular function and $0 < p < 1$, $f_c^p$ is non-negative, monotone, and submodular.
    Then the function $f\colon 2^E \rightarrow \mathbb{R}$ given by the sum
    \[
        f(S) = \sum_{c=1}^{k} f_c^p(S)
    \]
    has these same properties.

    Now, let $\lambda^*$ be an optimal vertex coloring of $H$ and let $S_{\lambda^*}$ be the set of hyperedges unsatisfied by $\lambda^*$.
    Our goal will be to compute a coloring $\lambda$ with $f(S_\lambda) \leq 2f(S_{\lambda^*})$.
    This would complete the proof, since then we have
    \[
        (f(S_\lambda))^{1/p} \leq (2f(S_{\lambda^*}))^{1/p} = 2^{1/p}f^{1/p}(S_{\lambda^*}),
    \]
    and $f^{1/p}(S_{\lambda}), f^{1/p}(S_{\lambda^*})$ are precisely the objective values attained by $\lambda$ and $\lambda^*$, respectively, according to Program~(\ref{eq:p-mean-ecc}).


    Impose an arbitrary order on the edges $E$ of our edge-colored hypergraph $H$, and for each vector $\boldsymbol{\gamma} \in [0, 1]^{|E|}$ map edges to components of $\boldsymbol{\gamma}$ according to this order.
    Henceforth, for a hyperedge $e$ we write $\gamma_e$ for the component of $\boldsymbol{\gamma}$ corresponding to $e$. Moreover, for a
    value $\rho$ selected uniformly at random from $[0, 1]$, we write $T_\rho(\boldsymbol{\gamma}) = \{e \colon \gamma_e \geq \rho\}$.
    Then the Lov{\'a}sz extension $\hat{f}\colon [0, 1]^{|E|} \rightarrow \mathbb{R}$ of $f$ is given by
    \[
        \hat{f}(\boldsymbol{\gamma}) = \mathbb{E}[f(T_\rho(\boldsymbol{\gamma}))].
    \]
    This $\hat{f}$ is convex~\cite{lovasz1983submodular}, so we may efficiently optimize the following program:
    \begin{align}
        \label{eq:lovasz}
        \begin{aligned}
            \text{min} \quad  & \hat{f}(\gamma)                 \\
            \text{s.t.} \quad & \gamma_e + \gamma_f \geq 1\quad \\
                              & \gamma_e \geq 0                 \\
        \end{aligned}
        \begin{aligned}
             &                                                                                                         \\
             & \text{for all } e, f \in E \text{ such that } e \cap f \neq \emptyset \text{ and } \ell(e) \neq \ell(f) \\
             & \text{for all } e \in E.
        \end{aligned}
    \end{align}
    Consider the binary vector given by $\gamma_e = 1$ if $e \in S_{\lambda_*}$ or $0$ otherwise; note that this vector satisfies the constraints of Program~(\ref{eq:lovasz}). Hence, we can see that the minimum value for Program~(\ref{eq:lovasz}) provides a lower bound on $f(S_{\lambda_*})$.

    Let $\boldsymbol{\gamma}$ be a minimizer of Program~(\ref{eq:lovasz}), and let
    \[
        S = \left\{e \in E \colon \gamma_e \geq \frac{1}{2}\right\}.
    \]
    The first constraint of Program~\ref{eq:lovasz} ensures that $S$ contains at least one member of every pair of overlapping and distinctly colored hyperedges.
    Thus, it is trivial to compute a coloring $\lambda$ with $S_\lambda \subseteq S$.
    It follows from monotonicity that $f(S_\lambda) \leq f(S)$, and that for each $\rho \leq \frac{1}{2}$, $f(S) \leq f(T_\rho(\boldsymbol{\gamma}))$.
    Putting it all together, we have

    \[
        \frac{f(S_\lambda)}{2} \leq \frac{f(S)}{2} = \int_{0}^{\frac{1}{2}} f(S) \ \mathrm{d}\rho \leq \int_{0}^{\frac{1}{2}} f(T_\rho(\boldsymbol{\gamma})) \ \mathrm{d}\rho \leq \int_{0}^{1} f(T_\rho(\boldsymbol{\gamma})) \ \mathrm{d}\rho = \hat{f}(\boldsymbol{\gamma}) \leq f(S_{\lambda^*}),
    \]
    which completes the proof.

\end{proof}

As $p$ increases, the $\ell_p$-norm on $\textbf{m}$ converges to the maximum component value in $\textbf{m}$. 
Thus, for $p \rightarrow \infty$ we recover a mini-max variant of \ECC{}, for which the objective is to minimize the maximum number of unsatisfied edges of any color.
We will discuss this variant in more detail in~\Cref{sec:color-fair}, but first we pause to note that this convergence of the $\ell_p$ objective
implies that we cannot hope to obtain a better-than $2$-approximation by rounding fractional solutions to Program~(\ref{eq:p-mean-ecc}).
We need only consider a triangle on three vertices $v_1, v_2, v_3$ with distinct edge colors $c_{12}, c_{23}, c_{13}$. The optimal mini-max objective value is $1$, but
a fractional relaxation of Program~(\ref{eq:p-mean-ecc})
can achieve an objective value of $\frac{1}{2}$ by setting $d_{v_1}^{c_{12}} = d_{v_1}^{c_{13}} = d_{v_2}^{c_{12}} = d_{v_1}^{c_{23}} = d_{v_3}^{c_{13}} = d_{v_3}^{c_{23}} = \frac{1}{2}$.

\begin{restatable}{theorem}{pmeanintegralitygap}
    As $p \rightarrow \infty$, the integrality gap of Program~(\ref{eq:p-mean-ecc}) converges to $2$. 
\end{restatable}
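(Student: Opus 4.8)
The plan is to pin the integrality gap $g(p)$ of (the relaxation of) Program~\eqref{eq:p-mean-ecc} between $2\cdot 3^{-1/p}$ and $2$ for every $p\ge 1$, and then take $p\to\infty$. The upper bound is immediate from \Cref{thm:pmean-approximation}: the rounding procedure analyzed there is instance-wise, i.e.\ from \emph{any} optimal fractional solution of a given instance it constructs an integral coloring whose objective is at most twice the fractional optimum of that same instance. Hence $\mathrm{OPT}_{\mathrm{IP}}\le 2\,\mathrm{OPT}_{\mathrm{LP}}$ for every instance, so $g(p)\le 2$ for all $p\ge 1$; nothing new is needed here.

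For the matching lower bound I would reuse the triangle instance described just before the theorem: vertices $v_1,v_2,v_3$, edges $e_{12},e_{13},e_{23}$ carrying pairwise distinct colors $c_{12},c_{13},c_{23}$, so $k=3$. First, no integral coloring satisfies two of these edges, since satisfying $e_{ij}$ forces $\lambda(v_i)=\lambda(v_j)=c_{ij}$ and the colors are distinct; thus every integral solution has at least two of the counts $m_{c_{12}},m_{c_{13}},m_{c_{23}}$ equal to $1$, giving $\mathrm{OPT}_{\mathrm{IP}}\ge 1$ (in fact $\mathrm{OPT}_{\mathrm{IP}}=2^{1/p}$, but $\ge 1$ suffices). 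Second, the fractional point that sets $d_v^{c}=\tfrac12$ for the two colors of the edges containing $v$, raises the one remaining $d$-variable of each vertex to $1$ (harmless, since that vertex lies in no edge of that color, and necessary to meet the constraint $\sum_c d_v^c\ge k-1=2$), and sets $\gamma_e=\tfrac12$ for every edge and $m_c=\tfrac12$ for every color, is feasible with objective $\big(3\cdot(\tfrac12)^p\big)^{1/p}=\tfrac12\cdot 3^{1/p}$. Therefore $\mathrm{OPT}_{\mathrm{LP}}\le \tfrac12\cdot 3^{1/p}$, and
\[
  g(p)\ \ge\ \frac{\mathrm{OPT}_{\mathrm{IP}}}{\mathrm{OPT}_{\mathrm{LP}}}\ \ge\ \frac{1}{\tfrac12\cdot 3^{1/p}}\ =\ 2\cdot 3^{-1/p}.
\]

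Combining the two bounds gives $2\cdot 3^{-1/p}\le g(p)\le 2$ for all $p\ge 1$, and since $3^{-1/p}\to 1$ as $p\to\infty$, the squeeze theorem yields $g(p)\to 2$. The argument is short, and there is no real obstacle; the only points deserving care are (i) noting that \Cref{thm:pmean-approximation} gives a per-instance rounding guarantee, so it genuinely bounds the integrality gap rather than merely the ratio of one algorithm, and (ii) verifying that the exhibited fractional solution satisfies the constraints $\sum_c d_v^c\ge k-1$, which is precisely why the third $d$-variable at each vertex must be set to $1$ instead of left at $0$.
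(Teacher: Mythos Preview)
The proposal is correct and takes essentially the same approach as the paper: the upper bound $g(p)\le 2$ comes from the per-instance rounding guarantee of \Cref{thm:pmean-approximation}, and the lower bound uses the same three-colored triangle instance described immediately before the theorem. Your treatment is slightly more explicit than the paper's---you carry out the finite-$p$ computation $g(p)\ge 2\cdot 3^{-1/p}$ and then squeeze, whereas the paper simply quotes the $p=\infty$ values (integral optimum $1$, fractional optimum $\tfrac12$)---and you correctly flag the need to set the third $d$-variable at each vertex to $1$ to meet the constraint $\sum_c d_v^c\ge k-1$, a detail the paper leaves implicit.
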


\subsection{Color-fair ECC variants}\label{sec:color-fair}

Now we focus on the mini-max case of~\pmeanECC{} given by $p \rightarrow \infty$, which we refer to as \cfminECC{}.
In the following, $M_c^\lambda$ indicates the number of edges of color $c$ which are unsatisfied by a coloring $\lambda$.

\problembox{\cfminECC{}}{A $k$-edge-colored hypergraph $H = (V, E, \ell)$ and an integer $\tau$.}{Does there exist a vertex coloring $\lambda$ of $H$ with $\max_{c \in [k]} M_c^\lambda \leq \tau$?}

The optimization problem asks for a coloring $\lambda$ which minimizes $\max_{c \in [k]} M_c^\lambda$.
Also of interest is the corresponding maxi-min variant~\cfmaxECC{}, for which the question is whether there exists a coloring which satisfies at least $\tau$ edges of \emph{every} color.
These problems are not equivalent at optimality in general, though they are when restricted to hypergraphs with an equal number of edges of every color.
We begin by establishing hardness for both problems.
%
\begin{restatable}{theorem}{cfmineccNPhard}\label{thm:cfminecc-NPhard}
    \cfminECC{} is \cclass{NP}-hard even when restricted to subcubic trees with cutwidth~$2$, exactly $3$ edges of each color, and $\tau = 2$. \cfmaxECC{} is \cclass{NP}-hard even when restricted to paths with $\tau = 1$.
\end{restatable}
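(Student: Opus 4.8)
The plan is to prove both parts by polynomial-time many-one reductions, after a reformulation that makes the objectives combinatorial. An edge $e=(u,v)$ of color $c$ is satisfied exactly when $\lambda(u)=\lambda(v)=c$, and two incident edges of different colors can never both be satisfied; hence a coloring satisfies at least one edge of every color if and only if one can choose one edge per color class with the chosen edges pairwise non-incident (given such a choice, color each chosen edge's endpoints with its color; conversely, the chosen satisfied edges are automatically pairwise non-incident). For \cfmaxECC{} with $\tau=1$ this is exactly the question, and for \cfminECC{} restricted to exactly three edges per color a color has at most $2$ unsatisfied edges iff it has at least one satisfied edge, so the \cfminECC{} instance with $\tau=2$ is a yes-instance iff such a pairwise non-incident system of representatives exists. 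On a path this asks whether the line graph (itself a path) has an independent set meeting every color class.

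I would reduce from a bounded-occurrence variant of \textsc{3-SAT} (each variable occurring a fixed number of times, still \cclass{NP}-hard), which is what lets a clause color occur in exactly three edges (one per literal) and the variable-related colors occur in exactly three edges too; this is the source of the ``exactly three edges per color'' and ``$\tau=2$'' restrictions. The construction uses three gadget types on a spine. A \emph{variable gadget} for $x_i$ has a ``hinge'' of two incident edges of a fresh color $A_i$, so that covering $A_i$ forces exactly one of them chosen, encoding a true/false commitment. From each side of the hinge runs a \emph{cascade chain} of fresh colors (each arranged to have exactly three occurrences) whose covering choices propagate deterministically once the hinge edge on that side is chosen, thereby blocking a prescribed set of \emph{output edges} on that side. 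Each output edge carries the color of one clause containing $x_i$; positive occurrences go on one side of the hinge and negative ones on the other, so an output edge for a clause $Q$ is available exactly when the corresponding literal is true. The crucial point is that no long-distance ``wire'' is needed: the coupling between a clause literal and the variable's commitment is purely local (adjacency within that variable's gadget), while a clause color merely gathers its three output edges wherever the relevant variable gadgets happen to sit. Thus a clause color is coverable iff at least one of its literals is true, so the instance is a yes-instance iff the formula is satisfiable. The \cfmaxECC{} statement is obtained by specializing the spine to a literal path (cutwidth $1$); the \cfminECC{} statement by hanging cascades off a caterpillar-style spine with degree-$3$ branch vertices and placing every pendant next to its attachment in the linear layout, keeping maximum degree $3$ and cutwidth $2$.

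Correctness is the usual two directions. Given a satisfying assignment, pick the committed hinge edge in each variable gadget, run each cascade consistently, and a true-literal output edge remains free in every clause, giving a valid system of representatives. Conversely, a feasible system covers every $A_i$ (pinning down $x_i$'s value) and every cascade color (forcing the cascade to behave as designed), hence encodes a consistent assignment satisfying all clauses. Both constructed instances are trivially in \cclass{NP}.

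The main obstacle I expect is the design and analysis of the cascade gadget: one must verify that setting the hinge forces the cascade along its entire length so the intended output edges really are blocked, and --- more delicately --- that the adversary cannot deviate from the forced behavior (covering a cascade color out of order, or selecting an output edge anyway) to cover a clause color whose literals are all false. This has to be argued while simultaneously keeping every color at exactly three occurrences, the graph subcubic, and the cutwidth equal to $2$; realizing the clause and cascade colors with exactly three edges each, without sneaking in an edge the adversary can exploit, is the part that needs the most care.
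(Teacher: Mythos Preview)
Your reformulation---that the question reduces to choosing one edge per color class so that the chosen edges are pairwise non-incident---is correct and is exactly the lens the paper uses implicitly. The source problem (a bounded-occurrence SAT variant) is also the right one.

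Where your plan diverges is in the gadget architecture, and here the paper's construction is both simpler and complete where yours is not. You propose a fresh ``hinge'' color $A_i$ per variable plus cascade chains of further fresh colors to propagate the commitment to output edges. The paper needs \emph{none} of this machinery: there are no variable colors and no cascades. Instead, for each variable $x_i$ and each ordered pair of clauses $(C_{j_1},C_{j_2})$ with $x_i\in C_{j_1}$ and $\neg x_i\in C_{j_2}$, it creates a single \emph{conflict vertex} $v^i_{j_1,j_2}$, and the clause edge $e^i_{j_1}$ (color $c_{j_1}$) and the clause edge $e^i_{j_2}$ (color $c_{j_2}$) are made to share that vertex. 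Thus the incompatibility between using $x_i$ to satisfy $C_{j_1}$ and using $\neg x_i$ to satisfy $C_{j_2}$ is encoded \emph{directly} by incidence of the two clause edges, with no auxiliary colors at all. Consistency of the extracted assignment then follows for free: if any positive-literal edge for $x_i$ is chosen, every negative-literal edge for $x_i$ is blocked by a shared conflict vertex, so setting $x_i$ true is safe. The only extra colors the paper introduces are (i) a ``spare'' color per size-$2$ clause, used to pad that clause's color to three edges while forcing its spare edge to be unsatisfied, and (ii) bridge colors to connect components into a tree; both are local three-edge gadgets whose analysis is immediate.

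Your approach is not obviously wrong, but as you yourself flag, the cascade gadget is the whole proof and you have not designed it. Two concrete hazards: your hinge as described has only two edges of color $A_i$, not three, so you still owe a third edge placed where the adversary cannot exploit it; and cascades that deterministically propagate a choice while every cascade color has exactly three edges, the graph stays subcubic, and the layout has cutwidth~$2$, are genuinely delicate to get right. The paper sidesteps all of this by making the clause edges themselves carry the conflict, which is the idea you are missing.
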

\begin{proof}[Proof sketch.]
    Here, we give a short reduction which proves only the following weaker claim:
    
    \begin{claim}
        \cfminECC{} is \cclass{NP}-hard even when restricted to hypergraphs with exactly three edges of each color, and $\tau = 2$. \cfmaxECC{} is \cclass{NP}-hard on the same class of hypergraphs with $\tau = 1$.
    \end{claim}
    
    The full proof, which shows all of the structural restrictions in the theorem statement, uses essentially the same reduction, but with additional gadgeteering. We refer the interested reader to~\Cref{appendix:color-fair}.

    \bigskip
    We reduce from \threeSAT{}, for which the input is a formula written in conjunctive normal form, consisting of $m$ clauses $C_1, C_2, \ldots, C_m$ over $n$ boolean variables $x_1, x_2, \ldots x_n$.
    For each variable $x_i$, we write $x_i$ and $\neg x_i$ for the corresponding positive and negative literals. We assume that every clause contains three literals corresponding to three distinct variables, and that for each variable $x_i$, both literals of $x_i$ appear in the formula.
    We construct an instance $(H = (V, E), \tau = 2)$ of \cfminECC{} as follows.

    For each clause $C_j$, we create a unique color $c_j$.
    Next, for each variable $x_i$, we create a vertex $v_{j_1, j_2}^i$ for each (ordered) pair of clauses $C_{j_1}, C_{j_2}$ with $C_{j_1}$ containing the positive literal $x_i$ and $C_{j_2}$ containing the negative literal $\neg x_i$.
    We say that $v_{j_1, j_2}^i$ is associated with the variable $x_i$ and the variable-clause pairs $(x_i, C_{j_1})$ and $(x_i, C_{j_2})$. Observe that if variable $x_i$ is contained in clause $C_j$, then $(x_i, C_j)$ has at least one associated vertex.

    Now we create edges. For each clause $C_j$ we create one edge $e_j^i$ of color $c_j$ for each variable $x_i$ contained in $C_j$. This edge contains every vertex associated with the variable-clause pair $(x_i, C_j)$.
    We note that some hyperedges may have size one. To exclude this possibility, we need only introduce an auxiliary vertex for each such hyperedge, which is contained in no other edges.
    We say that $e_j^i$ is \emph{positive} for $x_i$ if the positive literal $x_i$ appears in $C_j$, and \emph{negative} for $x_i$ otherwise.
    We set $\tau = 2$.
    This completes the construction.
    Observe that we have created exactly three hyperedges of each color, and that each hyperedge contains vertices associated with exactly one variable.
    It follows that hyperedges $e$ and $f$ overlap if and only if they are distinctly colored, contain vertices associated with the same variable $x_i$, and (without loss of generality) $e$ is positive for $x_i$ while $f$ is negative for $x_i$.   

    It remains to show that the reduction is correct.
    For the first direction, assume that there exists an assignment $\phi$ of boolean values to the variables $x_1, x_2, \ldots, x_n$ which satisfies every clause. We say that the assignment $\phi$ \emph{agrees} with a variable-clause pair $(x_i, C_j)$ if $C_j$ contains the positive literal $x_i$ and $\phi(x_i) = \textsf{True}$ or if $C_j$ contains the negative literal $\neg x_i$ and $\phi(x_i) = \textsf{False}$.
    We color the vertices of our constructed graph as follows.
    For each vertex $v_{j_1, j_2}^i$, if $\phi$ agrees with $(x_i, C_{j_1})$ we assign color $c_{j_1}$ to $v_{j_1, j_2}^i$, and otherwise we assign color $c_{j_2}$.
    Observe that this coloring satisfies an edge $e_j^i$ if and only if $\phi$ agrees with $(x_i, C_j)$.
    Moreover, because $\phi$ is satisfying,
    every clause $C_j$ contains at least one variable $x_i$ such that $\phi$ agrees with $(x_i, C_j)$.
    Hence, at least one edge of every color is satisfied. Because there are exactly three edges of every color, there are at most two unsatisfied edges of any color.

    For the other direction, assume that we have a coloring which leaves at most two edges of any color unsatisfied. We will create a satisfying assignment $\phi$.
    For each variable $x_i$, we set $\phi(x_i) = \textsf{True}$ if any edge which is positive for $x_i$ is satisfied, and $\phi(x_i) = \textsf{False}$ otherwise.
    We now show that $\phi$ is satisfying. Consider any clause $C_j$. There are exactly three edges with color $c_j$, and at least one of them is satisfied.
    Let $x_i$ be the corresponding variable, so the satisfied edge is $e_j^i$.
    If $C_j$ contains the positive literal $x_i$, then $e_j^i$ is positive for $x_i$ and so $\phi(x_i) = \textsf{True}$. Hence, $C_j$ is satisfied by $\phi$.
    Otherwise $C_j$ contains the negative literal $\neg x_i$, and $e_j^i$ is negative for $x_i$.
    In this case, we observe that every edge which is positive for $x_i$ intersects with $e_j^i$, and none of these edges has color $c_j$ since
    $C_j$ does not contain both literals. Hence, the satisfaction of $e_j^i$ implies that every edge which is positive for $x_i$ is unsatisfied.
    It follows that $\phi(x_i) = \textsf{False}$, meaning $C_j$ is satisfied by $\phi$.

    The result for~\cfmaxECC{} follows from the same construction, setting $\tau = 1$ instead of~$2$. Because there are exactly three edges of every color, the analysis is identical.
\end{proof}
\Cref{thm:cfminecc-NPhard} implies that it is \cclass{NP}-hard to provide any multiplicative approximation for \cfmaxECC{}, since it is \cclass{NP}-hard even to decide whether the optimum objective value is non-zero.
Our reduction also rules out efficient algorithms in graphs of bounded feedback edge number, cutwidth, treewidth + max. degree, or slim tree-cut width, in stark contrast to fixed-parameter tractability (FPT) results obtained for the standard \ECC{} problem by~\citet{kellerhals2023parameterized}.
On the positive side, we can give an algorithm parameterized by the total number $t$ of unsatisfied edges.
%
\begin{restatable}{theorem}{cfminECCFPT}\label{thm:cfminecc-FPT}
    \cfminECC{} is FPT with respect to the total number $t$ of unsatisfied edges.
\end{restatable}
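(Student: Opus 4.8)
The plan is to give a branching algorithm on the conflict structure of the edge-colored hypergraph. Recall (as noted in the excerpt) that a coloring $\lambda$ is fully characterized, up to irrelevant choices, by the set $S_\lambda \subseteq E$ of edges it leaves unsatisfied, and that the pairs of edges that \emph{cannot both be satisfied} are exactly the overlapping, distinctly-colored pairs $(e,f)$ — i.e. the edges of the conflict graph $G$. So \cfminECC{} with total-unsatisfied budget $t$ is equivalent to asking: is there a vertex cover $S$ of $G$ with $|S| \le t$ such that, additionally, the induced per-color counts satisfy $\max_{c} |S \cap E_c| \le \tau^\star$ for the value $\tau^\star$ we are optimizing? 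Since we are parameterizing by $t$ and not by $\tau^\star$, the cleanest route is: for each guess of $\tau \in \{0,1,\dots,t\}$ (only $t+1$ guesses), decide whether there is a set $S$ of size at most $t$ hitting every conflict pair with at most $\tau$ edges of each color; then return the smallest feasible $\tau$.

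The core subroutine is a bounded-depth branching. First I would compute the conflict graph $G$ on vertex set $E$ (this is polynomial: $O(|E|^2 r)$ time suffices to test overlaps). Then run the standard vertex-cover branching: find any conflict edge $(e,f)$; in any solution $S$ of size $\le t$, at least one of $e,f$ lies in $S$, so branch into the two cases ``$e \in S$'' and ``$f \in S$,'' each time decrementing the budget $t$ and the per-color budget for the chosen edge's color, and deleting the chosen vertex from $G$. Kill a branch if the budget $t$ drops below $0$ or any per-color count would exceed $\tau$. If we reach a point with no conflict edges remaining and budget $\ge 0$, we have found a valid $S$; the remaining (unchosen) edges are all pairwise compatible on each color, so a consistent coloring $\lambda$ with $S_\lambda \subseteq S$ exists and can be read off in polynomial time. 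The branching tree has depth at most $t$ and branching factor $2$, giving at most $2^t$ leaves, each processed in $\mathrm{poly}(|H|)$ time; multiplying by the $O(t)$ guesses for $\tau$ yields total running time $O(2^t \cdot t \cdot \mathrm{poly}(|H|))$, which is FPT in $t$.

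The one place that needs care — and the main obstacle — is the final \emph{reconstruction} step: having fixed a set $S$ of "sacrificed" edges, one must argue that the edges in $E \setminus S$ can genuinely be simultaneously satisfied by a single vertex coloring, not merely pairwise. This is where the hypergraph-to-conflict-graph correspondence does its real work: if $S$ is a vertex cover of $G$, then for any vertex $v \in V$, all edges of $E \setminus S$ containing $v$ are pairwise non-conflicting, hence pairwise identically colored (two edges through a common vertex conflict iff their colors differ), so they all share one color — assign that color to $v$ (and assign vertices in no edge of $E \setminus S$ arbitrarily). Every edge of $E \setminus S$ is then satisfied, so $S_\lambda \subseteq S$ and the per-color unsatisfied counts are at most those of $S$. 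I would state this as a short lemma and verify it carefully; everything else is the routine vertex-cover branching analysis. A secondary (minor) point to check is that branching on an arbitrary conflict edge is sound even with the per-color caps present — it is, because the caps only ever prune branches, never invalidate the "one of $e,f$ must be chosen" dichotomy.
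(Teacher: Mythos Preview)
Your proposal is correct and is essentially the same as the paper's proof: both run the standard depth-$t$ vertex-cover branching on conflict pairs, pruning whenever a per-color cap is exceeded, and both rely on the observation that once no conflicts remain the surviving edges can be simultaneously satisfied. The only (harmless) difference is that you add an outer loop guessing $\tau$, whereas in the paper $\tau$ is part of the decision input and no such loop is needed; you also spell out the reconstruction lemma that the paper leaves implicit.
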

\begin{proof}
    We give a branching algorithm.
    Given an instance $(H = (V, E), \tau)$ of \cfminECC{}, a \emph{conflict} is a triple $(v, e_1, e_2)$ consisting of a single vertex $v$ and a pair of distinctly colored hyperedges $e_1, e_2$ which both contain $v$.
    If $H$ contains no conflicts, then it is possible to satisfy every edge.
    Otherwise, we identify a conflict in $O(r|E|)$ time by scanning the set of hyperedges incident on each node.
    Once a conflict $(v, e_1, e_2)$ has been found, we branch on the two possible ways to resolve this conflict: deleting $e_1$ or deleting $e_2$.
    Here, deleting a hyperedge has the same effect as ``marking'' it as unsatisfied and no longer considering it for the duration of the algorithm.
    We note that it is simple to check in constant time whether a possible branch violates the constraint given by $\tau$; these branches can be pruned.
    Because each branch increases the number of unsatisfied hyperedges by 1, the search tree has depth at most $\edgedeletions$.
    Thus, by computing the search tree in level-order, the algorithm runs in time $O(2^{\edgedeletions}r|E|)$.
\end{proof}
One question left open by the reduction of~\Cref{thm:cfminecc-NPhard} is whether \cfminECC{} is hard when the number~$k$ of colors is bounded.
The standard~\ECC{} objective is in \cclass{P} when $k \leq 2$ and \cclass{NP}-hard whenever $k \geq 3$~\cite{amburg2020clustering}.
We resolve this question via an intermediate hardness result for a \textsc{Vertex Cover} variant which may be of independent interest.
Specifically, we show hardness for the following:

\problembox{\fairbipartiteVC{}}{A bipartite graph $G = (V = A \uplus B, E)$ and an integer $\alpha$.}{Does there exist a vertex cover $C$ of $G$ with $\max\{|C \cap A|, |C \cap B|\} \leq \alpha$?}
\begin{restatable}{theorem}{fairbipartitevertexcoverNPhard}\label{thm:fairbipartiteVC-NPhard}
    \fairbipartiteVC{} is \cclass{NP}-hard.
\end{restatable}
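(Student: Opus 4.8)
The plan is to reduce from the classical \textsc{Balanced Complete Bipartite Subgraph} problem (also known as \textsc{Balanced Biclique}): given a bipartite graph $H = (A \uplus B, F)$ and an integer $k \ge 1$, decide whether $H$ contains $K_{k,k}$ as a subgraph. This problem is well known to be \cclass{NP}-complete. Given such an instance, I would first append isolated vertices to the smaller side so that $|A| = |B| = N$; since no isolated vertex can participate in a $K_{k,k}$ with $k \ge 1$, this does not change the answer. I then take $G$ to be the \emph{bipartite complement} $\overline{H}$ --- the bipartite graph on the same parts with $\{a,b\} \in E(G)$ exactly when $a \in A$, $b \in B$, and $\{a,b\} \notin F$ --- and set the budget $\alpha = N - k$. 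The reduction outputs the \fairbipartiteVC{} instance $(G, \alpha)$, which is clearly constructed in polynomial time.

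Correctness would follow from the standard complementarity between vertex covers and independent sets together with the defining property of the bipartite complement. A subset $C$ is a vertex cover of $G$ with $\max\{|C \cap A|,\, |C \cap B|\} \le \alpha$ if and only if $I = V(G) \setminus C$ is an independent set of $G$ with $|I \cap A| \ge N - \alpha = k$ and $|I \cap B| \ge k$. Writing $S = I \cap A$ and $T = I \cap B$: since every pair $\{a,b\}$ with $a \in A$, $b \in B$ is an edge of exactly one of $H$ and $G$, independence of $I$ in $G$ means every cross pair in $S \times T$ is an edge of $H$. Hence an independent set of $G$ with at least $k$ vertices on each side yields (by selecting any $k$-element subsets of $S$ and $T$) a $K_{k,k}$ in $H$; conversely, a $K_{k,k}$ in $H$ with parts $S \subseteq A$, $T \subseteq B$ induces no edges of $G$ between $S$ and $T$, so $S \cup T$ is an independent set of $G$ with exactly $k$ vertices on each side, whose complement is a fair vertex cover with both intersection sizes equal to $N - k = \alpha$. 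Thus $(G, \alpha)$ is a yes-instance of \fairbipartiteVC{} iff $(H, k)$ is a yes-instance of \textsc{Balanced Biclique}, establishing \cclass{NP}-hardness. (Membership in \cclass{NP} is immediate, so \fairbipartiteVC{} is in fact \cclass{NP}-complete.)

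There is no deep obstacle here: once the right source problem is identified, the argument is a short verification. The only points needing a little care are the padding step --- necessary because the single budget $\alpha$ must simultaneously enforce both $|I \cap A| \ge k$ and $|I \cap B| \ge k$, which forces the two sides to have equal size --- and the routine passage from ``at least $k$'' to ``exactly $k$'' vertices on each side. If one prefers an argument that does not invoke \textsc{Balanced Biclique}, a self-contained reduction from \threeSAT{} or \textsc{Clique} is also possible, at the cost of explicit gadgeteering. Finally, this lemma is the crux of the downstream bounded-$k$ hardness: the conflict graph of a $2$-edge-colored hypergraph is bipartite (one part per color), and hardness of fair vertex cover on such graphs transfers to \cfminECC{} with $k = 2$.
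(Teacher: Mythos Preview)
Your proof is correct but takes a genuinely different route from the paper. The paper reduces from \textsc{Constrained Bipartite Vertex Cover} (two separate budgets $\alpha_a,\alpha_b$; NP-complete by Kuo and Fuchs): assuming $\beta=\alpha_b-\alpha_a>0$, it attaches $\beta$ stars of degree $\alpha_b+1$ to the $A$-side so that any fair cover must include all $\beta$ star centers, thereby ``using up'' the excess budget and reducing the asymmetric-budget question to the symmetric one with $\alpha=\alpha_b$. Your reduction from \textsc{Balanced Biclique} via bipartite complementation is cleaner: it needs no gadgets, only the standard vertex-cover/independent-set duality together with the observation that bicliques in $H$ correspond exactly to cross-independent sets in its bipartite complement. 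The padding to equal-sized parts plays the same role as the paper's star gadgets---making a single symmetric budget $\alpha$ enforce both side constraints---but does so essentially for free. Both arguments are short; yours is arguably more elegant and self-contained, while the paper's stays entirely within the vertex-cover world, which dovetails with its later reuse of \textsc{Constrained Bipartite Vertex Cover} for the multiple-protected-colors hardness.
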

\begin{proof}
    We reduce from \constrainedbipartiteVC{}, for which the input is a bipartite graph $G = (A \uplus B, E)$ along with two integers $\alpha_a$, $\alpha_b$, and the question is whether there exists a vertex cover $C$ with $|C \cap A| \leq \alpha_a$ and $|C \cap B| \leq \alpha_b$.
    This problem was shown to be \cclass{NP}-complete by~\citet{kuo1987efficient}.

    Given an instance $(G = (A \uplus B, E), \alpha_a, \alpha_b)$ of \constrainedbipartiteVC{}, we construct an instance $(G' = (A' \uplus B', E'), \alpha)$ of \fairbipartiteVC{} as follows.
    First, we observe that if $\alpha_a = \alpha_b$ then $(G, \alpha_a, \alpha_b)$ is already an instance of \fairbipartiteVC{}, in which case there is nothing to do.
    We therefore set $\beta = \alpha_b - \alpha_a$ and assume without loss of generality that $\beta > 0$.
    We begin by copying $G'$. That is, for each vertex $a \in A$ we create a vertex $a' \in A'$, for each vertex $b \in B$ we create a vertex $b' \in B'$, and for each edge $ab \in E$ we create an edge $a'b' \in E'$. We call
    the vertices (and edges) that we have created thus far \emph{original} vertices (and edges).
    Next, we create $\beta$ auxiliary vertices $\{a_i \ | \ 1 \leq i \leq \beta \} \subset A'$, and $\beta(\alpha_b + 1)$ auxiliary vertices $\{ b_{ij} \ | \ 1 \leq i \leq \beta, 1 \leq j \leq \alpha_b + 1 \} \subset B'$.
    We also add auxiliary edges $\{a_ib_{ij} \ | \ 1 \leq i \leq \beta, 1 \leq j \leq \alpha_b + 1 \} \subset E'$. Finally, we set $\alpha = \alpha_b$. This concludes the construction.

    It remains to show that the reduction is correct. For the first direction, assume that $C \subseteq V$ is a vertex cover of $G$ with $|C \cap A| \leq \alpha_a$ and $|C \cap B| \leq \alpha_b$.
    We construct a vertex cover $C'$ of $G'$ as follows. For each vertex $a \in C \cap A$, we add $a'$ to $C$, and for each vertex $b \in C \cap B$, we add $b'$ to $C$. We also add the auxiliary vertex $a_i$ to $C$, for each $1 \leq i \leq \beta$.
    Because $C$ is a vertex cover of $E$, $C'$ covers every original edge in $E'$. Because every auxiliary edge in $E'$ is incident on some auxiliary vertex $a_i$, $C'$ also covers all auxiliary edges in $E'$.
    Hence, $C'$ is a vertex cover of $G'$. Moreover, by construction $|C' \cap B'| \leq \alpha_b = \alpha$, and $|C' \cap A'| \leq \alpha_a + \beta = \alpha$.

    For the other direction, assume that there exists a vertex cover $C'$ of $G'$ with $|C' \cap A'|, |C' \cap B'| \leq \alpha$.
    We will construct a vertex cover $C$ of $G$.
    We begin by observing that $C'$ contains every auxiliary vertex $a_i$, for $1 \leq i \leq \beta$.
    Suppose otherwise, i.e., that for some fixed $i$ we have $a_i \notin C'$. Then each of the $\alpha_b + 1$ auxiliary vertices $b_{ij}$, for $1 \leq j \leq \alpha_b + 1$, is contained in $C'$.
    Since $\alpha = \alpha_b$, this contradicts that $|C' \cap B'| \leq \alpha$. So, we conclude that $a_i \in C'$ for each $1 \leq i \leq \beta$.
    We now also assume that $C' \cap B'$ consists entirely of original vertices, since the vertex set resulting from the removal of an auxiliary vertex $b_{ij}$ is still a cover, as it contains $a_i$.
    Now, for each $b' \in C'$, we add $b$ to $C$, and for each original $a' \in C'$, we add $a$ to $C$. Because the auxiliary vertices of $G'$ are not incident to any original edges, the original vertices of $C'$ must cover
    all original edges. Hence, $C$ is a vertex cover of $G$. Since $|C' \cap A'| \leq \alpha = \alpha_a + \beta$ and $C' \cap A'$ contains $\beta$ auxiliary vertices, $C' \cap A'$ must contain no more than $\alpha_a$ original vertices.
    Similarly, $C' \cap B'$ contains at most $\alpha = \alpha_b$ original vertices.
    These properties allow us to conclude that $|C \cap A| \leq \alpha_a$ and $|C \cap B| \leq \alpha_b$, as desired.
\end{proof}
\Cref{thm:fairbipartiteVC-NPhard} is useful to us because we can now reduce from \fairbipartiteVC{} to show hardness for \cfminECC{} and \cfmaxECC{} even when the number $k$ of colors is bounded.

\begin{restatable}{theorem}{cfmineccNPhardboundedk}\label{thm:cfminecc-NPhard-k2}
    \cfminECC{} and \cfmaxECC{} are \cclass{NP}-hard even in 2-regular hypergraphs with $k = 2$.
\end{restatable}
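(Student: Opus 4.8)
The plan is to reduce from \fairbipartiteVC{}, which is \cclass{NP}-hard by \Cref{thm:fairbipartiteVC-NPhard}. Given an instance $(G = (A \uplus B, E), \alpha)$, I would first observe that we may assume $|A| = |B| = N$: padding the smaller side with isolated vertices never changes the optimal value of $\max\{|C \cap A|, |C \cap B|\}$, since an optimal cover contains no isolated vertex. Now build an edge-colored hypergraph $H$ on the color set $\{1,2\}$ as follows: for each $a \in A$ create a hyperedge $e_a$ with $\ell(e_a) = 1$; for each $b \in B$ create a hyperedge $e_b$ with $\ell(e_b) = 2$; and for each edge $ab \in E$ create a single vertex $w_{ab}$, placed in exactly the two hyperedges $e_a$ and $e_b$. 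Every vertex of $H$ lies in exactly two hyperedges, so $H$ is $2$-regular, and it uses only $k = 2$ colors. (Isolated vertices of $G$ yield empty hyperedges, which are vacuously always satisfied; if one insists on hyperedges of size at least two, it suffices to assume $G$ has minimum degree at least two, or to apply a padding that respects $2$-regularity.) The construction is polynomial, and since hyperedges of the same color never overlap, $e_a$ and $e_b$ conflict (overlap with distinct colors) precisely when $ab \in E$ --- the same conflict-graph correspondence used elsewhere in the paper.

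For correctness of the reduction to \cfminECC{} with threshold $\tau = \alpha$, I would argue both directions through the set of unsatisfied hyperedges. Given a vertex cover $C$ of $G$ with $\max\{|C \cap A|, |C \cap B|\} \leq \alpha$, without loss of generality containing no isolated vertex, assign to every $ab \in E$ a covering endpoint $\chi(ab) \in C \cap \{a,b\}$, preferring the $A$-endpoint when both lie in $C$, and color $w_{ab}$ with the color \emph{not} equal to $\ell(e_{\chi(ab)})$. A short check then shows that $e_a$ is left unsatisfied if and only if $a \in C$ (so $M_1^\lambda = |C \cap A| \leq \alpha$), while $e_b$ being unsatisfied \emph{implies} $b \in C$ (so $M_2^\lambda \leq |C \cap B| \leq \alpha$); hence $\max_c M_c^\lambda \leq \alpha$. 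Conversely, from any coloring $\lambda$ with $\max_c M_c^\lambda \leq \alpha$, let $C$ consist of every $a$ with $e_a$ unsatisfied and every $b$ with $e_b$ unsatisfied, so $|C \cap A| = M_1^\lambda \leq \alpha$ and $|C \cap B| = M_2^\lambda \leq \alpha$. For any $ab \in E$, the vertex $w_{ab}$ has a color mismatching one of $e_a, e_b$, forcing the corresponding endpoint into $C$; thus $C$ covers $E$ and witnesses a ``yes'' for \fairbipartiteVC{}.

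For \cfmaxECC{}, I would reuse the exact same $2$-regular, $2$-colored hypergraph $H$. Because $|A| = |B| = N$ and empty hyperedges are always satisfied, the number of satisfied hyperedges of color $1$ equals $N - M_1^\lambda$ and of color $2$ equals $N - M_2^\lambda$. Hence a coloring satisfies at least $\tau$ hyperedges of every color if and only if $\max\{M_1^\lambda, M_2^\lambda\} \leq N - \tau$, and taking $\tau = N - \alpha$ makes this condition identical to the \cfminECC{} condition already analyzed. Therefore the same reduction establishes \cclass{NP}-hardness of \cfmaxECC{} in $2$-regular hypergraphs with $k = 2$.

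I expect the main obstacle to be the forward direction of the \cfminECC{} correctness argument: the chosen-endpoint coloring controls the color-$1$ hyperedges exactly but gives only the one-sided bound $M_2^\lambda \leq |C \cap B|$ on the color-$2$ hyperedges, so one must verify this is the useful direction (it is, since we only need an upper bound on $\max_c M_c^\lambda$) and rule out the symmetric failure mode. The remaining care is purely in keeping the instance $2$-regular while sanitizing degenerate hyperedges, which is why allowing empty hyperedges --- equivalently, assuming minimum degree two in $G$ --- is the cleanest route.
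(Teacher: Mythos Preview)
Your proposal is correct and follows essentially the same reduction as the paper: build the conflict hypergraph of $G$ (one vertex per edge of $G$, one hyperedge per vertex of $G$, colored by side), set $\tau=\alpha$, and translate between vertex covers and sets of unsatisfied hyperedges. Your forward-direction coloring via the chosen-endpoint map $\chi$ is just a specific instantiation of the paper's rule ``color $v_{ab}$ to satisfy $e_a$ if $a\notin C$, to satisfy $e_b$ if $b\notin C$, else arbitrarily''; the two yield identical bounds.

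The one genuine difference is in the \cfmaxECC{} half. You balance $|E_1|=|E_2|$ up front by padding the smaller side of $G$ with isolated vertices, which become \emph{empty} hyperedges that are vacuously satisfied; this lets you reuse the identical hypergraph and simply set $\tau=N-\alpha$. The paper instead keeps $G$ isolate-free and balances the color classes by attaching small cycles of the under-represented color (and triangles in the $\beta\in\{1,2\}$ edge case), which preserves $2$-regularity without introducing empty or singleton hyperedges. Your route is shorter and perfectly valid if empty hyperedges are admitted in the hypergraph model (they do not violate $2$-regularity, which is a vertex-degree condition); the paper's gadgeteering buys robustness against the convention that hyperedges have size at least two. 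Your parenthetical fallback ``assume $G$ has minimum degree at least two'' is not immediately justified by the \fairbipartiteVC{} hardness proof as stated (the auxiliary $b_{ij}$ vertices there have degree~$1$), so if you want to avoid empty hyperedges you would need a cycle-style padding like the paper's rather than that assumption.
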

\begin{proof}
    Given an instance $(G = (A \uplus B, E), \alpha)$ of \fairbipartiteVC{}, we construct an instance
    $(H = (V, E_a \uplus E_b), \tau = \alpha)$ of \cfminECC{} with two colors ($c_A$ and $c_B$) as follows.
    For each edge $ab \in E$, we create a vertex $v_{ab} \in V$.
    We call this vertex a \emph{conflict vertex}, and we say that it is associated with $a \in A$ and with $b \in B$.
    Next, for each vertex $a \in A$, we create a hyperedge $e_a$ which contains every conflict vertex associated with $a$. We color this edge $c_A$.
    Similarly, for each vertex $b \in B$ we create a hyperedge $e_b$ of color $c_B$ which contains every conflict vertex associated with $b$.
    Observe that every hyperedge is nonempty, as we may assume that $G$ has no isolated vertices.
    Observe also that every vertex has degree~$2$.
    We set $\tau = \alpha$.

    To see that the reduction is correct, consider first a vertex cover $C$ of $G$ with $|C \cap A|, |C \cap B| \leq \alpha$.
    For each vertex $a \in A \setminus C$, we assign color $c_A$ to every conflict vertex associated with $a$.
    Similarly, for every $b \in B \setminus C$, we assign $c_B$ to every conflict vertex associated with $b$.
    We color remaining vertices arbitrarily. We claim that this is a valid coloring, i.e., every vertex has received exactly one color.
    To see this, observe that every vertex $v_{ab}$ in $V$ is a conflict vertex associated with exactly two vertices $a, b$ in $A \uplus B$.
    Since $C$ is a vertex cover, at least one of $a, b$ is in $C$. Hence, $v_{ab}$ is assigned exactly one color. We next claim that
    our assignment of colors leaves at most $\tau = \alpha$ hyperedges of any single color unsatisfied. Due to our coloring scheme,
    if a hyperedge $e_a$ (resp. $e_b$) associated with a vertex $a \in A$ (resp. $b \in B$) is unsatisfied, then
    $a$ (resp. $b$) is contained in $C$. Since $|C \cap A|, |C \cap B| \leq \alpha$, we conclude that at most $\tau = \alpha$ hyperedges
    of color $c_A$ (resp. $c_B$) are unsatisfied.

    For the other direction, suppose that we have a coloring $\lambda\colon V \rightarrow \{c_A, c_B\}$ which leaves at most $\tau = \alpha$ hyperedges
    of any single color unsatisfied. We construct a vertex cover $C$ of $G$ which contains $a \in A$ (resp. $b \in B$) if and only if the hyperedge associated with $a$ (resp. $b$) is unsatisfied by $\lambda$.
    It is immediate that $|C \cap A|, |C \cap B| \leq \alpha$.
    Now we claim that $C$ is a vertex cover of $G$.
    Consider any edge $ab \in E$. The conflict vertex $v_{ab}$ is contained in both $e_a$, which has color $c_A$, and $e_b$, which has color $c_B$.
    Thus, at least one of $e_a, e_b$ is unsatisfied by $\lambda$, and so $C$ contains at least one of $a,b$.
    Then $C$ is a vertex cover, as desired.

    A simple adjustment to our construction yields the claimed hardness result for \cfmaxECC{}.
    Let $\beta = |A| - |B|$, and assume without loss of generality that $\beta \geq 0$.
    If $\beta = 0$, then there are already an equal number of edges of colors $c_A$ and $c_B$, so we make no adjustment to the construction.
    If $\beta \geq 3$, we add an additional $\beta$ vertices $v_1, v_2, \ldots v_\beta$,
    and an additional $\beta$ hyperedges $e_1 = \{v_1, v_2\}, e_2 = \{v_2, v_3\}, \ldots, e_\beta = \{v_\beta, v_1\}$, constructing a cycle with $\beta$ vertices and edges (a $C_\beta$).
    We assign color $c_B$ to each of these hyperedges. We now observe that in the constructed hypergraph, there are an equal number of
    hyperedges of colors $c_A$ and $c_B$, and every vertex has degree~$2$.
    Finally, if $\beta \in \{1, 2\}$, we add $\beta + 6$ vertices $v_1, v_2 \ldots v_{\beta + 6}$. With three of these vertices we form a triangle
    with each edge having color $c_A$, and with the remaining $\beta + 3$ vertices we form a $C_{\beta+3}$ with each edge having color $c_B$.
    Once again, we observe that there are now an equal number of edges of colors $c_A$ and $c_B$ in the constructed hypergraph, and every vertex has degree~$2$.
    We set $\tau = |A| - \alpha$ if $\beta \geq 3$, or $\tau = |A| + 3 - \alpha$ otherwise.
    Correctness follows from a substantively identical analysis.
\end{proof}
We will now further examine connections between \cfminECC{} and variants of \textsc{Vertex Cover}.
The main idea of all of our reductions is similar to that of~\citet{veldt2023optimal}; we create a vertex for each hyperedge, and add edges for all pairs of distinctly colored overlapping hyperedges.
Recalling~\Cref{sec:generalized-mean}, we call this construction the \emph{conflict graph} associated with an edge-colored hypergraph.

We observe that in addition to the algorithm of~\Cref{thm:pmean-approximation}, a $2$-approximation for \cfminECC{} can also be obtained
by applying a result of~\citet{blum2022sparse} to the constructed conflict graph. Formally, we reduce to the following:

\optproblembox{\SVC{}}{A graph $G = (V = V_1 \cup V_2 \cup \ldots \cup V_k, E)$.}{Find a vertex cover $C \subseteq V$ of $G$ which minimizes $\max_{i \in [k]} \{|C \cap V_i|\}$.}
\begin{restatable}{proposition}{cfmineccreductiontosparsevertexcover}\label{thm:cfminecc-reduce-sparse-vc}
    \cfminECC{} admits a $2$-approximation via reduction to \SVC{}.
\end{restatable}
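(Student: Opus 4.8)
The plan is to make precise the standard correspondence between vertex colorings of an edge-colored hypergraph $H=(V,E,\ell)$ and vertex covers of its \emph{conflict graph} $G$, observe that the \cfminECC{} objective on $H$ translates exactly to the \SVC{} objective on $G$ (with the color classes playing the role of the parts $V_i$), and then invoke the $2$-approximation of~\citet{blum2022sparse} as a black box.

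First I would set up the reduction. Build $G = (V_G, E_G)$ with a vertex $v_e$ for each $e \in E$, set $V_i = \{v_e : \ell(e)=i\}$ for $i \in [k]$ (so $V_G = V_1 \cup \cdots \cup V_k$ is a genuine partition since each $e$ has one color), and add the edge $\{v_e,v_f\}$ exactly when $\ell(e)\neq\ell(f)$ and $e\cap f\neq\emptyset$; this takes polynomial time. Next I would establish the two halves of the correspondence. (i) For any coloring $\lambda$ of $H$, the set $S_\lambda \subseteq E$ of edges it leaves unsatisfied induces a vertex cover $\{v_e : e \in S_\lambda\}$ of $G$: if $\{v_e,v_f\}\in E_G$ then $e$ and $f$ share a node $w$ but have distinct colors, so $\lambda(w)$ disagrees with at least one of $\ell(e),\ell(f)$, hence at least one of $e,f$ is unsatisfied. (ii) Conversely, given any vertex cover $C \subseteq V_G$, I would produce in polynomial time a coloring $\lambda$ with $S_\lambda \subseteq \{e : v_e \in C\}$: call a hyperedge $e$ \emph{kept} if $v_e\notin C$; any two overlapping kept hyperedges must share a color (otherwise the corresponding edge of $G$ would be uncovered), so assigning each node $w$ the common color of the kept hyperedges containing it—and an arbitrary color if $w$ lies in no kept hyperedge—is well defined and satisfies every kept hyperedge.

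Combining (i) and (ii) with the identity $\max_{c\in[k]} M_c^\lambda = \max_{i\in[k]} |\{v_e\in C : \ell(e)=i\}|$ whenever $S_\lambda\subseteq C$ (using that $|\cdot\cap V_i|$ is monotone under taking subsets), the conclusion follows cleanly: applying (i) to an optimal \cfminECC{} coloring shows the optimum of \SVC{} on $G$ is at most the optimum of \cfminECC{} on $H$; running the algorithm of~\citet{blum2022sparse} on $G$ yields a cover $C$ with $\max_i|C\cap V_i| \le 2\cdot\mathrm{OPT}_{\SVC}(G)$; and converting $C$ to a coloring $\lambda$ via (ii) gives $\max_c M_c^\lambda \le \max_i |C\cap V_i| \le 2\cdot\mathrm{OPT}_{\SVC}(G) \le 2\cdot\mathrm{OPT}_{\cfminECC}(H)$.

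The main obstacle—more a matter of careful bookkeeping than of depth—is part (ii): one must check that deleting $C$ from the conflict graph really leaves a conflict-free subinstance of $H$, so the node-by-node color assignment is consistent, and that replacing the exact unsatisfied set of the constructed $\lambda$ by the (possibly larger) set $\{e : v_e\in C\}$ can only increase each per-color count, preserving the bound. Everything else is a direct translation of objectives plus the call to~\citet{blum2022sparse}.
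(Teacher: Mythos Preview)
Your proposal is correct and follows essentially the same approach as the paper: both build the conflict graph with color classes as the parts, establish the two-way correspondence between unsatisfied-edge sets and vertex covers, and then invoke the $2$-approximation of \citet{blum2022sparse} as a black box. Your treatment of part (ii) and the monotonicity remark are slightly more explicit than the paper's, but the argument is the same.
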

\begin{proof}
    Let $H = (V, E = E_1 \uplus E_2 \uplus \ldots \uplus E_k)$ be an edge-colored hypergraph with $k$ colors.
    We construct an instance $G = (V' = V_1' \cup V_2' \ldots V_k', E')$ of \SVC{}. For each hyperedge $e$ of color $i$, we create a vertex $v_e$ in $V_i'$.
    Next, for every pair of distinctly colored hyperedges $e_1, e_2$ with $e_1 \cap e_2 \neq \emptyset$ (a \emph{bad hyperedge pair}), we create an edge $v_{e_1}v_{e_2}$.
    This completes the construction of $G$.

    Now, let $\opt_G$ and $\opt_H$ be the optimal objective values for $G$ and $H$, respectively. We claim that
    $\opt_G \leq \opt_H$. Suppose that $S$ is the set of hyperedges unsatisfied by some optimal coloring $\lambda$, so $\max_{i \in [k]} |S \cap E_i| = \opt_H$.
    Because $\lambda$ satisfies all edges in $E \setminus S$, $S$ must contain at least one member of every bad hyperedge pair.
    Hence, the set $C = \{v_e \in V' \ \colon \ e \in S \}$ of vertices in $G$ corresponding to hyperedges in $S$ is a vertex cover in $G$, and has the property that $\max_{i \in [k]} |C \cap V_i'| = \opt_H$.
    It follows that $\opt_G \leq \opt_H$.

    Finally, we show how to lift a solution. Let $C \subseteq V'$ be a vertex cover in $G$ which is 2-approximate; such a cover can be computed in polynomial time~\cite{blum2022sparse}.
    Let $S = \{e \in E : v_e \in C\}$ be the hyperedges of $H$ corresponding to the vertices in $C$.
    Because $C$ is a vertex cover and the edges of $G$ correspond exactly to the bad hyperedge pairs of $H$, it is trivial to compute a
    coloring $\lambda$ which satisfies every edge in $E \setminus S$. The objective value of $\lambda$
    is at worst
    \[
        \max_{i \in [k]} |S \cap E_i| = \max_{i \in [k]} |C \cap V_i'| \leq 2\cdot \opt_G \leq 2\cdot \opt_H.
    \]

\end{proof}

In practice, the algorithm of~\Cref{thm:pmean-approximation} is preferable, as
the conflict graphs associated with edge-colored hypergraphs can be very dense and inefficient to even form explicitly.
%
%

A limitation of both approaches is the reliance on convex optimization as a subroutine.
Hence, an appealing open direction is to develop purely combinatorial algorithms.
Toward this end, we now give a combinatorial $k$-approximation, i.e., one which is constant-factor for every fixed number of edge colors. The algorithm begins by computing a maximal matching $M$ in the conflict graph $G$.
The vertices in $G$ which are unmatched by $M$ are an independent set, so it is possible to simultaneously satisfy all of the corresponding hyperedges.
This algorithm has already been shown to be a $2$-approximation to the standard \minecc{} objective by~\citet{veldt2023optimal}, who also showed that by avoiding explicit construction of $G$, the running time can be made \emph{linear} in the size of the hypergraph, i.e., $O(\sum_{e \in E} |e|)$.
Our contribution is to show that the algorithm also gives a $k$-approximate solution to the color-fair objective; the lower bound is derived from an analysis of the $k$-partite structure of $G$.

\begin{restatable}{theorem}{cfminecccombinatorial}\label{thm-combinatorial-k-approx}
    \cfminECC{} admits a linear-time combinatorial $k$-approximation.
\end{restatable}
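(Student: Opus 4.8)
The plan is to analyze the maximal-matching algorithm on the conflict graph $G$ described just before the theorem statement. Recall that $G$ has vertex set $\biguplus_{i \in [k]} V_i'$, where $V_i'$ contains one vertex $v_e$ for each hyperedge $e \in E_i$, and edges joining $v_{e_1}v_{e_2}$ exactly for bad hyperedge pairs (distinctly colored, overlapping hyperedges). The algorithm computes a maximal matching $M$ in $G$; the $M$-unmatched vertices form an independent set, and the corresponding hyperedges can all be satisfied simultaneously, so the set $S$ of unsatisfied hyperedges is contained in $V(M)$, the set of $2|M|$ matched endpoints. The color-$i$ objective contribution of the output coloring is then at most $|V(M) \cap V_i'|$, and we want to bound $\max_{i \in [k]} |V(M) \cap V_i'|$ against $\mathrm{opt}_H = \mathrm{opt}_G$, where as in Proposition~\ref{thm:cfminecc-reduce-sparse-vc} $\mathrm{opt}_G$ is the optimal \SVC{} value on $G$.

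First I would establish the lower bound: any vertex cover $C^*$ of $G$ — in particular an optimal \SVC{} solution — must cover every edge of $M$, hence contains at least one endpoint of each of the $|M|$ matching edges, so $|C^*| \geq |M|$, i.e. $\sum_{i \in [k]} |C^* \cap V_i'| \geq |M|$. By averaging (pigeonhole over the $k$ color classes), $\max_{i \in [k]} |C^* \cap V_i'| \geq |M|/k$, so $\mathrm{opt}_G \geq |M|/k$. Next, the trivial upper bound on the algorithm's output: $\max_{i \in [k]} |V(M) \cap V_i'| \leq |V(M)| = 2|M|$. Combining, the output objective is at most $2|M| \leq 2k \cdot \mathrm{opt}_G = 2k \cdot \mathrm{opt}_H$. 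This only gives a $2k$-approximation, so to reach the claimed factor $k$ I would tighten one of the two inequalities. The natural refinement is to observe that each matching edge of $M$ has its two endpoints in two \emph{distinct} color classes (since $G$'s edges only run between distinctly colored hyperedges); hence, for any fixed color $i$, $|V(M) \cap V_i'| \leq |M|$ rather than $2|M|$ — each matching edge contributes at most one endpoint to any single class. Then the output objective is at most $|M| \leq k \cdot \mathrm{opt}_G = k \cdot \mathrm{opt}_H$, giving the $k$-approximation. Running time: $\mathrm{opt}_H = \mathrm{opt}_G$ follows as in Proposition~\ref{thm:cfminecc-reduce-sparse-vc}, and the greedy maximal matching can be computed without forming $G$ explicitly in time $O(\sum_{e \in E} |e|)$ exactly as in~\citet{veldt2023optimal}, so the whole algorithm is linear.

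I expect the main obstacle to be the care in the lower-bound step — specifically making precise the claim ``$\mathrm{opt}_H = \mathrm{opt}_G$ and any vertex cover of $G$ has size at least $|M|$, so the maximum color class of an optimal sparse cover is at least $|M|/k$'' — and in arguing that the algorithm's output is legitimately a coloring whose unsatisfied set injects into $V(M)$ with at most one hyperedge per color per matching edge. The bound $|V(M)\cap V_i'| \le |M|$ is the crucial observation that turns the easy $2k$ into $k$; it relies on the $k$-partite (more precisely, the ``$G$ has no monochromatic edges'') structure, which is exactly the structural feature the theorem statement alludes to. Everything else — maximality of $M$ implying the unmatched vertices form an independent set, lifting an independent set of conflict-graph vertices to a satisfying partial coloring, and the linear-time implementation — is routine and already present in the cited prior work.
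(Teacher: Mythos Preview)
Your proposal is correct and follows essentially the same route as the paper: compute a maximal matching $M$ in the conflict graph, use pigeonhole on $|C^*| \geq |M|$ to get $\mathrm{opt}_G \geq |M|/k$, and use the $k$-partite structure of $G$ (edges only between distinct color classes) to get $|V(M) \cap V_i'| \leq |M|$, then lift via the reduction of Proposition~\ref{thm:cfminecc-reduce-sparse-vc} and cite \citet{veldt2023optimal} for the linear running time. One small imprecision: you write $\mathrm{opt}_H = \mathrm{opt}_G$, but Proposition~\ref{thm:cfminecc-reduce-sparse-vc} only proves (and you only need) the inequality $\mathrm{opt}_G \leq \mathrm{opt}_H$.
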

\begin{proof}
    We begin by proving that \SVC{} admits a polynomial-time combinatorial $k$-approximation when the vertex classes $V_1, V_2, \ldots, V_k$ are disjoint and every edge has endpoints in distinct vertex classes.
    Let $G = (V = V_1 \uplus V_2 \uplus \ldots \uplus V_k, E)$ be an instance of \SVC{} which satisfies these conditions.
    Let $\opt_G$ be the optimal objective value for \SVC{} on $G$, $M \subseteq E$ be a maximal matching in $G$, and $C \subseteq V$ be $\bigcup_{e \in M} e$.
    We claim that $C$ is a $k$-approximate solution.

    Since $M$ is maximal, $C$ is a vertex cover of $G$. Because every edge has endpoints in distinct vertex classes,
    for each $i \in [k]$ we have that
    \[
        |C \cap V_i| \leq \frac{|C|}{2} \leq |M|.
    \]

    To obtain a lower bound, we observe that because $M$ is a matching, every vertex cover of $G$ has cardinality at least $|M|$.
    It follows from the pigeon-hole principle that $\opt_G \geq |M|/k$. Thus,
    for each $i \in [k]$ we have that $|C \cap V_i| \leq k\cdot\opt_G$, as desired.

    The result for \cfminECC{} now follows from the reduction of~\cref{thm:cfminecc-reduce-sparse-vc}.
\end{proof}

\subsection{Protected-color ECC}\label{sec:protected-class}

\cfminECC{} encodes fairness by ensuring that no single color is heavily ``punished'' (as measured by unsatisfied hyperedges), but what if we are only concerned about a single color that represents a protected interaction type?
It is always trivially possible to ensure that every edge of a given color $c_1$ is satisfied by setting $\lambda(v) = c_1$ for every vertex $v$.
This strategy, however, completely disregards all other edges.
We might reasonably assume that while we are constrained by concern for a given protected color, we still wish to
find a high-quality solution as measured by the traditional clustering objective.
This tension is captured in the following problem definition.

\problembox{\pcECC{}}{An edge-colored hypergraph $H = (V, E, \ell)$, two integers $t, b$, and a protected color $c_1$.}{Is it possible to color $V$ such that at most $t$ edges are unsatisfied, of which at most $b$ have color $c_1$?}

When $t = b$ we recover the standard $\ECC{}$ problem, so \pcECC{} is \cclass{NP}-hard.
However, by rounding fractional solutions to the following linear program we recover bicriteria $(\alpha, \beta)$-approximations. If $\opt_b$ is the minimum possible number of unsatisfied edges subject to the constraint $b$, then these algorithms guarantee at most $\alpha\cdot\opt_b$ unsatisfied edges, of which at most $\beta\cdot b$ have color $c_1$.\looseness=-1
\begin{align}
	\label{lp:pc-ecc}
\begin{aligned}
	\text{min} \quad& \textstyle \sum_{e \in E} \gamma_e \\[4pt]
	\text{s.t.} \quad& \forall v \in V:\\[6pt]
	&\forall c \in [k], e \in E_c:\\[2pt]
    &\textstyle\sum_{e \in E_{c_1}} \gamma_e \leq b \\[1pt]
	&d_v^c, \gamma_e \in [0, 1]
\end{aligned}
\begin{aligned}
	&\\[2pt]
	&\textstyle \sum_{c=1}^k d_v^c \geq k - 1 \\[5pt]
	\quad &d_v^c \leq \gamma_e \quad \forall v \in e \\
    & \\[4pt]
	&\forall c \in [k], v \in V, e \in E
\end{aligned}
\end{align}
All variables have the same meaning as in Program~(\ref{eq:p-mean-ecc}), but we have added the third constraint which encodes that at most $b$ edges of our protected color $c_1$ may be unsatisfied.
Apart from this constraint, LP~(\ref{lp:pc-ecc}) is identical to the canonical \minecc{} LP used to obtain
state-of-the-art approximations~\cite{amburg2020clustering,veldt2023optimal}, and our algorithm can be seen as
an appropriate adaption of those rounding schemes. In the following, $\rho \in (0, \frac{1}{2}]$ is a parameter which allows us to tune the approximation guarantees $\alpha$ and $\beta$.

\begin{restatable}{theorem}{pceccapproximation}\label{thm:pc-ECC-approximation}
    For every $\rho \in (0, \frac{1}{2}]$, there exists a polynomial-time $(\frac{1}{\rho}, \frac{1}{1 - \rho})$-approximation for \pcECC{}.
\end{restatable}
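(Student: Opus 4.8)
The plan is to round an optimal fractional solution of LP~\eqref{lp:pc-ecc} with an \emph{asymmetric} threshold rule, using a lenient cutoff for the protected color and a strict one for all others. Concretely, having obtained optimal variables $\{d_v^c\}, \{\gamma_e\}$, for each vertex $v$ I set $\lambda(v) = c_1$ if $d_v^{c_1} < 1-\rho$; otherwise, if some non-protected color $c$ has $d_v^c < \rho$, I set $\lambda(v) = c$; otherwise $\lambda(v)$ is arbitrary. First I would verify this is well-defined: by the constraint $\sum_{c} d_v^c \ge k-1$, two non-protected colors cannot both lie below $\rho$ (else $\sum_c d_v^c < 2\rho + (k-2) \le k-1$, since $\rho \le \tfrac12$), and $d_v^{c_1} < 1-\rho$ together with any non-protected $d_v^c < \rho$ is likewise impossible (their sum is $< 1$). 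So at most one color clears its threshold, and $\lambda$ is a valid coloring. I would also record the routine fact that the optimal value of LP~\eqref{lp:pc-ecc} is at most $\opt_b$, since any coloring certifying feasibility with at most $b$ unsatisfied protected edges induces an integral feasible point of the LP with the same objective.

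The core of the proof is a per-edge charging argument; write $\hat\gamma_e$ for the indicator that $e$ is unsatisfied by $\lambda$. If a protected edge $e \in E_{c_1}$ is unsatisfied, some $v \in e$ has $\lambda(v) \ne c_1$, which forces $d_v^{c_1} \ge 1-\rho$, and then $\gamma_e \ge d_v^{c_1} \ge 1-\rho$, so $\hat\gamma_e \le \gamma_e/(1-\rho)$. If a non-protected edge $e \in E_c$ is unsatisfied, some $v \in e$ has $\lambda(v) \ne c$, and I claim this forces $d_v^c \ge \rho$: were $d_v^c < \rho$, the well-definedness computation above would give $d_v^{c_1} \ge 1-\rho$, so the protected branch does not fire and $c$ is the unique non-protected color below $\rho$ at $v$, whence $\lambda(v) = c$, a contradiction. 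Thus $\gamma_e \ge \rho$ and $\hat\gamma_e \le \gamma_e/\rho$.

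Summing these bounds finishes the argument. For the protected color, $\sum_{e \in E_{c_1}} \hat\gamma_e \le \tfrac{1}{1-\rho}\sum_{e \in E_{c_1}} \gamma_e \le \tfrac{b}{1-\rho}$ by the third constraint of LP~\eqref{lp:pc-ecc}, yielding $\beta = \tfrac{1}{1-\rho}$. For the total, splitting edges by whether their color is $c_1$ and using $\tfrac{1}{1-\rho} \le \tfrac1\rho$,
\[
\sum_{e \in E} \hat\gamma_e \;\le\; \tfrac1\rho \sum_{e \notin E_{c_1}} \gamma_e + \tfrac{1}{1-\rho} \sum_{e \in E_{c_1}} \gamma_e \;\le\; \tfrac1\rho \sum_{e \in E} \gamma_e \;\le\; \tfrac1\rho\,\opt_b,
\]
yielding $\alpha = \tfrac1\rho$. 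I expect the main obstacle to be the charging step for non-protected edges --- the claim that $d_v^c < \rho$ \emph{guarantees} $\lambda(v) = c$ --- since this is precisely where the two-threshold asymmetry is essential and is what lets us push $\beta$ down to $\tfrac{1}{1-\rho}$ while keeping $\alpha = \tfrac1\rho$; the endpoint $\rho = \tfrac12$ recovers the classical $2$-approximation rounding of~\cite{amburg2020clustering}. Everything else is standard LP-rounding bookkeeping.
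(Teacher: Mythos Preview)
Your proof is correct and takes essentially the same approach as the paper: both round the LP with asymmetric thresholds $\rho$ and $1-\rho$ depending on whether the color is protected, and both charge each unsatisfied edge to a fractional $\gamma_e$ that must be at least $\rho$ (non-protected) or $1-\rho$ (protected). The only cosmetic difference is that the paper thresholds on the edge variables $\gamma_e$ to build a ``give-up'' set $S$ and then extracts a coloring, whereas you threshold directly on the node variables $d_v^c$ to define $\lambda$; the LP constraint $d_v^c \le \gamma_e$ makes these equivalent.
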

\begin{proof}
    We begin by computing optimal fractional values $\{\gamma_e\}, \{d_v^c\}$ for LP~(\ref{lp:pc-ecc}).

    We claim that for every pair of distinctly colored overlapping hyperedges $e, f$, $\gamma_e + \gamma_f \geq 1$.
    Otherwise, let $v \in e \cap f$, $c_e = \ell(e)$, and $c_f = \ell(f)$. Since $\gamma_e + \gamma_f < 1$,
    it follows from the second constraint of LP~(\ref{lp:pc-ecc}) that $d_v^{c_e} + d_v^{c_f} < 1$. Then
    \[
        \sum_{c = 1}^{k} d_v^c < 1 + k - 2 = k - 1,
    \]
    contradicting the first constraint of LP~(\ref{lp:pc-ecc}).

    Now we show how to color the vertices of our hypergraph $H = (V, E)$ with protected color $E_1$.
    For each hyperedge $e$, we set
    \[
        \hat{\gamma}_e = \begin{cases}
            1 \quad\text{ if } e \in E_1 \text{ and } \gamma_e \geq 1 - \rho \\
            1 \quad\text{ if } e \notin E_1 \text{ and } \gamma_e \geq \rho  \\
            0 \quad\text{ otherwise,}
        \end{cases}
    \]

    and we construct the set $S = \{e \in E \colon \hat{\gamma}_e = 1\}$.

    First we observe that since $\rho \in (0, \frac{1}{2}]$, $\frac{1}{1 - \rho} \leq \frac{1}{\rho}$, and so
    \[
        |S| = \sum_{e \in E} \hat{\gamma}_e \leq \frac{1}{\rho}\sum_{e \in E \setminus E_1} \gamma_e + \frac{1}{1 - \rho}\sum_{e \in E_1} \gamma_e \leq \frac{1}{\rho}\sum_{e \in E} \gamma_e,
    \]
    and
    \[
        |S \cap E_1| \leq \frac{1}{1 - \rho}\sum_{e \in E_1} \gamma_e \leq \frac{b}{1 - \rho}.
    \]

    All that remains is to compute a coloring $\lambda$ which satisfies every hyperedge in $E \setminus S$. If $S$ contains at least one member of every pair of distinctly colored overlapping hyperedges, then computing such a $\lambda$ is trivial.
    We conclude the proof by showing that $S$ has this characteristic.
    Let $e, f \in E$ with $e \cap f \neq \emptyset$ and $\ell(e) \neq \ell(f)$.
    Assume toward a contradiction that $e, f \notin S$.
    Because $e$ and $f$ are distinctly colored, at least one is in $E \setminus E_1$.
    If both $e, f \notin E_1$, then
    \[
        \gamma_e + \gamma_f < 2\rho \leq 1,
    \]
    but we have already shown that $\gamma_e + \gamma_f \geq 1$, so exactly one of $e$ or $f$ must be in $E_1$. Assume without loss of generality that $e \in E_1$ and $f \in E \setminus E_1$.
    Then
    \[
        \gamma_e + \gamma_f < 1 - \rho + \rho = 1,
    \]
    so we have contradicted that $S \cap \{e, f\} = \emptyset$, as desired.

\end{proof}

We also prove the following fixed-parameter tractability result, via a standard branching argument which is essentially identical to that of~\Cref{thm:cfminecc-FPT}.
We defer the details to~\Cref{app:pc}.
\begin{restatable}{theorem}{pceccFPT}\label{thm:pc-ECC-FPT}
    \pcECC{} is FPT with respect to the total number $t$ of unsatisfied edges.
\end{restatable}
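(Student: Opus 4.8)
The plan is to adapt the branching algorithm from the proof of \Cref{thm:cfminecc-FPT} almost verbatim, now carrying two counters instead of one. Given an instance $(H = (V, E, \ell), t, b, c_1)$ of \pcECC{}, I would again call a triple $(v, e_1, e_2)$ a \emph{conflict} when $v$ is a vertex lying in two distinctly colored hyperedges $e_1, e_2$. As in \Cref{thm:cfminecc-FPT}, if $H$ has no conflict then it is possible to satisfy every edge (every pair of overlapping hyperedges shares a color, so coloring each ``overlap component'' of hyperedges by its common color works), which in particular leaves $0 \le b$ edges of color $c_1$ unsatisfied; this is a feasible answer whenever $t, b \ge 0$. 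Otherwise we locate a conflict in $O(r|E|)$ time by scanning the hyperedges incident to each vertex, and branch on the two ways to resolve it: mark $e_1$ as unsatisfied (``delete'' it) or mark $e_2$ as unsatisfied.

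Along each root-to-node path of the search tree I would maintain the set $D$ of deleted edges together with the count $|D \cap E_{c_1}|$, both updated incrementally. A branch is pruned as soon as $|D| > t$ or $|D \cap E_{c_1}| > b$, each check taking constant time. Since every branch adds exactly one edge to $D$, the search tree has depth at most $t$ and hence at most $2^t$ leaves, so building it in level order gives total running time $O(2^t\, r |E|)$, reusing the data structures from \Cref{thm:cfminecc-FPT}. At any leaf reached without pruning, the residual hypergraph $H - D$ is conflict-free, so we can compute a coloring $\lambda$ satisfying every edge of $E \setminus D$; since $|D| \le t$ and $|D \cap E_{c_1}| \le b$, this $\lambda$ certifies a ``yes'' instance.

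The one point that genuinely needs checking is exhaustiveness: if some coloring leaves unsatisfied a set $S$ with $|S| \le t$ and $|S \cap E_{c_1}| \le b$, then some unpruned leaf outputs a valid solution. This holds because $S$ must contain at least one hyperedge of every conflicting pair (a vertex in two distinctly colored hyperedges has at least one of them unsatisfied under that coloring), so at each conflict the algorithm encounters, at least one of the two branch choices keeps the running deletion set inside $S$; following such choices reaches a leaf with $D \subseteq S$, which therefore passes both pruning tests. This is exactly the argument used for \cfminECC{}, so the only new ingredient is the extra counter for the protected color. I do not anticipate a real obstacle here beyond transcribing the bookkeeping carefully, so I would frame the (routine) verification of exhaustiveness with both pruning rules as the step warranting the most attention.
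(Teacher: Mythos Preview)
Your proposal is correct and essentially identical to the paper's own proof: the paper also adapts the branching algorithm of \Cref{thm:cfminecc-FPT} verbatim, branching on a conflict $(v,e_1,e_2)$, pruning when either the $t$- or $b$-budget is exceeded, and concluding with the $O(2^t r|E|)$ bound. If anything, you are more explicit about the exhaustiveness argument than the paper is.
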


Given these positive results, a natural question is whether we may efficiently solve the \minecc{} objective with \emph{multiple} protected colors.
We conclude by showing that as soon as we allow more than one protected color, it becomes hard even to determine whether any solution exists.  

\begin{restatable}{theorem}{multiplepcNPhard}\label{thm:pc-multiple-classes-NP-hard}
	Given $H = (V, E, \ell)$, two integers $b_1, b_2$, and two colors $c_1, c_2$, it is \cclass{NP}-hard to determine whether
	it is possible to color $V$ such that at most $b_1$ edges of color $c_1$ and at most $b_2$ edges of color $c_2$ are unsatisfied. 
\end{restatable}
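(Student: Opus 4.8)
The plan is to reduce directly from \constrainedbipartiteVC{}, shown to be \cclass{NP}-complete by~\citet{kuo1987efficient}, reusing the conflict-vertex construction from the proof of~\Cref{thm:cfminecc-NPhard-k2}. The key observation is that allowing \emph{separate} budgets $b_1, b_2$ for two distinguished colors removes the need for the \fairbipartiteVC{} detour used there: the two budgets map one-to-one onto the two sides of the bipartition, so we can reduce from the constrained (rather than ``balanced'') bipartite vertex cover problem.

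Given an instance $(G = (A \uplus B, E), \alpha_a, \alpha_b)$ of \constrainedbipartiteVC{} (assume without loss of generality that $G$ has no isolated vertices), I would build an edge-colored hypergraph with exactly the two colors $c_1, c_2$ as follows. For each edge $ab \in E$, create a \emph{conflict vertex} $v_{ab}$, associated with both $a$ and $b$. For each $a \in A$, create a hyperedge $e_a$ of color $c_1$ containing every conflict vertex associated with $a$; for each $b \in B$, create a hyperedge $e_b$ of color $c_2$ containing every conflict vertex associated with $b$. Set $b_1 = \alpha_a$ and $b_2 = \alpha_b$. This is clearly computable in polynomial time, and every hyperedge is nonempty.

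For correctness I would argue exactly as in~\Cref{thm:cfminecc-NPhard-k2}. From a vertex cover $C$ with $|C \cap A| \le \alpha_a$ and $|C \cap B| \le \alpha_b$, color every conflict vertex associated with some $a \in A \setminus C$ with $c_1$, every conflict vertex associated with some $b \in B \setminus C$ with $c_2$, and color any remaining vertices arbitrarily; the cover property guarantees that for each edge $ab$ at least one of $a,b$ lies in $C$, so $v_{ab}$ receives at most one forced color and the coloring is well-defined. An unsatisfied color-$c_1$ edge $e_a$ forces $a \in C$, so at most $|C \cap A| \le \alpha_a = b_1$ edges of color $c_1$ are unsatisfied, and symmetrically at most $b_2$ of color $c_2$. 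Conversely, from any coloring $\lambda$ leaving at most $b_1$ edges of color $c_1$ and at most $b_2$ edges of color $c_2$ unsatisfied, let $C = \{a \in A : e_a \text{ unsatisfied}\} \cup \{b \in B : e_b \text{ unsatisfied}\}$; then $|C \cap A| \le b_1 = \alpha_a$ and $|C \cap B| \le b_2 = \alpha_b$, and since $v_{ab} \in e_a \cap e_b$ cannot match both $c_1$ and $c_2$, at least one of $e_a, e_b$ is unsatisfied, so $C$ covers $ab$ and hence is a vertex cover of $G$.

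I do not expect any genuinely hard step here: this is essentially the $k = 2$ specialization of the conflict-graph machinery already developed in~\Cref{sec:color-fair}, and the only point requiring care is verifying that the forward-direction coloring assigns exactly one color per vertex, which follows immediately from the vertex-cover property. If desired, the full write-up can also record that the reduction inherits any structural restrictions (e.g. $2$-regularity) already established for the construction in~\Cref{thm:cfminecc-NPhard-k2}.
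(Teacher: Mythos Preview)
Your proposal is correct and essentially identical to the paper's own proof: both reduce directly from \constrainedbipartiteVC{} via the same conflict-vertex construction (one hypergraph vertex per bipartite edge, one colored hyperedge per bipartite vertex), with $b_1 = \alpha_a$ and $b_2 = \alpha_b$, and the correctness arguments match in both directions.
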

\begin{proof}
    We once again reduce from \constrainedbipartiteVC{}, for which the input is a bipartite graph $G = (A \uplus B, E)$ along with two integers $\alpha_a$, $\alpha_b$, and the question is whether there exists a vertex cover $C$ with $|C \cap A| \leq \alpha_a$ and $|C \cap B| \leq \alpha_b$.
    We will construct the edge-colored hypergraph for which $G$ is the associated conflict graph.
    Formally, we construct an edge-colored hypergraph $H = (V, E_a \uplus E_b, \ell)$ with two colors $c_a$ and $c_b$ as follows. For each edge $e \in E$, we create a vertex $v_e \in V$.
    For each vertex $a \in A$, we create a hyperedge $e_a \in E_a$ such that $e_a = \{v_e \colon a \in e\}$. Similarly,
    for each vertex $b \in B$, we create a hyperedge $e_b \in E_b$ such that $e_b = \{v_e \colon a \in e\}$.
    For each $e_a \in E_a$ we set $\ell(e_a) = c_a$, and for each $e_b \in E_b$ we set $\ell(e_b) = c_b$.
    We say that our protected colors are $E_a$ and $E_b$, and we set the associated constraints $b_1 = \alpha_a$ and $b_2 = \alpha_b$.

    Thus, if $\lambda$ is a vertex coloring of $H$, $S_\lambda$ is the set of hyperedges unsatisfied by $\lambda$, and we have that $|S \cap E_a| \leq b_1 = \alpha_a$ and $|S \cap E_b| \leq b_2 = \alpha_b$,
    then the set $C = \{a \colon e_a \in C \} \cup \{b \colon e_b \in C \}$ has the properties that $|C \cap A| \leq \alpha_a$ and $|C \cap B| \leq \alpha_b$.
    Moreover, if $ab = e$ is an edge in $e$, then $e_a$ and $e_b$ are distinctly colored, with $e_a \cap e_b = \{v_e\}$.
    Hence, $S$ contains at least one of $e_a, e_b$, meaning that $C$ contains at least one of $a, b$, so $C$ is a vertex cover.

    For the other direction, assume that $C$ is a vertex cover of $G$ with the properties that $|C \cap A| \leq \alpha_a = b_1$ and $|C \cap B| \leq \alpha_b = b_2$,
    so the set $S = \{e_a \colon a \in C\} \cup \{e_b \colon b \in C\}$ has the properties that $|S \cap E_a| \leq b_1$ and $|S \cap E_b| \leq b_2$.
    Observe that, if $e_a$ and $e_b$ are distinctly colored hyperedges which overlap, then $ab \in E$. Thus, at least one of $a, b$ is contained in $C$, and so at least one of $e_a, e_b$ is contained in $S$.
    Then there are no pairs of distinctly colored overlapping hyperedges in $(E_a \uplus E_b)\setminus S$.
    Consequently, it is trivial to compute a coloring $\lambda$ which satisfies every hyperedge not contained in $S$, meaning that it leaves at most $b_1$ hyperedges of color $c_a$ unsatisfied, and at most $b_2$ hyperedges of color $c_b$ unsatisfied.
\end{proof}

\section{Experiments}

\newcommand{\algecc}{\textsf{ST}}
\newcommand{\algcf}{\textsf{CF}}
\newcommand{\algpc}{\textsf{PC}}

\begin{figure*}[t!]
    \centering
    \subfigure[Constraint Violation]{
        \label{fig:pcecc_experiments_constraint}
        \includegraphics[page=5, width=0.35\paperwidth]{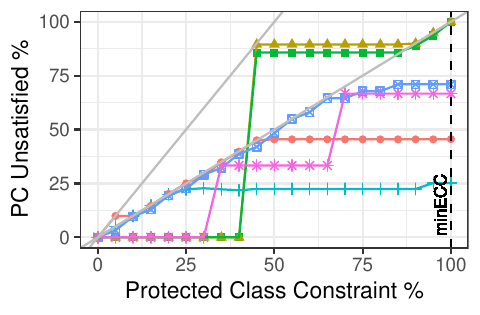}
    }
    ~~~
    \subfigure[Runtime]{
        \label{fig:pcecc_experiments_runtime}
        \includegraphics[page=7, width=0.35\paperwidth]{figures/protected_color_plots_no_legends.pdf}
    }
    ~~~
    \subfigure[Protected Color Objective Approximation Upper Bound]{
        \label{fig:pcecc_experiments_pc_approx}
        \includegraphics[page=6, width=0.35\paperwidth]{figures/protected_color_plots_no_legends.pdf}
    }
    ~~~
    \subfigure[MinECC Objective Approximation Upper Bound]{
        \label{fig:pcecc_experiments_st_approx}
        \includegraphics[page=8, width=0.35\paperwidth]{figures/protected_color_plots_no_legends.pdf}
    }
    $\begin{alignedat}{10}
            \textcolor[RGB]{229, 116, 100}{\bullet} \text{~Brain~}                           & ~~~ &
            \textcolor[RGB]{153, 154, 35}{\blacktriangle} \text{~Cooking~}                   & ~~~ &
            \textcolor[RGB]{73, 178, 117}{\blacksquare} \text{~DAWN~}                        & ~~~ &
            \textcolor[RGB]{67, 164, 241}{\boldsymbol{+}} \text{~MAG-10~}                    & ~~~ &
            \textcolor[RGB]{110, 155, 248}{\boldsymbol{\boxtimes}} \text{~Trivago-Clickout~} & ~~~ &
            \textcolor[RGB]{211, 108, 236}{\boldsymbol{\ast}} \text{~Walmart-Trips~}
        \end{alignedat}$
    \caption{Results running \pcECC{} on multiple datasets with a varying constraint on the number of unsatisfied edges for the color with the median amount of edges.
        Figure \subref{fig:pcecc_experiments_constraint} is the percent of constraint violation of the protected color. Line $y=x$ represents the constraint imposed by the problem definition, and $y=2x$ is the theoretical limit for our bicriteria approximation algorithm. Figure~\subref{fig:pcecc_experiments_runtime} is the runtime in seconds for solving the linear program (note that the rounding step has a negligible runtime). Figure \subref{fig:pcecc_experiments_pc_approx} is the approximation upper bound for the protected color objective. Figure \subref{fig:pcecc_experiments_st_approx} is the approximation upper bound for the standard MinECC objective.
        Approximation upper bounds are given by the objective value for the algorithm divided by the lower bound for the objective determined by the LP relaxation.
    }
    \label{fig:pcecc_experiments}
\end{figure*}

\begin{table}[t]
    \centering
    \caption{We report statistics for a standard suite of ECC datasets from~\cite{amburg2020clustering,veldt2023optimal}, as well as results and runtimes comparing the standard~(\algecc{}) \minecc{} algorithm versus the \cfminECC{}~(\algcf{}). Runtimes are given in seconds taken to solve the LP. Objective numbers are an upper bound on the approximation ratio, given by the objective value for the algorithm divided by the lower bound for the objective determined by the LP relaxation.}
    \label{tab:cfecc_experiments}
    \begin{tabular}{ llllllllllllllll }
        \toprule
        \multirow{2}{*}{\textbf{Dataset}} & \multirow{2}{*}{\textbf{n}} & \multirow{2}{*}{\textbf{m}} & \multirow{2}{*}{\textbf{r}} & \multirow{2}{*}{\textbf{k}} & \multicolumn{2}{c}{\textbf{Runtime (s)}} & \multicolumn{2}{c}{\textbf{ECC Obj}} & \multicolumn{2}{c}{\textbf{CFECC Obj}} \\
        \cmidrule(lr){6-7} \cmidrule(lr){8-9} \cmidrule(lr){10-11}
         & & & & & \multicolumn{1}{c}{\algecc{}} & \multicolumn{1}{c}{\algcf{}} & \multicolumn{1}{c}{\algecc{}} & \multicolumn{1}{c}{\algcf{}} & \multicolumn{1}{c}{\algecc{}} & \multicolumn{1}{c}{\algcf{}} \\
        \cmidrule(lr){1-11}
        Brain            & 638    & 21180  & 2  & 2  & 0.50   & 1.57   & 1.00 & 1.02 & 1.26 & 1.00 \\
        Cooking          & 6714   & 39774  & 65 & 20 & 46.76  & 52.88  & 1.00 & 1.00 & 1.66 & 1.66 \\
        DAWN             & 2109   & 87104  & 22 & 10 & 4.49   & 11.60  & 1.00 & 1.00 & 1.39 & 1.38 \\
        MAG-10           & 80198  & 51889  & 25 & 10 & 8.40   & 15.52  & 1.00 & 1.20 & 1.48 & 1.03 \\
        Trivago-Clickout & 207974 & 247362 & 85 & 55 & 116.94 & 129.55 & 1.00 & 1.37 & 1.68 & 1.01 \\
        Walmart-Trips    & 88837  & 65898  & 25 & 44 & 156.71 & 314.03 & 1.00 & 1.05 & 2.48 & 1.56 \\
        \bottomrule
    \end{tabular}
\end{table}

While we primarily focus on theoretical results, we also implemented and evaluated the performance of our alternate objective ECC algorithms on a standard suite of benchmark hypergraphs~\cite{veldt2023optimal,amburg2020clustering}.
LP-based methods for standard ECC are already known to perform far better in practice than their theoretical guarantees, often yielding optimal solutions~\cite{amburg2020clustering}. Thus, our \maxecc{} algorithms should be viewed as theoretical results that help bridge the theory-practice gap and are thus not the focus of our experiments.

Algorithms based on LP rounding were implemented for the \cfminECC{} and \pcECC{} problems. These algorithms were implemented in the Julia programming language and were run on a research server running Ubuntu 20.04.1 with two AMD EPYC 7543 32-Core Processors and 1 TB of RAM. Our implementation of these algorithms chooses a color that has the lowest LP distance variable among all colors for each node. Note that this can only improve our approximation guarantees over the theoretical approach that only assigns a color if there is an LP variable that is less than $1/2$. We used Gurobi optimization software with default settings to solve the LP for all algorithms, which was the largest bottleneck for running the algorithms. The rounding step has a runtime less than $0.3$ seconds in all cases, which is negligible in comparison with solving the linear program. Statistics about the datasets can be found in Table~\ref{tab:cfecc_experiments}. The code for our experiments can be found at~\url{https://github.com/tommy1019/AltECC}.
We refer to our LP rounding algorithms for \cfminECC{} and \pcECC{} as  \algcf{} and \algpc{} respectively, and refer the standard \minecc{} LP rounding algorithm as \algecc{}.

\textbf{\textsc{Color-fair MinECC}}.
Table \ref{tab:cfecc_experiments} compares \algcf{} against \algecc{}. We first observe that in practice, \algcf{} achieves approximation factors that are much better than the theoretical bound of 2, and in several cases are very close to 1.
Furthermore, \algcf{} even does well approximating the standard \minecc{} objective even though it is not directly designed for it. In the worst case, it achieves a 1.37-approximation, which is only $37\%$ worse than \algecc{}, which finds an optimal solution.
%
By comparison,  \algecc{} is up to $66\%$ worse than \algcf{} for the color-fair objective.
In fact, on average among the datasets we tested, \algcf{} is only $10.66\%$ worse than \algecc{} on the standard \minecc{} objective, while \algecc{} is $32.62\%$ worse than \algcf{} on the color-fair objective.
In other words, the cost of incorporating fairness into the ECC framework is not too high---\algcf{} maintains good performance in terms of the \minecc{} objective while showing significant improvements for the color-fair objective.

\textbf{\textsc{Protected-Color MinECC}}.
Figure \ref{fig:pcecc_experiments} shows the constraint satisfaction, runtime, \algpc{} approximation bounds, and \algecc{} approximation bounds as $b$ (the number of unsatisfied protected edges) varies. Again in we see that our algorithm tends to far exceed theoretical guarantees.
In most cases, Figure \ref{fig:pcecc_experiments_constraint} shows that the number of protected edges left unsatisfied by \algpc{} is below the constraint $b$, even though the algorithm can in theory violate this constraint by up to a factor 2. Results are especially good for Brain, MAG-10, Walmart, and Trivago-Clickout, where the constraint tends to be satisfied and in Figure \ref{fig:pcecc_experiments_pc_approx} we see the objective value for \algcf{} is within a factor $1.007$ of the LP lower bound.
\algpc{} also tends to satisfy the protected color constraint for DAWN and Cooking when $b$ is below $50\%$ of the total number of protected edges, achieving a 1.3-approximation or better in this case. Once $b$ is above this $50\%$ threshold, \algpc{} satisfies no protected edges on DAWN and Cooking. This still matches our theory: if we want to \emph{guarantee} (based on our theory) that we do not leave all protected edges unsatisfied, $b$ must be less than $50\%$ of these edges. We also see in Figure \ref{fig:pcecc_experiments_st_approx} that \algpc{} does a good job of approximating the standard \minecc{} objective (factor 2 or better), while incorporating this protected color constraint.

We note that the rightmost point in the plots in Figure~\ref{fig:pcecc_experiments} corresponds to standard \minecc{} as there is no bound on unsatisfied protected edges. From this, we can examine how badly \algecc{} violates protected color constraints for small $b$ values. On MAG-10 when $b$ is $10\%$ of the number of protected edges, \algecc{} violates the constraint by a factor of $3.60$ whereas \algpc{} only violates the constraint by a factor of $1.01$. This gap is even more pronounced for the DAWN dataset at $10\%$, where \algecc{} violates the constraint by a factor of $10.3$ while \algpc{} satisfies the constraint. Thus, for situations where there is a need to protect certain edge types, standard \minecc{} algorithms do not provide meaningful solutions.

\section{Conclusions}
We have established the first approximation algorithm for hypergraph \maxecc{} and improved the best known approximation factor for graph \maxecc{}. We also introduced two ECC variants that incorporate fairness, designing new approximation algorithms which in practice far exceed their theoretical guarantees. One open direction is to try to further improve the best approximation ratio for \maxecc{}. This may require deviating significantly from previous LP rounding techniques, since it is currently very challenging to improve the approximation by even a small amount using this approach. Another open direction is to improve on the $2^{1/p}$-approximation for \pmeanECC{} when $p< 1$.

\newpage
\section*{Acknowledgements}
This work was supported in part by the Gordon \& Betty Moore Foundation under award GBMF4560 to Blair D. Sullivan, by the National Science Foundation under award IIS-1956286 to Blair D. Sullivan, and by the Army Research Office under award W911NF‐24-1-0156 to Nate Veldt.

\bibliography{refs.bib}
\bibliographystyle{plainnat}

\appendix
\onecolumn

\section{\maxecc{} Proofs}
\label{app:maxecc}
We use the following lemma as a small step in our approximation guarantee for hypergraph \maxecc{} (Theorem~\ref{thm:hypermaxecc}). The result was originally used in designing a $1/e^2$-approximation algorithm for graph \maxecc{}~\cite{angel2016clustering}. A full proof can be found in the work of~\cite{angel2016clustering}.
\begin{lemma}[Lemma 4 of~\citet{angel2016clustering}]
	\label{lem:angel}
	Let $\left\{X_1, X_2, \dots, X_j \right\}$ be a set of independent events satisfying $\sum_{i=1}^j \prob[X_i] \leq 1$, then the probability that {at most one} of them happens is greater than or equal to $2/e$.
\end{lemma}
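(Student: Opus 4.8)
The plan is to treat this as an optimization problem. Writing $p_i = \prob[X_i]$, $\mu = \sum_{i=1}^j p_i \le 1$, and $h = \prod_i(1-p_i) + \sum_i p_i\prod_{k\ne i}(1-p_k)$ for the probability that at most one $X_i$ occurs, I would (i) reduce to the case $\mu = 1$, (ii) transform the configuration $(p_1,\dots,p_j)$ toward an ``all-infinitesimal'' limit via a monotone smoothing step, and (iii) evaluate that limit. For step (i): appending one more independent event of probability $q$ replaces $h$ by $(1-q)h + q\prod_i(1-p_i) \le h$, since $\prob[\text{no }X_i\text{ occurs}] = \prod_i(1-p_i) \le h$; hence any instance with $\mu < 1$ can be completed by a dummy event of probability $1-\mu$ without increasing $h$, and it suffices to handle $\mu = 1$.

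For step (ii), fix $\mu = 1$ and \emph{split} an event $X_a$ of probability $p$ into two independent events of probabilities $p', p''$ with $p' + p'' = p$. A direct expansion gives
\[
    h_{\mathrm{after}} - h_{\mathrm{before}} = \Big(\textstyle\prod_{k\ne a}(1-p_k)\Big)\; p'p'' \;\Big(\textstyle\sum_{k\ne a}\tfrac{p_k}{1-p_k} - 1\Big).
\]
If $X_a$ is a heaviest event ($p_k \le p$ for all $k$), then $1-p_k \ge 1-p$, so $\sum_{k\ne a}\tfrac{p_k}{1-p_k} \le \tfrac{1}{1-p}\sum_{k\ne a}p_k = \tfrac{1-p}{1-p} = 1$, and splitting a heaviest event never increases $h$. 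I would then repeatedly halve (one of) the current heaviest event(s): $h$ is non-increasing along the process, and $\max_k p_k \to 0$ (the maximum is non-increasing; halving a size-$s$ event produces only pieces of size $s/2$, so once the max drops below a threshold $2\delta$ no event of size $\ge \delta$ is ever created, and the finitely many events of size $\ge \delta$ present at that moment keep getting halved away).

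For step (iii): along this sequence $\max_k p_k \to 0$, hence $\sum_k\tfrac{p_k}{1-p_k} = 1 + \sum_k\tfrac{p_k^2}{1-p_k} \to 1$ (the correction is at most $\tfrac{\max_k p_k}{1-\max_k p_k}$), and $\sum_k\log(1-p_k)$ lies between $-\sum_k\tfrac{p_k}{1-p_k}$ and $-\sum_k p_k$, both $\to -1$, so $h = \prod_k(1-p_k)\big(1 + \sum_k\tfrac{p_k}{1-p_k}\big) \to e^{-1}\cdot 2 = 2/e$. Since $h$ is non-increasing along the process and the starting configuration was arbitrary (subject to $\mu = 1$), this yields $h \ge 2/e$ in all cases. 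As a sanity check, $n$ equal events of probability $1/n$ give $h = (1-1/n)^{n-1}(2 - 1/n)$, which decreases monotonically to $2/e$.

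The real crux is step (ii): the smoothing operation has a \emph{sign that depends on the rest of the configuration}, so one cannot simply argue that equal probabilities are optimal (in fact $h$ is not globally Schur-convex). The key observation is that splitting the \emph{heaviest} event is always favorable, because there the residual sum $\sum_{k\ne a}\tfrac{p_k}{1-p_k}$ is provably at most $1$; the remaining work — convergence of the halving process and the limit computation — is then routine. This is essentially the argument of~\citet{angel2016clustering}, whose full proof we invoke.
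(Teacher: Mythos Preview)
The paper does not provide its own proof of this lemma; it simply cites \citet{angel2016clustering} and states that ``a full proof can be found'' there. Your proposal, by contrast, is a complete and correct self-contained argument. The three steps are all sound: step~(i) correctly reduces to $\mu=1$ by appending a dummy event; the key computation in step~(ii), namely $h_{\mathrm{after}} - h_{\mathrm{before}} = Q\,p'p''\,(S-1)$ with $Q=\prod_{k\ne a}(1-p_k)$ and $S=\sum_{k\ne a}p_k/(1-p_k)$, is exactly right, and the bound $S\le (1-p)^{-1}\sum_{k\ne a}p_k = 1$ when $a$ indexes a heaviest event is the crucial observation; and step~(iii)'s limit evaluation via the squeeze on $\sum_k\log(1-p_k)$ is clean.

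The only place the exposition should be tightened is the convergence argument for $\max_k p_k \to 0$: as written, ``once the max drops below $2\delta$'' presupposes what must be shown. A one-line fix is to track the potential $\Phi=\sum_k p_k^2$, which drops by exactly $M_t^2/2$ at each halving step (where $M_t$ is the current maximum), so $M_t\not\to 0$ would force $\Phi\to -\infty$. Alternatively, your parenthetical argument works once made explicitly inductive on the threshold (start from $2^n\delta>1$ and descend). Either way the gap is cosmetic, not substantive.
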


The remainder of our supporting lemmas are new results that we prove for our approximation algorithm for graph \maxecc{}.

\lemdependentynx*
\begin{proof}
	For a node $u$ and color $i$ we use $\overline{X}_u^c$ to denote the event that $u$ does not want $c$.

	We first observe that it is sufficient to show positive correlation between $Y_u^c$ and $N_v$. To see this,
	we use Bayes' Theorem and the independence of $X_u^c$ and $N_v$ to write
	\begin{align*}
		\pgiv{Y_u^c}{N_v \cap X_u^c} & = \frac{\proba{Y_u^c \cap N_v \cap X_u^c}}{\proba{N_v \cap X_u^c}}                   \\
		                             & = \frac{\pgiv{N_v \cap X_u^c}{Y_u^c}\proba{Y_u^c}}{\proba{N_v \cap X_u^c}}           \\
		                             & = \frac{\pgiv{N_v \cap X_u^c}{Y_u^c}\proba{Y_u^c}}{\proba{N_v} \cdot \proba{X_u^c}},
	\end{align*}
	as well as
	\[
		\pgiv{Y_u^c}{X_u^c} = \frac{\proba{X_u^c \cap Y_u^c}}{\proba{X_u^c}} = \frac{\pgiv{X_u^c}{Y_u^c}\proba{Y_u^c}}{\proba{X_u^c}} = \frac{1\cdot \proba{Y_u^c}}{\proba{X_u^c}}.
	\]
	By rearranging terms, we see that our claim is equivalent to
	\[
		\pgiv{N_v \cap X_u^c}{Y_u^c} = \pgiv{N_v}{Y_u^c} \geq \proba{N_v}.
	\]
	We complete the proof by demonstrating the equivalent inequality $\pgiv{Y_u^c}{N_v} \geq \proba{Y_u^c}$. Impose an arbitrary order $c_1, c_2 \ldots, c_k$ on the color set, and without loss of generality assume that $\ell(e) = c = c_k$. In what follows, we always write $c$ for $c_k$ and whenever considering a subscripted color $c_i$ we assume that $i \in W_u \setminus \{c\}$.

	If node $u$ has a strong color, then without loss of generality we assume that that strong color is $c_{k-1}$.
	The remainder of the proof is written for the case where $S_u = \emptyset$, in which case the set of weak colors (other than $c$) that might want node $u$ is $W_u \setminus \{c\} = \{c_1, c_2, \hdots, c_{k-1}\}$. In other words, $W_u$ is the set of colors with indices in $[k-1]$. If instead we had $S_u = \{c_{k-1}\}$ and $W_u \setminus \{c\} = \{c_1, c_2, \hdots, c_{k-2}\}$, the proof works in exactly the same way if we instead considered the set of colors associated with indices in $[k-2]$ instead of indices in $[k-1]$.

	Now, consider all possible subsets of $[k - 1]$ which may define (according to our ordering) those weak colors (besides $c$) which are wanted by $u$. For a particular subset $S \subseteq [k - 1]$,
	we write $X_u^S$ for the event that $u$ wants exactly (not considering $c$) the colors in $S$. In the following, we write $w(S) = \frac{1}{|S| + 1}$ and call this quantity the \emph{weight} of $S$.
	The events $X_u^S$ partition the sample space,
	so we may write
	\begin{align*}
		\proba{Y_u^c} & = \sum_{S \subseteq [k - 1]} \pgiv{Y_u^c}{X_u^S}\cdot\proba{X_u^S} = \proba{X_u^c}\cdot\sum_{S \subseteq [k - 1]} w(S)\cdot\proba{X_u^S}.
	\end{align*}
	In the above, the second equality follows from the observation that, conditioned on $X_u^S$ for any $S \subseteq [k - 1]$, the event $Y_u^c$ occurs if and only if (a) $u$ wants $c$ and (b) $c$ precedes each color in $S$ in the global ordering of colors.
	The conditions (a) and (b) are independent, and occur with probabilities $\proba{X_u^c}$ and $w(S)$, respectively.
	We will now make use of the independence
	of the $X_u^{c_i}$ events to write $\proba{X_u^S}$ as the product of $k - 1$ probabilities.
	\begin{align}\label{eq:yuc}
		\proba{Y_u^c} & = \proba{X_u^c}\cdot\sum_{S \subseteq [k - 1]} \left[ w(S)\cdot \left( \prod_{i \in S}\proba{X_u^{c_i}}\right)\cdot\left(\prod_{j \notin S}\proba{\overline{X}_u^{c_j}}\right) \right].
	\end{align}

	We want to write a similar expression for $\proba{Y_u^c | N_v}$. We begin by using similar reasoning as above to write
	\begin{align*}
		\pgiv{Y_u^c}{N_v} & = \sum_{S \subseteq [k - 1]} \pgiv{Y_u^c}{X_u^S \cap N_v}\cdot\pgiv{X_u^S}{N_v} \\
		                  & = \pgiv{X_u^c}{N_v}\cdot \sum_{S \subseteq [k - 1]} w(S)\cdot\pgiv{X_u^S}{N_v}  \\
		                  & = \proba{X_u^c}\cdot \sum_{S \subseteq [k - 1]} w(S)\cdot\pgiv{X_u^S}{N_v}.
	\end{align*}
	Next, we observe that the $X_u^{c_i}$ events remain independent even when conditioned on $N_v$. This allows us to write

	\begin{align*}
		\pgiv{Y_u^c}{N_v} = \proba{X_u^c}\cdot\sum_{S \subseteq [k - 1]} \left[ w(S)\cdot \left( \prod_{i \in S}\pgiv{X_u^{c_i}}{N_v}\right)\cdot\left(\prod_{j \notin S}\pgiv{\overline{X}_u^{c_j}}{N_v}\right) \right].
	\end{align*}
	To simplify this expression further, we claim that for every $i \in [k - 1]$,
	\[\pgiv{X_u^{c_i}}{N_v} = \pgiv{X_u^{c_i}}{\overline{X}_v^{c_i}}. \]
	This can be derived from the observation
	that $\bigcap_{j \neq i} \overline{X}_v^{c_j}$ is independent of $X_u^{c_i}$ when conditioned on $\overline{X}_v^{c_i}$, as well as being unconditionally independent of $X_v^{c_i}$.
	We then manipulate the definition of conditional probability to see that
	\begin{align*}
		\pgiv{X_u^{c_i}}{N_v} & = \frac{\proba{X_u^{c_i} \cap \bigcap_{j \neq i} \overline{X}_v^{c_j} \cap \overline{X}_v^{c_i}}}{\proba{\bigcap_{j \neq i} \overline{X}_v^{c_j} \cap \overline{X}_v^{c_i}}}        \\
		                      & = \frac{\proba{X_u^{c_i} \cap \bigcap_{j \neq i} \overline{X}_v^{c_j} \cap \overline{X}_v^{c_i}}}{\proba{\bigcap_{j \neq i} \overline{X}_v^{c_j}}\cdot\proba{\overline{X}_v^{c_i}}} \\
		                      & = \frac{\pgiv{X_u^{c_i} \cap \bigcap_{j \neq i} \overline{X}_v^{c_j}}{\overline{X}_v^{c_i}}}{\proba{\bigcap_{j \neq i} \overline{X}_v^{c_j}}}                                       \\
		                      & = \frac{\pgiv{X_u^{c_i}}{\overline{X}_v^{c_i}}\cdot\pgiv{\bigcap_{j \neq i} \overline{X}_v^{c_j}}{\overline{X}_v^{c_i}}}{\proba{\bigcap_{j \neq i} \overline{X}_v^{c_j}}}           \\
		                      & = \frac{\pgiv{X_u^{c_i}}{\overline{X}_v^{c_i}}\cdot\proba{\bigcap_{j \neq i} \overline{X}_v^{c_j}}}{\proba{\bigcap_{j \neq i} \overline{X}_v^{c_j}}}                                \\
		                      & = \pgiv{X_u^{c_i}}{\overline{X}_v^{c_i}}.
	\end{align*}

	We can now write a suitable counterpart to Equation~\eqref{eq:yuc}:
	\begin{align}\label{eq:yuc-conditional-nv}
		\pgiv{Y_u^c}{N_v} = \proba{X_u^c}\cdot\sum_{S \subseteq [k - 1]} \left[ w(S)\cdot \left( \prod_{i \in S}\pgiv{X_u^{c_i}}{\overline{X}_v^{c_i}}\right)\cdot\left(\prod_{j \notin S}\pgiv{\overline{X}_u^{c_j}}{\overline{X}_v^{c_j}}\right) \right].
	\end{align}

	We will complete the proof by showing that the RHS of Equation~\eqref{eq:yuc} is a lower bound for the RHS of Equation~\eqref{eq:yuc-conditional-nv}. We accomplish this in $k - 1$ steps, one for each color besides $c$.
	In the first step, we begin by rearranging the terms of Equation~\eqref{eq:yuc-conditional-nv} to isolate $\pgiv{X_u^{c_1}}{\overline{X}_v^{c_1}}$ and $\pgiv{\overline{X}_u^{c_1}}{\overline{X}_v^{c_1}}$:

	\begin{align*}
		\begin{split}
			\pgiv{Y_u^c}{N_v} = \proba{X_u^c} & \cdot \biggl[ \pgiv{X_u^{c_1}}{\overline{X}_v^{c_1}}\cdot\sum_{S \ni 1} w(S)\cdot \left( \prod_{1 \neq i \in S}\pgiv{X_u^{c_i}}{\overline{X}_v^{c_i}}\right)\cdot\left(\prod_{j \notin S}\pgiv{\overline{X}_u^{c_j}}{\overline{X}_v^{c_j}}\right)                \\
			                                  & + \pgiv{\overline{X}_u^{c_1}}{\overline{X}_v^{c_1}}\cdot \sum_{S \not\ni 1} w(S) \cdot  \left( \prod_{i \in S}\pgiv{X_u^{c_i}}{\overline{X}_v^{c_i}}\right)\cdot\left(\prod_{1 \neq j \notin S}\pgiv{\overline{X}_u^{c_j}}{\overline{X}_v^{c_j}}\right) \biggr].
		\end{split}
	\end{align*}

	In the above, we refer to the first and second sums as the $c_1$-\emph{wanted coefficient} and the $c_1$-\emph{not-wanted coefficient}, respectively.
	Observe that there exists a natural bijection between the summands in these two coefficients. Specifically, we map the summand in the $c_1$-wanted coefficient
	corresponding to set $S$ to the summand in the $c_1$-not-wanted coefficient corresponding to the set $S \setminus \{1\}$. These two summands are identical up to the
	difference between $w(S)$ and $w(S \setminus \{1\})$. By definition, the former weight is smaller. Hence, the $c_1$-wanted coefficient is smaller than
	the $c_1$-not-wanted coefficient.

	Next, we claim that $\pgiv{X_u^{c_1}}{\overline{X}_v^{c_1}} \leq \proba{X_u^{c_1}}$, and (equivalently) $\pgiv{\overline{X}_u^{c_1}}{\overline{X}_v^{c_1}} \geq \proba{\overline{X}_u^{c_1}}$.
	To prove this claim, we consider two cases. If $x_u^{c_1} \leq x_v^{c_1}$, then $\pgiv{X_u^{c_1}}{\overline{X}_v^{c_1}} = 0 \leq x_u^{c_1} = \proba{X_u^{c_1}}$. Otherwise, $x_u^{c_1} > x_v^{c_1}$, and we
	have $\pgiv{X_u^{c_1}}{\overline{X}_v^{c_1}} = \frac{x_u^{c_1} - x_v^{c_1}}{1 - x_v^{c_1}} = x_u^{c_1}\frac{1 - x_v^{c_1}/x_u^{c_1}}{1 - x_v^{c_1}} \leq x_u^{c_1} = \proba{X_u^{c_1}}$.

	We now rewrite our expression for $\pgiv{Y_u^c}{N_v}$ in terms of \emph{unconditional} probabilities of $u$ (not) wanting color $c_1$. In the following, let
	$0 \leq d = \proba{X_u^{c_1}} - \pgiv{X_u^{c_1}}{\overline{X}_v^{c_1}}$.

	\begin{align*}
		\begin{split}
			\pgiv{Y_u^c}{N_v} = \proba{X_u^c} & \cdot \biggl[ (\proba{X_u^{c_1}} - d)\cdot\sum_{S \ni 1} w(S)\cdot \left( \prod_{1 \neq i \in S}\pgiv{X_u^{c_i}}{\overline{X}_v^{c_i}}\right)\cdot\left(\prod_{j \notin S}\pgiv{\overline{X}_u^{c_j}}{\overline{X}_v^{c_j}}\right)                \\
			                                  & + (\proba{\overline{X}_u^{c_1}} + d)\cdot \sum_{S \not\ni 1} w(S) \cdot  \left( \prod_{i \in S}\pgiv{X_u^{c_i}}{\overline{X}_v^{c_i}}\right)\cdot\left(\prod_{1 \neq j \notin S}\pgiv{\overline{X}_u^{c_j}}{\overline{X}_v^{c_j}}\right) \biggr].
		\end{split}
	\end{align*}

	Our observation that the $c_1$-wanted coefficient is smaller than the $c_1$-not-wanted coefficient yields the bound we desire:
	\begin{align*}
		\begin{split}
			\pgiv{Y_u^c}{N_v} \geq \proba{X_u^c} & \cdot \biggl[ \proba{X_u^{c_1}}\cdot\sum_{S \ni 1} w(S)\cdot \left( \prod_{1 \neq i \in S}\pgiv{X_u^{c_i}}{\overline{X}_v^{c_i}}\right)\cdot\left(\prod_{j \notin S}\pgiv{\overline{X}_u^{c_j}}{\overline{X}_v^{c_j}}\right)                \\
			                                     & + \proba{\overline{X}_u^{c_1}}\cdot \sum_{S \not\ni 1} w(S) \cdot  \left( \prod_{i \in S}\pgiv{X_u^{c_i}}{\overline{X}_v^{c_i}}\right)\cdot\left(\prod_{1 \neq j \notin S}\pgiv{\overline{X}_u^{c_j}}{\overline{X}_v^{c_j}}\right) \biggr].
		\end{split}
	\end{align*}

	This completes the first of $k - 1$ steps. In the next step, we begin by rearranging terms in the inequality above to isolate $\pgiv{X_u^{c_2}}{\overline{X}_v^{c_2}}$ and $\pgiv{\overline{X}_u^{c_2}}{\overline{X}_v^{c_2}}$. We then apply the same analysis, yielding another lower bound on $\pgiv{Y_u^c}{N_v}$, this time written
	in terms of unconditional probabilities of $u$ (not) wanting colors $c_1$ and $c_2$, and conditional probabilities of $u$ (not) wanting colors $c_3, c_4, \ldots c_{k-1}$. After $k - 1$ steps, our lower bound becomes identical (up to rearranging terms) to the RHS of Equation~\ref{eq:yuc}, as desired.
\end{proof}

\lemsumtoprod*
\begin{proof}
	Observe that the claim is trivially true when $m = 1$. We proceed via induction on $m$. For $m > 1$, we have $\sum_{t = 1}^{m - 1} \leq \beta - x_m$, so by the inductive hypothesis
	\[ \prod_{t = 1}^m (1 - x_t) = (1 - x_m)\cdot \prod_{t=1}^{m-1}(1 - x_t) \geq (1 - x_m)(1 - \beta + x_m) = 1 - \beta + x_m(\beta - x_m) \geq 1 - \beta.\]
\end{proof}

\lembounding*
\begin{proof}
	Let $\mathcal{S} = \{\vx \in [0,2/3]^m \colon \sum_{i = 1}^m x_i = 1\} \subsetneq \mathcal{D}$ be the set of input vectors for $f$ whose entries sum exactly to 1. Let $\mathcal{I}\subsetneq \mathcal{S}$ be the set of vectors in $\mathcal{S}$ with one entry equal to $2/3$, one entry equal to $1/3$, and all other entries equal to 0. More formally, if $\vx \in \mathcal{I}$, this means there exists two distinct indices $\{i,j\}$ such that $x_i = 2/3$, $x_j = 1/3$, and $x_k = 0$ for every $k \notin \{i,j\}$. Our goal is to prove there exists some $\vx \in \mathcal{I}$ that minimizes $f$. We prove this in two steps: (1) we show that there exists a minimizer of $f$ in $\mathcal{S}$, and then (2) we show how to convert an arbitrary vector $\vx \in \mathcal{S}$ into a new vector $\vx^* \in \mathcal{I}$ satisfying $f(\vx^*) \leq f(\vx)$. Given $\vx^* \in \mathcal{I}$, the value $f(\vx^*)$ in Eq.~\eqref{eq:minval} amounts to a simple function evaluation, realizing that $P(\vx^*,t) = 0$ for $t \geq 3$.

	\paragraph{Step 1: Proving minimizers exist in $\mathcal{S}$.}

	Let $\vx \in \mathcal{D} \setminus \mathcal{S}$, and let $y \in  [m]$ be an entry such that $x_y < 2/3$. Set 
	\begin{equation*}
	\delta = \min \left\{ 1 - \sum_{i = 1}^m x_i, \frac{2}{3}- x_y\right\} > 0
	\end{equation*} and define a new vector $\vx' \in \mathcal{D}$ by
	\begin{equation}
		x_i' = \vx'(i) = \begin{cases}
			x_i          & \text{ if $i \neq y$} \\
			x_i + \delta & \text{otherwise}.
		\end{cases}
	\end{equation}
	By our choice of $\delta$ we know that $\vx' \in \mathcal{D}$, and we will prove this satisfies $f(\vx') \leq f(\vx)$. To do so, we re-write $f(\vx)$ in a way that isolates terms involving $x_y.$ Let $\noty{} = [m] \setminus \{y\}$. Recall that $f$ is a linear combination of terms $P(\vx,t)$, where $P(\vx,t)$ represents the probability that exactly $t$ events out of a set of $m$ independent events occur. Entry $x_i$ of $\vx$ is the probability that $i$th event occurs. We can therefore re-write:
	\begin{align}
		\label{eq:pxtwithy}
		P(\vx,t) & = P(\vx[\noty{}],t) (1-x_y) + P(\vx[\noty{}], t-1) x_y.
	\end{align}
	To explain this in more detail, we use a slight abuse of terminology and refer to $[m]$ as a set of events. The first term in Eq.~\eqref{eq:pxtwithy} is the probability that exactly $t$ events in $\noty{}$ occur and that event $y$ does not occur. The second term is the probability that $y$ does occur and exactly $t -1$ events in $\noty{}$ occur. We therefore re-write $f(\vx)$ as
	\begin{align}
		f(\vx) & = \sum_{t=0}^m a_t P(\vx, t)                                                                       & (\text{definition of $f$}) \notag            \\
		       & = \sum_{t=0}^m a_t \left(P(\vx[\noty{}],t) (1-x_y) + P(\vx[\noty{}], t-1) x_y \right)              & (\text{from Eq.~\eqref{eq:pxtwithy}}) \notag \\
		       & = \sum_{t=0}^m a_t P(\vx[\noty{}],t) (1-x_y) + \sum_{t=0}^m a_t P(\vx[\noty{}], t-1) x_y           & (\text{separating terms}) \notag             \\
		       & = \sum_{t=0}^m a_t P(\vx[\noty{}],t) (1-x_y) + \sum_{t=-1}^{m-1} a_{t+1} P(\vx[\noty{}], t) x_y    & (\text{change of variables}) \notag          \\
		       & = \sum_{t=0}^{m-1} a_t P(\vx[\noty{}],t) (1-x_y) + \sum_{t=0}^{m-1} a_{t+1} P(\vx[\noty{}], t) x_y & \label{eq:fvx_p_0}                           \\
		       & = \sum_{t=0}^{m-1} P(\vx[\noty{}],t) \left( a_t(1-x_y) + a_{t+1}x_y \right)                        & \label{eq:last}
	\end{align}
	In Eq.~\eqref{eq:fvx_p_0} we have used the fact that $P(\vx[\noty{}],\ell) = 0$ if $\ell < 0$ or $\ell > m-1$.
	Observe that if we replace the value of $x_y$ with $x_y + \delta$ in Eq.~\eqref{eq:last} this gives the value of $f(\vx')$. We can then see that
	\begin{align*}
		f(\vx') & = \sum_{t=0}^{m-1} P(\vx[\noty{}],t) \left( a_t(1-x_y - \delta) + a_{t+1}(x_y + \delta) \right)       \\
		        & = \sum_{t=0}^{m-1} P(\vx[\noty{}],t) \left( a_t(1-x_y) + a_{t+1}x_y + \delta (a_{t+1} - a_t)  \right) \\
		        & = f(\vx) + \delta \sum_{t=0}^{m-1} (a_{t+1} - a_t).
	\end{align*}
	Using the constraint $a_{t+1} \leq a_t$ we notice that the last term is always less that or equal to zero. Hence, we have $f(\vx') \leq f(\vx)$. Recall that $\delta = \min\{ 1 - \sum_{i =1}^m x_i, 2/3 - x_y\}$. If $\delta = 1 - \sum_{i=1}^m x_i$, then we can see that $\vx' \in \mathcal{S}$ and we are done. Otherwise, $\delta = 2/3- x_y$, and $x_y' = 2/3$. We can then apply the same exact procedure again, by increasing the value of some index $x_z$ (where $z \neq y$), which we know satisfies $x_z \leq 1/3$. The second application of this procedure is guaranteed to produce a vector in $\mathcal{S}$.

	\paragraph{Step 2: Editing a vector from $\mathcal{S}$ to $\mathcal{I}$.}
	The proof structure for Step 2 is very similar as Step 1, but is more involved as it requires working with two entries of $\vx$ rather than one. Let $\vx \in \mathcal{S} \setminus \mathcal{I}$, which means we can identify two entries $x_y$ and $x_z$ in the vector (with $y \neq z$) satisfying the inequality
	\begin{equation}
		\label{eq:yzinequality}
		0 < x_y \leq x_z < 2/3.
	\end{equation}
	Given these entries, set $\delta = \min\{x_y, 2/3 - x_z \}$ and define a new vector $\vx'$ by
	\begin{equation}
		x_i' = \vx'(i) = \begin{cases}
			x_i          & \text{ if $i \notin \{y,z\}$} \\
			x_i - \delta & \text{ if $i = y$}            \\
			x_i + \delta & \text{ if $i = z$.}
		\end{cases}
	\end{equation}
	Observe that $\vx' \in \mathcal{S}$ and that either $x_y' = 0$ or $x_z' = 2/3$. We will show that $f(\vx') \leq f(\vx)$, by re-writing $f(\vx)$ in a way that isolates terms involving $x_y$ and $x_z$. Let $\notyz{} = [m] \backslash \{y,z\}$. Similar to the proof of Step 1, we can re-write $P(\vx,t)$ as
	\begin{align}
		P(\vx,t) & = P(\vx[\notyz{}],t) (1-x_y)(1-x_z) \label{eq:pxt1}                               \\
		         & \;\;+ P(\vx[\notyz{}], t-1) \left[x_y (1-x_z) + (1-x_y)x_z\right] \label{eq:pxt2} \\
		         & \;\; + P(\vx[\notyz{}], t-2) x_y x_z. \label{eq:pxt3}
	\end{align}
	Line~\eqref{eq:pxt1} captures the probability that exactly $t$ events from the set $\notyz{}$ happen, and that neither of the events $y$ or $z$ happen. Line~\eqref{eq:pxt2} is the probability that $t-1$ events from $\notyz{}$ happen, and exactly one of the events $\{y,z\}$ happens. Finally, line~\eqref{eq:pxt3} is the probability that both $y$ and $z$ happen, and $t-2$ of the events in $\notyz{}$ happen. For simplicity we now define
	\begin{align*}
		\alpha & = (1-x_y)(1-x_z)           \\
		\beta  & = x_y (1-x_z) + (1-x_y)x_z \\
		\gamma & = x_y x_z,
	\end{align*}
	so that we can re-write $f(\vx)$ as follows:
	\begin{align}
		f(\vx) & = \sum_{t=0}^m a_t P(\vx, t)                                                                                                                                & (\text{def. of $f$}) \notag      \\
		       & = \sum_{t=0}^m a_t \left(P(\vx[\notyz{}],t)\alpha + P(\vx[\notyz{}], t-1) \beta + P(\vx[\notyz{}], t-2) \gamma  \right)                                     & (\text{exp. $P(\vx,t)$}) \notag  \\
		       & = \sum_{t=0}^m a_t P(\vx[\notyz{}],t)\alpha + \sum_{t=0}^m a_t P(\vx[\notyz{}], t-1) \beta +  \sum_{t=0}^m a_t P(\vx[\notyz{}], t-2) \gamma                 & (\text{sep. sums}) \notag        \\
		       & = \sum_{i=0}^m a_i P(\vx[\notyz{}],i)\alpha + \sum_{j=-1}^{m-1} a_{j+1} P(\vx[\notyz{}], j) \beta +  \sum_{k=-2}^{m-2} a_{k+2} P(\vx[\notyz{}], k) \gamma   & (\text{redef. variables}) \notag \\
		       & = \sum_{i=0}^m a_i P(\vx[\notyz{}],i)\alpha + \sum_{j=0}^{m-1} a_{j+1} P(\vx[\notyz{}], j) \beta +  \sum_{k=0}^{m-2} a_{k+2} P(\vx[\notyz{}], k) \gamma     & \label{eq:littlel}               \\
		       & = \sum_{i=0}^{m-2} a_i P(\vx[\notyz{}],i)\alpha + \sum_{j=0}^{m-2} a_{j+1} P(\vx[\notyz{}], j) \beta +  \sum_{k=0}^{m-2} a_{k+2} P(\vx[\notyz{}], k) \gamma & \label{eq:bigl}                  \\
		       & = \sum_{t=0}^{m-2} P(\vx[\notyz{}],t) \left( a_t \alpha +  a_{t+1}  \beta +   a_{t+2}  \gamma \right).                                                       & \label{eq:rewritef}
	\end{align}
	In Eq.~\eqref{eq:littlel} and Eq.~\eqref{eq:bigl} we have used the fact that $P(\vx[\notyz{}],\ell) = 0$ if $\ell < 0$ or $\ell > m-2$. Consider now the vector $\vx'$ obtained by perturbing entries $x_y$ and $x_z$, and define
	\begin{equation*}
		\begin{array}{llll}
			\alpha' & = (1-x_y')(1-x_z')                                               \\
			        & = (1-(x_y-\delta))(1-(x_z +\delta))                              \\
			        & = \alpha + \delta(x_y-x_z) - \delta^2                            \\                                                                                                           \\
			\beta'  & = x_y' (1-x_z') + (1-x_y')x_z'                                   \\
			        & = (x_y-\delta) (1-(x_z +\delta)) + (1-(x_y-\delta))(x_z +\delta) \\
			        & = \beta + 2\delta (x_z - x_y) +2  \delta^2                       \\                                                                     \\
			\gamma' & = x_y' x_z'                                                      \\
			        & = (x_y-\delta) (x_z +\delta)                                     \\
			        & = \gamma + \delta (x_y - x_z) - \delta^2.
		\end{array}
	\end{equation*}
	In the expression for $f(\vx)$ given in Eq.~\eqref{eq:rewritef}, the entries $x_y$ and $x_z$ appear only in the terms $\alpha$, $\beta$, and $\gamma$. Therefore, if we replace $\{\alpha, \beta, \gamma\}$ with $\{\alpha', \beta', \gamma'\}$, we obtain a similar expression for $f(\vx')$. We can then see that

	\begin{align*}
		f(\vx') - f(\vx) & = \sum_{t=0}^{m-2} P(\vx[\notyz{}],t) \left( a_t (\alpha'-\alpha) +  a_{t+1}  (\beta'-\beta) +   a_{t+2}  (\gamma'-\gamma) \right).
	\end{align*}

	Isolating the expression inside the summation we can see

	\begin{align*}
		   & a_t (\alpha'-\alpha) +  a_{t+1}  (\beta'-\beta) +   a_{t+2}  (\gamma'-\gamma)                                             \\
		=~ & a_t (\delta(x_y-x_z) - \delta^2) + a_{t+1}  (2\delta (x_z - x_y) +2  \delta^2) + a_{t+2}  (\delta (x_y - x_z) - \delta^2) \\
		=~ & a_t (\delta(x_y-x_z) - \delta^2) -  2a_{t+1}  (\delta (x_y - x_z) -  \delta^2) + a_{t+2}  (\delta (x_y - x_z) - \delta^2) \\
		=~ & (a_t - 2a_{t+1} + a_{t+2}) (\delta(x_y - x_z) - \delta^2).
	\end{align*}

	Plugging this back into the equation for $f(\vx') - f(\vx)$ and using the fact that $a_t - 2a_{t+1} + a_{t+2} \geq 0$ and $(x_y - x_z) - \delta < 0$, we have

	\[
		f(\vx') - f(\vx) = \sum_{t=0}^{m-2} P(\vx[\notyz{}],t) \left( (a_t - 2a_{t+1} + a_{t+2}) (\delta(x_y - x_z) - \delta^2)\right)\leq 0.
	\]

	If we start with an arbitrary vector $\vx \in \mathcal{S}$, applying one iteration of the above procedure will ensure that \emph{at least} one entry of $\vx$ will move to one of the endpoints $\{0,2/3\}$ of the interval $[0,2/3]$. Therefore, applying this procedure $m$ (or fewer) times to an arbitrary vector $\vx \in \mathcal{S}$ will produce a vector $\vx^* \in \mathcal{I}$ that satisfies $f(\vx^*) \leq f(\vx)$.
\end{proof}

\lemboundingconstraints*
\begin{proof}
	The first inequality (monotonicity) is easy to check for both sequences. The most involved step is checking that the second inequality holds for the sequence in~\eqref{eq:complicatedat}. To show it holds, we note that the inequality $a_t +  a_{t+2} \geq 2a_{t+1}$ effectively corresponds to a discrete version of convexity. We first extend the definition of the sequence in~\eqref{eq:complicatedat} to all positive reals by defining the function
	\begin{equation*}
		h(x) = \frac{2}{9}\left( \frac{1}{x+1} + \frac{1}{x+3} \right) + \frac{5}{9} \left( \frac{1}{x+2} \right),
	\end{equation*}
	which we can easily prove is convex over the interval $[0,\infty)$ by noting that its second derivative is
	\[
		h''(x) = \frac{1}{9}\left( \frac{10}{(x+2)^3} + \frac{4}{(x+3)^3} + \frac{4}{(x+1)^3} \right),
	\]
	which is greater than $0$ for all $x \geq 0$. From the definition of convexity for any $\beta \in [0, 1]$ and any $x, y \geq 0$ we have
	\[
		h(x\beta + (1-\beta) y) \leq \beta h(x) + (1-\beta) h(y).
	\]
	Now if we consider the specific values $x=t$, $y=t+2$, and $\beta=1/2$ we get
	\begin{align*}
		h\left(t\frac{1}{2} + \frac{1}{2} (t+2)\right) & \leq \frac{1}{2} h(t) + \frac{1}{2} h(t+2) \\
		\implies  h(t+1)                               & \leq \frac{1}{2} h(t) + \frac{1}{2} h(t+2) \\
		\implies  2 h(t+1)                             & \leq h(t) + h(t+2)                         \\
		\implies 2a_{t+1}                              & \leq a_t + a_{t+2}.
	\end{align*}
	One can use a similar approach to show (even more easily) that the sequence $a_t = 1/(1+g+t)$ also satisfies the second inequality.
\end{proof}

\section{Full proof of~\texorpdfstring{\Cref{thm:cfminecc-NPhard}}{}}
\label{appendix:color-fair}
\label{app:cf}
\cfmineccNPhard*
{
\usetikzlibrary{math}

\definecolor{c1}{RGB}{0, 114, 178}
\definecolor{c2}{RGB}{213, 94, 0}
\definecolor{c3}{RGB}{0, 158, 115}
\definecolor{c3prime}{RGB}{240, 228, 66}
\definecolor{b1}{RGB}{230, 159, 0}
\definecolor{b2}{RGB}{86, 180, 233}
\definecolor{b3}{RGB}{204, 121, 167}


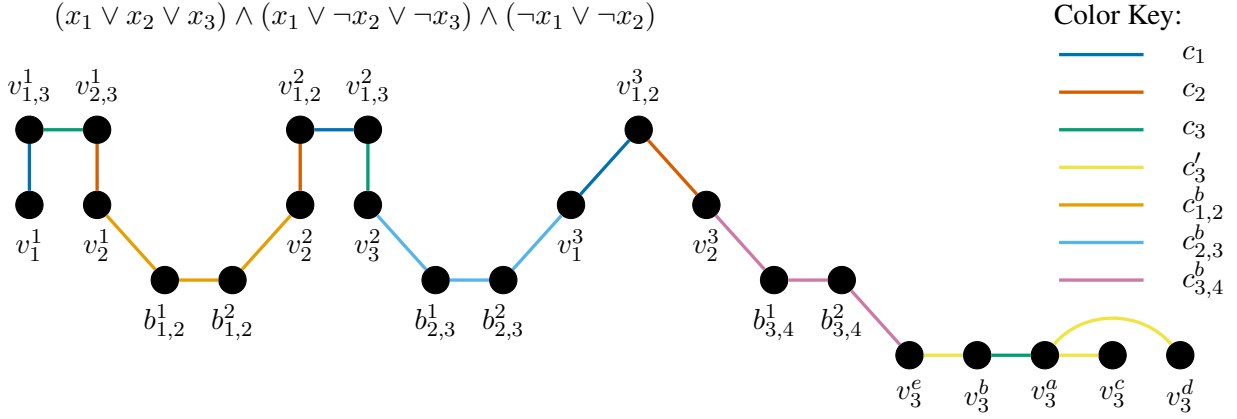
\begin{figure}[t]
    \centering
    \begin{tikzpicture}

        \tikzmath{
            \leftx = 0;
            \dy = 1;
            \confy = 0;
            \freey = \confy - \dy;
            \bridgey = \confy - 2*\dy;
            \sparey = \confy - 3*\dy;
            \dx = .9;
            \edgewidth = 1.25;
            \keyx = \leftx + 15*\dx;
            \keyy = \confy + 1.5;
            \keydx = 1.5;
            \keydy = -.5;
            \keytitlex = \keyx-.2;}

        \node[circle, label=right:$(x_1 \lor x_2 \lor x_3) \land (x_1 \lor \neg x_2 \lor \neg x_3) \land (\neg x_1 \lor \neg x_2)$] at (\leftx, \keyy) (formula) {};

        \node[circle, draw, fill=black, label=below:$v_1^1$] at (\leftx, \freey) (free11) {};
        \node[circle, draw, fill=black, label=above:$v_{1,3}^1$] at (\leftx, \confy) (conf131) {};
        \node[circle, draw, fill=black, label=above:$v_{2,3}^1$] at (\leftx + \dx, \confy) (conf231) {};
        \node[circle, draw, fill=black, label=below:$v_2^1$] at (\leftx + \dx, \freey) (free21) {};

        \draw[c1, line width=\edgewidth pt] (free11) -- (conf131);
        \draw[c3, line width=\edgewidth pt] (conf131) -- (conf231);
        \draw[c2, line width=\edgewidth pt] (conf231) -- (free21);

        \node[circle, draw, fill=black, label=below:$b_{1,2}^1$] at (\leftx + 2*\dx, \bridgey) (bridge121) {};
        \node[circle, draw, fill=black, label=below:$b_{1,2}^2$] at (\leftx + 3*\dx, \bridgey) (bridge122) {};

        \node[circle, draw, fill=black, label=below:$v_2^2$] at (\leftx + 4*\dx, \freey) (free22) {};
        \node[circle, draw, fill=black, label=above:$v_{1,2}^2$] at (\leftx + 4*\dx, \confy) (conf122) {};
        \node[circle, draw, fill=black, label=above:$v_{1,3}^2$] at (\leftx + 5*\dx, \confy) (conf132) {};
        \node[circle, draw, fill=black, label=below:$v_3^2$] at (\leftx + 5*\dx, \freey) (free32) {};

        \draw[c2, line width=\edgewidth pt] (free22) -- (conf122);
        \draw[c1, line width=\edgewidth pt] (conf122) -- (conf132);
        \draw[c3, line width=\edgewidth pt] (conf132) -- (free32);

        \node[circle, draw, fill=black, label=below:$b_{2,3}^1$] at (\leftx + 6*\dx, \bridgey) (bridge231) {};
        \node[circle, draw, fill=black, label=below:$b_{2,3}^2$] at (\leftx + 7*\dx, \bridgey) (bridge232) {};

        \node[circle, draw, fill=black, label=below:$v_1^3$] at (\leftx + 8*\dx, \freey) (free13) {};
        \node[circle, draw, fill=black, label=above:$v_{1,2}^3$] at (\leftx + 9*\dx, \confy) (conf123) {};
        \node[circle, draw, fill=black, label=below:$v_2^3$] at (\leftx + 10*\dx, \freey) (free23) {};

        \draw[c1, line width=\edgewidth pt] (free13) -- (conf123);
        \draw[c2, line width=\edgewidth pt] (conf123) -- (free23);

        \node[circle, draw, fill=black, label=below:$b_{3,4}^1$] at (\leftx + 11*\dx, \bridgey) (bridge341) {};
        \node[circle, draw, fill=black, label=below:$b_{3,4}^2$] at (\leftx + 12*\dx, \bridgey) (bridge342) {};

        \node[circle, draw, fill=black, label=below:$v_3^e$] at (\leftx + 13*\dx, \sparey) (spare3e) {};
        \node[circle, draw, fill=black, label=below:$v_3^b$] at (\leftx + 14*\dx, \sparey) (spare3b) {};
        \node[circle, draw, fill=black, label=below:$v_3^a$] at (\leftx + 15*\dx, \sparey) (spare3a) {};
        \node[circle, draw, fill=black, label=below:$v_3^c$] at (\leftx + 16*\dx, \sparey) (spare3c) {};
        \node[circle, draw, fill=black, label=below:$v_3^d$]     at (\leftx + 17*\dx, \sparey) (spare3d) {};

        \draw[c3prime, line width=\edgewidth pt] (spare3e) -- (spare3b);
        \draw[c3, line width=\edgewidth pt] (spare3b) -- (spare3a);
        \draw[c3prime, line width=\edgewidth pt] (spare3a) -- (spare3c);
        \draw[c3prime, line width=\edgewidth pt] (spare3a) to[bend left=50] (spare3d);

        \draw[b1, line width=\edgewidth pt] (free21) -- (bridge121);
        \draw[b1, line width=\edgewidth pt] (bridge121) -- (bridge122);
        \draw[b1, line width=\edgewidth pt] (bridge122) -- (free22);

        \draw[b2, line width=\edgewidth pt] (free32) -- (bridge231);
        \draw[b2, line width=\edgewidth pt] (bridge231) -- (bridge232);
        \draw[b2, line width=\edgewidth pt] (bridge232) -- (free13);

        \draw[b3, line width=\edgewidth pt] (free23) -- (bridge341);
        \draw[b3, line width=\edgewidth pt] (bridge341) -- (bridge342);
        \draw[b3, line width=\edgewidth pt] (bridge342) -- (spare3e);

        \node[circle, label=right:{Color Key:}] at (\keytitlex, \keyy) () {};

        \node[circle] at (\keyx, \keyy + \keydy) (c1left) {};
        \node[circle, label=right:$c_1$] at (\keyx + \keydx, \keyy + \keydy) (c1right) {};
        \draw[c1, line width=\edgewidth pt] (c1left) -- (c1right);

        \node[circle] at (\keyx, \keyy + 2*\keydy) (c1left) {};
        \node[circle, label=right:$c_2$] at (\keyx + \keydx, \keyy + 2*\keydy) (c1right) {};
        \draw[c2, line width=\edgewidth pt] (c1left) -- (c1right);

        \node[circle] at (\keyx, \keyy + 3*\keydy) (c1left) {};
        \node[circle, label=right:$c_3$] at (\keyx + \keydx, \keyy + 3*\keydy) (c1right) {};
        \draw[c3, line width=\edgewidth pt] (c1left) -- (c1right);

        \node[circle] at (\keyx, \keyy + 4*\keydy) (c1left) {};
        \node[circle, label=right:$c_3'$] at (\keyx + \keydx, \keyy + 4*\keydy) (c1right) {};
        \draw[c3prime, line width=\edgewidth pt] (c1left) -- (c1right);

        \node[circle] at (\keyx, \keyy + 5*\keydy) (c1left) {};
        \node[circle, label=right:$c_{1,2}^b$] at (\keyx + \keydx, \keyy + 5*\keydy) (c1right) {};
        \draw[b1, line width=\edgewidth pt] (c1left) -- (c1right);

        \node[circle] at (\keyx, \keyy + 6*\keydy) (c1left) {};
        \node[circle, label=right:$c_{2,3}^b$] at (\keyx + \keydx, \keyy + 6*\keydy) (c1right) {};
        \draw[b2, line width=\edgewidth pt] (c1left) -- (c1right);

        \node[circle] at (\keyx, \keyy + 7*\keydy) (c1left) {};
        \node[circle, label=right:$c_{3,4}^b$] at (\keyx + \keydx, \keyy + 7*\keydy) (c1right) {};
        \draw[b3, line width=\edgewidth pt] (c1left) -- (c1right);
    \end{tikzpicture}
    \caption{\label{fig:hardness} The construction given by~\Cref{thm:cfminecc-NPhard} for the CNF formula on three clauses $C_1 = (x_1 \lor x_2 \lor x_3)$, $C_2 = (x_1 \lor \neg x_2 \lor \neg x_3)$, and $C_3 = (\neg x_1 \lor \neg x_2)$. 
    Notice that the vertices are partitioned visually into four horizontal layers. The top layer contains \emph{conflict} vertices, the second from top contains \emph{free} vertices, the second from bottom \emph{bridge} vertices, and the bottom \emph{spare} vertices.
    Every edge containing a conflict vertex is a \emph{conflict} edge, every edge containing a bridge vertex is a \emph{bridge} edge, and all other edges are \emph{spare} edges.
    The colors $c_1, c_2$ and $c_3$ correspond to the clauses $C_1, C_2$, and $C_3$. The color $c_3'$ is the \emph{spare} color associated with $C_3$. The remaining colors are \emph{bridge} colors. We refer to the proof of~\Cref{thm:cfminecc-NPhard} for a formal description of the construction and the accompanying analysis.
    }

\end{figure}
}
\begin{proof}
    We reduce from \boolSAT{}, for which the input is a formula written in conjunctive normal form, consisting of $m$ clauses $C_1, C_2, \ldots C_m$ over $n$ variables $x_1, x_2, \ldots x_n$.
    For each variable $x_i$, we write $x_i$ and $\neg x_i$ for the corresponding positive and negative literals.
    We make the standard assumptions that no clause contains both literals of any variable, and that every literal appears at least once.
    We also assume that each clause has size $2$ or $3$, and that each variable appears in at most three clauses (implying that each literal appears in at most two clauses); hardness is retained under these assumptions~\cite{tovey1984simplified}.
    Given an instance of this problem, we construct an instance $(G = (V, E), \tau = 2)$ of \cfminECC{}. See~\Cref{fig:hardness} for a visual aid.

    For each clause $C_j$, we create a unique color $c_j$. Also, if $C_j$ has size two, we create a second unique color $c_j'$, and five vertices $v_j^a, v_j^b, v_j^c, v_j^d,$ and $v_j^e$.
    We call these the \emph{spare vertices} associated with $C_j$.
    The reason for creating these vertices and the second color $c_j'$ will become apparent later.
    Next, for each variable $x_i$, we create a vertex $v_{j_1, j_2}^i$ for each (ordered) pair of clauses $C_{j_1}, C_{j_2}$ with $C_{j_1}$ containing the positive literal $x_i$ and $C_{j_2}$ containing the negative literal $\neg x_i$.
    We call $v_{j_1, j_2}^i$ a \emph{conflict vertex} for the variable $x_i$ and the variable-clause pairs $(x_i, C_{j_1})$ and $(x_i, C_{j_2})$. Observe that each variable has either one or two associated conflict vertices, as does each variable-clause pair.
    For each variable-clause pair $(x_i, C_j)$ with a single associated conflict vertex, we create an additional vertex $v_j^i$ and call this the \emph{free vertex} for the variable-clause pair $(x_i, C_j)$.

    Now we create edges.
    For each clause $C_j$, we begin by creating one edge $e_j^i$ of color $c_j$ for each variable $x_i$ contained in $C_j$. This edge contains either the two $(x_i, C_j)$ conflict vertices, or the single $(x_i, C_j)$ conflict vertex and the $(x_i, C_j)$ free vertex.
    We call the edge $e_j^i$ a \emph{conflict edge} associated both with clause $C_j$, and with the literal of $x_i$ which appears in $C_j$.
    Observe that every free vertex has degree one, and every conflict vertex has degree two.
    For clauses $C_j$ of size $3$, there are exactly three edges of color $c_j$.
    For clauses $C_j$ of size $2$, we have thus far created two edges of color $c_j$.
    For each such clause, we use the associated spare vertices to create a third edge $\{v_j^a, v_j^b\}$ of color $c_j$, which we call the \emph{spare edge} associated with $C_j$, and three edges $\{v_j^a, v_j^c\}$, $\{v_j^a, v_j^d\}$, and $\{v_j^b, v_j^e\}$, each of color $c_j'$.
    Observe that our constructed graph still has maximum degree three, and that there are now exactly $3$ edges of every color.
    Finally, we set $\tau = 2$.

    Since free vertices have degree one, they do not participate in cycles.
    It is also clear from the construction that spare vertices do not participate in cycles.
    Thus, any cycle contains only conflict vertices.
    Note that if two conflict vertices share an edge, then they are associated with the same variable-clause pair.
    It follows that every vertex in a cycle is associated with some single variable-clause pair.
    However, a variable-clause pair has at most two associated conflict vertices, so we have constructed a forest.

    Now we will add some gadgets to turn our forest into a tree. Suppose that the graph we have constructed so far has $q$ connected components.
    Impose an arbitrary order on these components, and label them $G_1, G_2, \ldots, G_q$.
    Observe that every connected component contains either only spare vertices or two free vertices.
    In both cases, it is possible to add two edges each with one (distinct) endpoint in $G_i$ without raising the maximum degree above three.
    That the endpoints are distinct will be important when we analyze the cutwidth.
    For each consecutive pair $G_i, G_{i+1}$ of connected components, we add two \emph{bridge vertices} $b_{i, i+1}^1, b_{i, i+1}^2$ and a \emph{bridge color} $c_{i, i+1}^b$.
    We add \emph{bridge edges} (chosen so as not to violate our maximum degree constraint) from $G_i$ to $b_{i,i+1}^1$, from $b_{i, i+1}^1$ to $b_{i, i+1}^2$, and from $b_{i, i+1}^2$ to $G_{i+1}$.
    We color each of these three edges with $c_{i, i+1}^b$. Observe that the graph is now connected, but it is still acyclic since there is exactly
    one path between any pair of vertices which were in different connected components before the addition of our bridge gadgets.

    We now claim that the constructed graph $G$ has cutwidth~$2$.
    To this end, we will construct a ordering $\sigma\colon V \rightarrow \mathbb{N}$ of the vertices of $G$, where $\sigma$ is injective, $\sigma(v) = 1$ indicates that $v$ is the first vertex in the ordering, $\sigma(u) = |V|$ indicates that $u$ is the last vertex in the ordering, and $\sigma(v) < \sigma(u)$ if and only if $v$ precedes $u$ in the ordering.
    We will show that for every $i$, there exist at most two edges $uv \in E$ with the property that $\sigma(u) \leq i$ and $\sigma(v) > i$. We refer to this number of edges as the \emph{width} of the cut between vertices $i$ and $i+1$ in $\sigma$.
    We begin by guaranteeing that, for every $i$, if $u$ is the last vertex of $G_i$ in $\sigma$, $v$ is the first vertex of $G_{i+1}$ in $\sigma$, and $b_{i, i+1}^1, b_{i, i+1}^2$ are the associated bridge vertices, then $\sigma(u) = \sigma(b_{i, i+1}^1) - 1 = \sigma(b_{i, i+1}^2) - 2 = \sigma(v) - 3$.
    A consequence is that for every $i$, if $u$, $v$ are vertices of $G_i$ then every vertex $w$ with the property that $\sigma(u) < \sigma(w) < \sigma(v)$ is also a vertex of $G_i$.
    Observe that every cut between two bridge vertices has width $1$, as does every cut between a bridge vertex and a non-bridge vertex.
    We therefore need only consider cuts between pairs of vertices in the same $G_i$.
    If $G_i$ contains a conflict vertex, then $G_i$ is either a $P_3$ or a $P_4$.
    The former case arises when a variable appears in exactly two clauses, so there is one associated conflict vertex adjacent to two free vertices.
    The latter case arises when a variable appears in exactly three clauses, so there are two associated conflict vertices.
    These are adjacent.
    Additionally, each conflict vertex is adjacent to a distinct free vertex, so we have a $P_4$.
    Both the $P_3$ and $P_4$ have cutwidth $1$, as evidenced by ordering the vertices as they appear along the path.
    Moreover, we may assume that the relevant bridge edges are incident on the appropriate endpoints of the path, and so any cut between vertices of $G_i$ has width~$1$.
    Otherwise, $G_i$ contains no conflict vertices.
    In this case, $G_i$ contains only the five spare vertices $v_j^a, v_j^b, v_j^c, v_j^d$, and $v_j^e$ associated with some clause $C_j$, and the four edges are $v_j^av_j^b, v_j^av_j^c, v_j^av_j^d$, and $v_j^bv_j^e$.
    It is simple to check that this construction has cutwidth $2$, as evidenced by the ordering $\sigma(v_j^e) < \sigma(v_j^b) < \sigma(v_j^a) < \sigma(v_j^c) < \sigma(v_j^d)$.
    Once again, we may assume that the relevant bridge edges are incident on $v_j^e$ and $v_j^d$.
    Moreover, we can assume that at least one such $G_i$ exists, since \boolSAT{} instances in which every clause has size three and every variable appears at most three times are polynomial-time solvable~\cite{tovey1984simplified}.
    Thus, $G$ has cutwidth $2$.

    It remains to show that the reduction is correct. For the first direction, assume that there exists an assignment $\phi$ of boolean values to the variables $x_1, x_2, \ldots, x_n$ which satisfies every clause. We say that the assignment $\phi$ \emph{agrees} with a variable-clause pair $(x_i, C_j)$ if $C_j$ contains the positive literal $x_i$ and $\phi(x_i) = \textsf{True}$ or if $C_j$ contains the negative literal $\neg x_i$ and $\phi(x_i) = \textsf{False}$. We color the vertices of our constructed graph as follows.
    To every free vertex associated with clause $C_j$ we assign color $c_j$.
    The free vertex was only in edges of color $c_j$, so this results in zero unsatisfied edges.
    Next, to each bridge vertex we assign the associated bridge color.
    The bridge vertices were only in edges of this color, so once again this results in zero unsatisfied edges.
    Moreover, we have now guaranteed that at least one bridge edge of every bridge color is satisfied, meaning that at most two can be dissatisfied.
    Henceforth, we will not consider the remaining bridge edges.
    Next, for each clause $C_j$ of size $2$, we assign color $c_j'$ to every spare vertex associated with $C_j$. This results in exactly one unsatisfied edge of color $c_j$, and ensures that every edge of color $c_j'$ is satisfied. Finally, for each conflict vertex $v_{j_1, j_2}^i$, if $\phi$ agrees with $(x_i, C_{j_1})$ we assign color $c_{j_1}$ to $v_{j_1, j_2}^i$, and otherwise we assign color $c_{j_2}$.
    Observe that this coloring satisfies an edge $e_j^i$ if and only if $\phi$ agrees with $(x_i, C_j)$.
    Moreover, because $\phi$ is satisfying,
    every clause $C_j$ contains at least one variable $x_i$ such that $\phi$ agrees with $(x_i, C_j)$.
    Hence, at least one edge of every color is satisfied. Because there are exactly three edges of every color, there are at most two unsatisfied edges of any color.

    For the other direction, assume that we have a coloring which leaves at most two edges of any color unsatisfied. We will create a satisfying assignment $\phi$.
    For each variable $x_i$, we set $\phi(x_i) = \textsf{True}$ if any conflict edge associated with the positive literal $x_i$ is satisfied, and $\phi(x_i) = \textsf{False}$ otherwise.
    We now show that $\phi$ is satisfying. Consider any clause $C_j$. There are exactly three edges with color $c_j$, and at least one of them is satisfied.
    We may assume that the spare edge associated with $C_j$ (if such an edge exists) is unsatisfied, since satisfying this edge would require three unsatisfied edges of color $c_j'$. Thus,
    at least one conflict edge associated with $C_j$ is satisfied. Let $x_i$ be the corresponding variable, so the satisfied conflict edge is $e_j^i$.
    If $C_j$ contains the positive literal $x_i$, then $\phi(x_i) = \textsf{True}$ so $C_j$ is satisfied by $\phi$.
    Otherwise $C_j$ contains the negative literal $\neg x_i$.
    In this case, we observe that every conflict edge associated with the positive literal $x_i$ intersects with $e_j^i$ at a conflict vertex, and none of these edges has color $c_j$ since
    $C_j$ does not contain both literals. Hence, the satisfaction of $e_j^i$ implies that every conflict edge associated with the positive literal $x_i$ is unsatisfied.
    It follows that $\phi(x_i) = \textsf{False}$, meaning $C_j$ is satisfied by $\phi$.

    To show the claim for \cfmaxECC{}, we repeat the same construction, except that we omit all spare colors, vertices, and edges. The effect is that the constructed graph is a path.
    The proof of correctness is conceptually unchanged. Given a satisfying assignment $\phi$ we color vertices in the same way as before, and given a vertex coloring which satisfies
    at least one edge of every color, it remains the case that at least one conflict edge associated with every clause must be satisfied.
    The remaining analysis is similar.
\end{proof}


%




\section{Full proof of~\texorpdfstring{\Cref{thm:pc-ECC-FPT}}{}}
\label{app:pc}


\pceccFPT*
\begin{proof}
    We give a branching algorithm which is essentially identical to that of~\Cref{thm:cfminecc-FPT}.
    For completeness, we repeat the details.
    Given an instance $(H = (V, E), t, b)$ of \pcECC{}, a \emph{conflict} is a triple $(v, e_1, e_2)$ consisting of a single vertex $v$ and a pair of distinctly colored hyperedges $e_1, e_2$ which both contain $v$.
    If $H$ contains no conflicts, then it is possible to satisfy every edge.
    Otherwise, we identify a conflict in $O(r|E|)$ time by scanning the set of hyperedges incident on each node.
    Once a conflict $(v, e_1, e_2)$ has been found, we branch on the two possible ways to resolve this conflict: deleting $e_1$ or deleting $e_2$.
    Here, deleting a hyperedge has the same effect as ``marking'' it as unsatisfied and no longer considering it for the duration of the algorithm.
    We note that it is simple to check in constant time whether a possible branch violates the constraints given by $t$ or $b$; these branches can be pruned.
    Because each branch increases the number of unsatisfied hyperedges by 1 and WLOG $t > b$, the search tree has depth at most $t$.
    Thus, the algorithm runs in time $O(2^{t}r|E|)$.
\end{proof}

\end{document}